\newtheorem{theorem}{Theorem}
\newtheorem{lemma}{Lemma}
\begin{document}

\title{Mitigating the barren plateau problem in linear optics}
\author{Matthew D. Horner}
\affiliation{Aegiq Ltd., Cooper Buildings, Arundel Street, Sheffield, S1 2NS, United Kingdom}

\begin{abstract}
We prove the existence of barren plateaus in variational quantum algorithms using linear optics with either bosonic or fermionic particles and demonstrate that fermionic linear optics is less susceptible to the barren plateau problem. We use this to motivate a new photonic device, the dual-valued phase shifter, that is a non-linear phase shifter with two distinct eigenvalues. This component results in variational cost landscapes with fewer local minima regardless of the problem, ansatz or circuit layout. We propose three ways to achieve this by using either non-linear optics, measurement-induced non-linearities, or entangled resource states simulating fermionic statistics. The latter two require linear optics only, allowing for implementation with widely-available technology today. We show this outperforms the best-known linear optical variational algorithm for all tests we conducted.
\end{abstract}
\maketitle
\tableofcontents

\section{Introduction}
Variational quantum algorithms~\cite{cerezo2021variational,abbas2024challenges} have emerged as a common application of quantum computing as they do not require fault tolerant systems and are widely applicable to many real-world problems that are NP-hard. Example problems include the travelling salesman problem~\cite{qubo_list,tsp,CVRP,salehi2022unconstrained,RevModPhys.80.1061}, SAT solvers~\cite{qubo_list,sat,3sat,3sat_2,BIAN2020104609,glover2019tutorialformulatingusingqubo}, job shop scheduling~\cite{venturelli2016quantumannealingimplementationjobshop,10.1007/978-3-031-08011-1_10,DENKENA2021100}, maximum cut~\cite{glover2019tutorialformulatingusingqubo}, and graph colouring problems~\cite{glover2019tutorialformulatingusingqubo}, among others~\cite{qubo_list,glover2019tutorialformulatingusingqubo}. At the core of these algorithms lies an optimisation, where the task is to find the minimum of a cost function encoding the problem, typically done using gradient descent. 

A barrier to the success of these algorithms is the barren plateau problem~\cite{mcclean2018barren,larocca2024review,TILLY20221} which states that the gradient of the cost function becomes exponentially small as the problem size scales up, reducing the chance of finding the minimum. This reduces the practical utility of these algorithms for solving large real-world problems. Various tricks to mitigate the barren plateau problem exist including a careful choice of ansatz~\cite{Holmes_2022,grimsley2023adaptive,Wada_2024,PRXQuantum.2.020310}, reducing the expressibility of the circuit and shallow circuits~\cite{Holmes_2022,cerezo2021cost,PhysRevLett.132.150603,leone2022practical}, initial parameter optimisation~\cite{sauvage2021flip,9951195}, modifying the cost function~\cite{Wu2021MitigatingNG,PhysRevResearch.3.033090}, identifying symmetries in the problem~\cite{Lyu2023symmetryenhanced,PhysRevLett.125.260505,PhysRevA.79.042335,PhysRevA.101.052340,PhysRevA.98.022322}, or by off-loading work to the classical optimiser~\cite{lerch2024efficientquantumenhancedclassicalsimulation,10636813}. However, these heavily rely upon classical pre-processing instead of anything inherently quantum, further increasing the reliance on classical methods and relegating the role of the quantum computer.

Here we focus on variational quantum algorithms implemented with discrete variable linear optics and demonstrate how to exploit quantum effects to improve their performance without reliance on any additional classical pre-processing in an application-agnostic manner. These algorithms work by sampling bit strings from the output of a linear optical interferometer and optimising over the parametrised phase shifters~\cite{bradler2021certain}. We investigate both numerically and analytically the regimes for which we expect to see barren plateaus in linear optics and find that it depends not only on the size of the system but the statistics.

We then build upon this algorithm by replacing each parametrised phase shifter with a non-linear phase shifter that has two distinct eigenvalues, similar to Pauli-generated unitaries in qubit-based systems or parity operators, resulting in a simpler cost landscape with fewer local minima and barren plateaus irrespective of the problem we are solving. Moreover, this allows one to circumvent gradient descent entirely by using the gradient-free Rotosolve algorithm~\cite{rotosolve1,rotosolve2,rotosolve3,rotosolve4,rotosolve5} to gain a considerable improvement over the best known gradient-based algorithm for discrete photonics.

We provide three ways to realise this non-linear phase shifter. First by direct implementation using non-linear optical components, second by measurement-induced non-linearities with linear optics; and third by performing fermion sampling with an entangled resource state. The second and third ways require single photon sources, linear optics and single photon detectors, which is widely available technology today. As these three methods result in cost landscapes of the same form, for numerical convenience we focus on the performance of the fermionic resource state, thereby comparing the performance of fermionic linear optics to bosonic linear optics for solving quantum variational problems.

The paper is structured as follows. First we review the variational quantum algorithm for solving QUBO problems with linear optics. We then we present a theorem for the barren plateau scaling of linear optics. Then we study the difference between the cost landscapes of fermionic and bosonic linear optics, which then motivates us to introduce a non-linear phase shifter, the dual-valued phase shifter (DVPS), and study its resulting cost function. Then we present three different realisations of the DVPS and compare them. Then the remainder of the paper presents numerical results comparing the performance of fermion and bosonic linear optics for solving random QUBO problems, and testing the application of the gradient-free Rotosolve algorithm. We conclude the paper with open problems that arose during this work.

\section{The barren plateau problem in linear optics \label{sec:simplifying_cost_landscape}}
\subsection{Variational quantum algorithms using linear optics \label{sec:vqe}}

\begin{figure}
\begin{tikzpicture}[x=0.75pt,y=0.75pt,yscale=-1,xscale=1]

\draw [line width=0.75]    (74,70) -- (94,70) ;
\draw [line width=0.75]    (74,80) -- (94,80) ;
\draw [line width=0.75]    (74,90) -- (94,90) ;
\draw [line width=0.75]    (74,100) -- (94,100) ;
\draw [line width=0.75]    (74,110) -- (94,110) ;
\draw [line width=0.75]    (114,100) -- (124,100) ;
\draw [line width=0.75]    (114,90) -- (124,90) ;
\draw [line width=0.75]    (114,70) -- (124,70) ;
\draw [line width=0.75]    (74,60) -- (94,60) ;
\draw [line width=0.75]    (94,60) -- (104,70) ;
\draw [line width=0.75]    (94,70) -- (104,60) ;
\draw [line width=0.75]    (94,80) -- (104,90) ;
\draw [line width=0.75]    (94,90) -- (104,80) ;
\draw [line width=0.75]    (94,100) -- (104,110) ;
\draw [line width=0.75]    (94,110) -- (104,100) ;
\draw [line width=0.75]    (124,70) -- (134,80) ;
\draw [line width=0.75]    (124,80) -- (134,70) ;
\draw [line width=0.75]    (114,80) -- (124,80) ;
\draw [line width=0.75]    (124,90) -- (134,100) ;
\draw [line width=0.75]    (124,100) -- (134,90) ;
\draw [line width=0.75]    (114,60) -- (154,60) ;
\draw [line width=0.75]    (134,70) -- (154,70) ;
\draw [line width=0.75]    (134,80) -- (154,80) ;
\draw [line width=0.75]    (134,90) -- (154,90) ;
\draw [line width=0.75]    (134,100) -- (154,100) ;
\draw [line width=0.75]    (114,110) -- (154,110) ;
\draw  [fill={rgb, 255:red, 0; green, 0; blue, 0 }  ,fill opacity=1, line width=1.5]  (154,68) -- (156,68) .. controls (157.1,68) and (158,68.9) .. (158,70) .. controls (158,71.1) and (157.1,72) .. (156,72) -- (154,72) -- cycle ;
\draw  [fill={rgb, 255:red, 0; green, 0; blue, 0 }  ,fill opacity=1, line width=1.5]  (154,88) -- (156,88) .. controls (157.1,88) and (158,88.9) .. (158,90) .. controls (158,91.1) and (157.1,92) .. (156,92) -- (154,92) -- cycle ;
\draw  [fill={rgb, 255:red, 0; green, 0; blue, 0 }  ,fill opacity=1, line width=1.5]  (154,98) -- (156,98) .. controls (157.1,98) and (158,98.9) .. (158,100) .. controls (158,101.1) and (157.1,102) .. (156,102) -- (154,102) -- cycle ;
\draw  [fill={rgb, 255:red, 0; green, 0; blue, 0 }  ,fill opacity=1, line width=1.5]  (154,108) -- (156,108) .. controls (157.1,108) and (158,108.9) .. (158,110) .. controls (158,111.1) and (157.1,112) .. (156,112) -- (154,112) -- cycle ;
\draw   (258,66) -- (330,66) -- (330,102) -- (258,102) -- cycle ;
\draw    (114,28.4) -- (294,28.4) ;
\draw    (294,28) -- (294,66) ;
\draw  [fill={rgb, 255:red, 0; green, 0; blue, 0 }  ,fill opacity=1 ] (72,60) .. controls (72,58.9) and (72.9,58) .. (74,58) .. controls (75.1,58) and (76,58.9) .. (76,60) .. controls (76,61.1) and (75.1,62) .. (74,62) .. controls (72.9,62) and (72,61.1) .. (72,60) -- cycle ;
\draw  [fill={rgb, 255:red, 0; green, 0; blue, 0 }  ,fill opacity=1 ] (72,70) .. controls (72,68.9) and (72.9,68) .. (74,68) .. controls (75.1,68) and (76,68.9) .. (76,70) .. controls (76,71.1) and (75.1,72) .. (74,72) .. controls (72.9,72) and (72,71.1) .. (72,70) -- cycle ;
\draw  [fill={rgb, 255:red, 245; green, 166; blue, 35 }  ,fill opacity=1 ] (156.34,74.58) -- (157.78,77.38) -- (161.02,77.83) -- (158.68,80) -- (159.23,83.07) -- (156.34,81.62) -- (153.45,83.07) -- (154,80) -- (151.66,77.83) -- (154.89,77.38) -- cycle ;
\draw  [fill={rgb, 255:red, 245; green, 166; blue, 35 }  ,fill opacity=1 ] (156.34,54.58) -- (157.78,57.38) -- (161.02,57.83) -- (158.68,60) -- (159.23,63.07) -- (156.34,61.62) -- (153.45,63.07) -- (154,60) -- (151.66,57.83) -- (154.89,57.38) -- cycle ;
\draw [line width=0.75]  [dash pattern={on 1pt off 1pt}]  (104,100) -- (114,100) ;
\draw [line width=0.75]  [dash pattern={on 1pt off 1pt}]  (104,110) -- (114,110) ;
\draw [line width=0.75]  [dash pattern={on 1pt off 1pt}]  (104,90) -- (114,90) ;
\draw [line width=0.75]  [dash pattern={on 1pt off 1pt}]  (104,80) -- (114,80) ;
\draw [line width=0.75]  [dash pattern={on 1pt off 1pt}]  (104,70) -- (114,70) ;
\draw [line width=0.75]  [dash pattern={on 1pt off 1pt}]  (104,60) -- (114,60) ;
\draw   [line width=0.08] (114,28.4) -- (114,47.4) ;
\draw [shift={(114,50.4)}, rotate = 270] [fill={rgb, 255:red, 0; green, 0; blue, 0 }  ][line width=0.08]  [draw opacity=0] (10.72,-5.15) -- (0,0) -- (10.72,5.15) -- (7.12,0) -- cycle    ;
\draw    (175,84.75) -- (189,84.75) ;
\draw [shift={(192,84.75)}, rotate = 180] [fill={rgb, 255:red, 0; green, 0; blue, 0 }  ][line width=0.08]  [draw opacity=0] (10.72,-5.15) -- (0,0) -- (10.72,5.15) -- (7.12,0) -- cycle    ;
\draw  (228.22,84.5) -- (250.84,84.5) ;
\draw [shift={(253.84,84.75)}, rotate = 179.86] [fill={rgb, 255:red, 0; green, 0; blue, 0 }  ][line width=0.08]  [draw opacity=0] (10.72,-5.15) -- (0,0) -- (10.72,5.15) -- (7.12,0) -- cycle    ;
\draw [line width=0.75]    (56,158) -- (76,158) ;
\draw [line width=0.75]    (56,182) -- (76,182) ;
\draw [line width=0.75]    (76,158) -- (100,182) ;
\draw [line width=0.75]    (76,182) -- (100,158) ;
\draw [line width=0.75]    (100,158) -- (120,158) ;
\draw [line width=0.75]    (100,182) -- (120,182) ;
\draw [line width=0.75]    (188,158) -- (208,158) ;
\draw [line width=0.75]    (208,158) -- (232,182) ;
\draw [line width=0.75]    (208,182) -- (232,158) ;
\draw [line width=0.75]    (232,158) -- (252,158) ;
\draw [line width=0.75]    (232,182) -- (288,182) ;
\draw [line width=0.75]    (268,158) -- (288,158) ;
\draw [line width=0.75]    (288,158) -- (312,182) ;
\draw [line width=0.75]    (288,182) -- (312,158) ;
\draw [line width=0.75]    (312,158) -- (332,158) ;
\draw [line width=0.75]    (312,182) -- (332,182) ;
\draw   (252,150) -- (268,150) -- (268,166) -- (252,166) -- cycle ;
\draw   (172,150) -- (188,150) -- (188,166) -- (172,166) -- cycle ;
\draw [line width=0.75]    (152,182) -- (208,182) ;
\draw [line width=0.75]    (152,158) -- (172,158) ;
\draw  [fill={rgb, 255:red, 0; green, 0; blue, 0 }  ,fill opacity=1 ] (208,168) -- (232,168) -- (232,172) -- (208,172) -- cycle ;
\draw  [fill={rgb, 255:red, 0; green, 0; blue, 0 }  ,fill opacity=1 ] (288,168) -- (312,168) -- (312,172) -- (288,172) -- cycle ;

\draw (96,113.4) node [anchor=north west][inner sep=0.75pt]    {$U(\boldsymbol{\theta } )$};
\draw (26,54) node [anchor=north west][inner sep=0.75pt]    {$|\psi _{\text{in}} \rangle \begin{cases}
 & \\
 & \\
 & 
\end{cases}$};
\draw (194,46.4) node [anchor=north west][inner sep=0.75pt]    {$\begin{pmatrix}
1\\
0\\
1\\
\vdots 
\end{pmatrix}$};
\draw (279,76.4) node [anchor=north west][inner sep=0.75pt]    {$E(\boldsymbol{\theta } )$};
\draw (171,12) node [anchor=north west][inner sep=0.75pt]   [align=left] {Update $\displaystyle \boldsymbol{\theta }$};
\draw (268,106) node [anchor=north west][inner sep=0.75pt]   [align=left] {Classical \\ computer};
\draw (26,14) node [anchor=north west][inner sep=0.75pt]   [align=left] {(a)};
\draw (26,146) node [anchor=north west][inner sep=0.75pt]   [align=left] {(b)};
\draw (125.6,165.8) node [anchor=north west][inner sep=0.75pt]    {$=$};
\draw (174.5,151) node [anchor=north west][inner sep=0.75pt]    {$\phi $};
\draw (252,152) node [anchor=north west][inner sep=0.75pt]   {$2\theta $};
\draw (204.4,184.8) node [anchor=north west][inner sep=0.75pt]   [align=left] {50:50};
\draw (284.4,184.8) node [anchor=north west][inner sep=0.75pt]   [align=left] {50:50};
\draw (179,55.5) node [anchor=north east] [inner sep=0.75pt]    {$\begin{drcases}
 & \\
 & \\
 & 
\end{drcases}$};

\end{tikzpicture}
\caption{(a) The bosonic linear optics variational quantum algorithm of Ref.~\cite{bradler2021certain} consists of a linear optical interferometer encoding a parametrised unitary $U(\boldsymbol{\theta})$ and photo-detectors. Here we show an example with a photon inserted into the top two modes on the left, represented by the solid circles, where the stars represent detector clicks which maps to a bit string. The classical computer calculates the cost function $E(\boldsymbol{\theta})$ from multiple shots of this and then updates the parameters in order to optimise this.(b) Each cross-over point corresponds to a Mach-Zehnder interferometer consisting of two phase shifters $\theta,\phi \in [0,2\pi)$ and two fixed 50:50 beamsplitters. \label{fig:variational_quantum_solver}}
\end{figure}
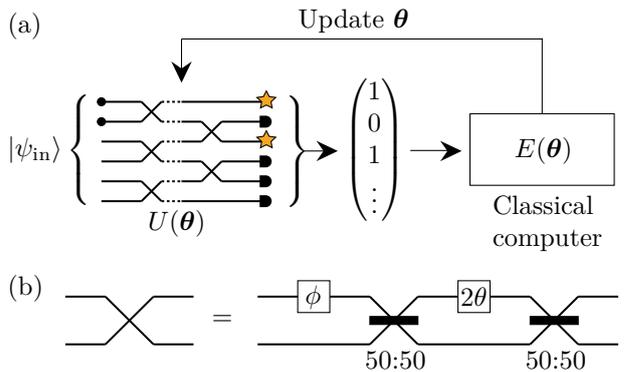

An $N$-mode linear optical interferometer is a network of $N$ waveguides, parametrised phase shifters, 50:50 beamsplitters and single-particle detection, as shown in Fig.~\ref{fig:variational_quantum_solver}. The particles inserted into these devices are indistinguishable and can be either fermionic or bosonic, so the only degree of freedom available is the waveguide degree of freedom which we refer to as the modes. 


The states of this system are described by a Fock space $\mathcal{F}$ spanned by the number basis states $|\mathbf{n}\rangle = |n_1,n_2,\ldots,n_N\rangle$, where $n_i$ is the number of particles in the $i$th mode. For bosons $n_i \in \mathbb{N}$, whereas for fermions $n_i \in \{ 0 , 1\}$ due to the Pauli exlusion principle. Acting on the Fock space is a set of ladder operators, $a_i^\dagger$ and $a_i$, that create and annihilate particles in the $i$th mode and obey the commutation relations $[a_i, a_j^\dagger]_\pm  = \delta_{ij}$ and $[a_i, a_j]_\pm = [a_i^\dagger, a_j^\dagger]_\pm  = 0$, where $\pm$ corresponds to the anti-commutator for fermions or the commutator for bosons, respectively. 

A linear optical interferometer applies a non-interacting and particle conserving unitary transformation $U: \mathcal{F} \to \mathcal{F}$ whose action in the Heisenberg picture is a linear transformation on the space of ladder operators as
\begin{equation}
U a_i^\dagger U^\dagger= \sum_{j = 1}^N u_{ji} a_j^\dagger, \label{eq:linear_unitary}
\end{equation}
where $u_{ji}$ are the components of a matrix $u \in \mathrm{U}(N)$. The unitaries $u$ which are programmed into the interefometer are parametrised by a vector of phases $\boldsymbol{\theta}=(\theta_1,\theta_2,\ldots)$, where $\theta_a \in [0,2\pi)$ is the phase shift of the $a$th phase shifter in the interferometer. There exists universal interferometers that can encode any $u \in \mathrm{U}(N)$~\cite{PhysRevLett.73.58,Clements:16,Fldzhyan:20}, however this does not mean that the interferometer is a universal quantum computer as only unitaries $U$ which transform ladder operators linearly as in Eq.~\eqref{eq:linear_unitary} can be encoded on the Fock space. In other words, this can only encode unitaries generated by \textit{quadratic} particle-conserving Hamiltonians. For universal quantum computing with linear optics then the KLM protocol is required~\cite{KLM,Kok_Lovett_2010}. 

Inserting a number state $|\mathbf{n}\rangle$ into a linear optical interferometer and sampling from the output is called boson sampling or fermion sampling depending upon the statistics of the particles~\cite{arkhipov2012bosonic,scheel2004permanents,valiant2001quantum,PhysRevA.65.032325,knill2001fermioniclinearopticsmatchgates,PRXQuantum.3.020328}. The transition amplitudes between $n$-particle input and output states of one of these interferometers is given by
\begin{equation}
\langle \mathbf{m} | U |\mathbf{n} \rangle = \frac{1}{\sqrt{\mathbf{m}! \mathbf{n}!}} \begin{cases}
\operatorname{per} u[\mathbf{m}|\mathbf{n}] & \text{bosons} \\
\det u[\mathbf{m}|\mathbf{n}] & \text{fermions}
\end{cases}, \label{eq:transition_amplitude}
\end{equation}
where per is the permanent, det is the determinant, $\mathbf{n}! = n_1! n_2! \ldots n_N!$ and $u[\mathbf{m}|\mathbf{n}]$ is an $n$-dimensional matrix constructed from the elements of $u$ from Eq.~\eqref{eq:linear_unitary} by repeating the $i$th row index $m_i$ times and the $j$th column index $n_j$ times. Calculating the permanent is a \#P problem which gives boson sampling its quantum advantage~\cite{doi:10.1126/science.abe8770,10.1145/1993636.1993682,PhysRevLett.123.250503,VALIANT1979189,madsen2022quantum}, whilst calculating the determinant is a P problem meaning fermion sampling is efficiently classically simulable~\cite{valiant2001quantum,PhysRevA.65.032325,knill2001fermioniclinearopticsmatchgates,PRXQuantum.3.020328}. See Appendix~\ref{app:boson_fermion_sampling} for an overview.

Many variational quantum algorithms encode a problem we wish to solve into an observable $H$ such that the ground state corresponds to the solution to the problem. To find this ground state, the quantum computer is prepared in the state $\rho $ and is evolved by the parametrised unitary $U(\boldsymbol{\theta})$. The expectation value, or cost function, of the observable at the output is
\begin{equation}
E(\boldsymbol{\theta}) = \mathrm{Tr} \left[ U(\boldsymbol{\theta}) \rho U^\dagger(\boldsymbol{\theta}) H \right], \label{eq:general_cost_function}
\end{equation}
which is calculated by a classical computer using measurement data from multiple shots. The classical computer then attempts to find the minimum of the cost function by optimising over the parameters $\boldsymbol{\theta}$. With the optimal parameters, $\boldsymbol{\theta}_0$, the output state $\rho(\boldsymbol{\theta}_0) = U(\boldsymbol{\theta}_0) \rho U^\dagger(\boldsymbol{\theta}_0)$ is returned as the solution to the problem. In Fig.~\ref{fig:variational_quantum_solver} we show how a variational quantum algorithm can be performed with a linear optical interferometer and particle detectors, where the variational parameters are the phases of the phase shifters.

One application of this is to solve quadratic unconstrained binary optimisation (QUBO) problems. Given an $N$-bit QUBO problem, the goal is to find the $N$-bit string $\mathbf{x} = (x_1,x_2,\ldots,x_N)$ that minimises the quadratic cost function
\begin{equation}
C(\mathbf{x}) = \sum_{i,j = 1}^N Q_{ij} x_i  x_j, \label{eq:QUBO_cost}
\end{equation}
where $Q$ is an $N \times N$ real symmetric matrix~\cite{qubo_list}. This can be encoded into a linear optical variational quantum algorithm by introducing the Hamiltonian
\begin{equation}
H = \sum_{i,j = 1}^N Q_{ij} \Theta(\hat{n}_i)  \Theta(\hat{n}_j), \label{eq:qubit_hamiltonian}
\end{equation}
where $\hat{n}_i = a^\dagger_i a_i$ is the number operator for the $i$th mode and $\Theta$ is the Heaviside step function (using the convention that $\Theta(0) = 0$) which acts on the number operator as $\Theta(\hat{n}_i) |\mathbf{n}\rangle = \Theta(n_i) |\mathbf{n}\rangle$. This step function models the effect of a threshold photo-detector that can count only whether there was at least one particle or not and naturally maps the outputs to bit strings. As multiple bosons can occupy the same mode, multiple outputs from a boson sampler will be indistinguishable after mapping to bit strings which introduces redundancy, however fermions do not suffer from this as all fermionic states are fixed Hamming weight bit strings due to the Pauli exclusion principle, i.e., $\Theta(\hat{n}_i) = \hat{n}_i$. 

Inserting the observable of Eq.~\eqref{eq:qubit_hamiltonian} into the general expression for the cost function of Eq.~\eqref{eq:general_cost_function} yields 
\begin{equation}
E(\boldsymbol{\theta})  =  \sum_\mathbf{x} p(\mathbf{x}|\boldsymbol{\theta}) C(\mathbf{x}), \label{eq:quantum_cost}
\end{equation}
which is the expectation value of the classical cost function from Eq.~\eqref{eq:QUBO_cost}, where $p(\mathbf{x}|\boldsymbol{\theta})$ is the probability for the output state to yield the bit string $\mathbf{x} = \Theta(\mathbf{n})$ given the parameters $\boldsymbol{\theta}$ of the interferometer. This quantity is then minimised.

This is the best known method for solving QUBO problems with a boson sampler which was first presented in Ref.~\cite{bradler2021certain} and can be easily used for any binary cost function $C(\mathbf{x})$ beyond QUBO. Building upon this algorithm is the focus of this study.

\subsection{Barren plateaus \label{sec:barren_plateau}}
A huge barrier to the success of gradient-based optimisers is the barren plateau problem. For qubit-based systems, this means the variance of the cost function decays exponentially as
\begin{equation}
\mathrm{Var}_{\boldsymbol{\theta}}[E(\boldsymbol{\theta})] = O(1/b^M)
\end{equation}
where $b > 1$ and $M$ is the number of qubits~\cite{ragone2024lie,larocca2024review}. An alternative definition in terms of the variance of the gradient is also used~\cite{mcclean2018barren, fontanacharacterizing} but we use the former. In this section we derive the scaling behaviour of the variance of the cost function for linear optics, where now we expect a dependence on the number of modes $N$, the number of particles $n$, and the statistics of the particles.

From the definition of the linear Fock space unitaries $U$ in Eq.~\eqref{eq:linear_unitary}, we see that they form a faithful representation of $\mathrm{U}(N)$ on the Fock space: by interpreting the Fock space unitary as a function of $u_{ij}$ as $U = U(u)$, we find that $U(u_1)U(u_2) = U(u_1 u_2)$. As linear unitaries conserve particle number, each $n$-particle subspace is an invariant subspace under the action of $U$ and these subspaces additionally form irreducible representations (irreps) (see Appendix~\ref{app:barren_plateaus} for proof). The Fock space decomposes as
\begin{equation}
\mathcal{F} = \bigoplus_{n} \mathcal{H}_n,
\end{equation}
where $\mathcal{H}_n$ is the $n$-particle subspace forming the irrep. If we let $V = \mathbb{C}^N$ be the vector space of a single particle in an $N$-mode interferometer, then these subspaces depend upon the statistics of the particles and are given by
\begin{equation}
\mathcal{H}_n = \begin{cases}
\mathrm{Sym}^n(V) & \text{bosons} \\
\wedge^n(V) & \text{fermions}
\end{cases}, \label{eq:irrep}
\end{equation}
where ($\wedge^n$) $\mathrm{Sym}^n$ is the (anti-)symmetric $n$-fold tensor product space. Therefore, a linear optical interferometer with a fixed number of particles hosts an irreducible representation of $\mathrm{U}(N)$. 

To calculate the variance we typically assume that circuit unitaries form a $2$-design, meaning that moments of the cost function with respect to the distribution of circuit parameters is equal to moments with respect to the uniform Haar measure distribution on the Lie group of the circuit~\cite{mcclean2018barren,ragone2024lie,larocca2024review}. For linear optics, we assume that we have a $2$-design with respect to \textit{linear} unitaries of the form of Eq.~\eqref{eq:linear_unitary} only, and not all possible Fock space unitaries. This is a realistic assumption to make as universal interferometers are widely available and can encode any $U = U(u)$, so we can sample $u \in \mathrm{U}(N)$ with respect to the Haar measure. The result is the following theorem which is proven in appendix~\ref{app:barren_plateaus}:
\begin{theorem}
\label{thm:barren_plateaus}
Consider an $N$ mode linear optical interferometer containing $n$ particles of either bosonic or fermionic statistics forming the $\mathrm{U}(N)$ irrep $\mathcal{H}_n$, c.f. Eq.~\eqref{eq:irrep}, whose tensor product space decomposes into irreps as
$ \mathcal{H}_n \otimes \mathcal{H}_n = \bigoplus_\alpha \mathcal{H}_\alpha $
indexed by $\alpha$, where each irrep has unit multiplicity with orthogonal projectors $P^\alpha$. Then given an input state $\rho$ and an observable $H$, the variance of the cost function with respect to the Haar measure is given by
\begin{equation}
\mathrm{Var}_{\boldsymbol{\theta}}[ E(\boldsymbol{\theta})] 
 = \sum_\alpha \frac{1}{d_\alpha} \mathrm{Tr}(P^\alpha \rho^{\otimes 2}) \mathrm{Tr}(P^\alpha H^{\otimes 2}) - \frac{\mathrm{Tr}(H)^2}{d_n^2}, \label{eq:variance_thm}
\end{equation}
where $d_n = \mathrm{dim}(\mathcal{H}_n)$ and $d_\alpha = \mathrm{dim}(\mathcal{H}_\alpha)$.
\end{theorem}

An alternative expression for the variance can be calculated by using the adjoint representation of operators instead, as done in Ref.~\cite{mhiri2026bosonsamplingdiluteregime}. 

We now study this result for an $N$-mode interferometer with $n$ particles of either bosonic or fermionic statistics by considering three scenarios: we vary $n$ whilst keeping $N$ fixed; we vary $N$ whilst keeping $n$ fixed; or we vary $N$ and $n$ simultaneously (as this is typically what we mean when we scale up linear optics) by taking $N = 2n$. In practice, the input states to a linear optical interferometer are pure number states prepared by a single photon source with a demultiplexer to route the photons into the same time bin, so we take $\rho = |\mathbf{n}\rangle \langle \mathbf{n}|$ where $|\mathbf{n}\rangle$ is a number state.  The measurement observables $H$ accessible to an experiment are projective measurements onto number states, which in the following examples we take to be the same as the input state.

With this information we can now evaluate the variances. Evaluating the traces $\mathrm{Tr}(P^\alpha \rho^{\otimes 2})$ and $\mathrm{Tr}(P^\alpha H^{\otimes 2})$ of Eq.~\eqref{eq:variance_thm} is challenging in practice, however as a first step we can construct a loose upper bound by noting that the operators $\rho$ and $H$ are both projectors so the traces are  upper bounded by unity, so we get
\begin{equation}
\mathrm{Var}_{\boldsymbol{\theta}} [E(\boldsymbol{\theta})] \leq \sum_\alpha \frac{1}{d_\alpha} - \frac{1}{d_n^2}. \label{eq:bosonic_upper_bound}
\end{equation}
which provides enough information to establish the existence of barren plateaus alone.

Under certain circumstances we can improve upon this. One experimentally-relevant input state is the $n$-particle collision-free state consisting of one particle in each of the first $n$ modes. If the particles are bosonic, the variance is given by
\begin{equation}
\begin{aligned}
& \mathrm{Var}_{\boldsymbol{\theta}}[E(\boldsymbol{\theta})] = \\
&  \sum_{\text{even $\alpha$}} \frac{4^{n - \alpha}}{d_\alpha}   \left(\frac{2n - 2\alpha + 1}{2n - \alpha + 1}  \frac{ { n \choose \alpha/2}}{ { 2n - \alpha \choose n - \alpha/2}}\right)^2 - \frac{1}{d_n^2},
\end{aligned} \label{eq:var_1}
\end{equation}
whereas if the particles are fermionic the expression is much simpler and is given by
\begin{equation}
\mathrm{Var}_{\boldsymbol{\theta}}[E(\boldsymbol{\theta})] = \frac{1}{d_n^2} \frac{n(N-n)}{N+1}. \label{eq:var_2}
\end{equation}
Another example of interest is the $n$-boson input state consisting of $n$ particles on the first mode. In this case, the variance is given by
\begin{equation}
\mathrm{Var}_{\boldsymbol{\theta}} [E(\boldsymbol{\theta})] =
\frac{1}{d_{2n}} - \frac{1}{d_n^2}. \label{eq:var_3}
\end{equation}
There is no fermionic analogue of this due to the Pauli exclusion principle. The proof of these three results are found in Appendix~\ref{app:example_derivation}.

In Fig.~\ref{fig:barren_plateau_scaling} we test these cases numerically using the Lightworks python package~\cite{lightworks} using exact wavefunction simulation and see a good agreement with the analytics. If we scale $N$ only, then the variance decays polynomially regardless of the statistics implying there is no barren plateau here, however if $n$ scales then the decay is faster. If we scale both $N$ and $n$ with the relationship $N = 2n$ then the decay is exponential for both statistics implying the existence of a barren plateau. Despite this, bosonic statistics decays orders of magnitude faster than fermionic statistics, especially for the collision-free input state which is the input state in real experiments. This implies that fermionic linear optics is less susceptible to the barren plateau problem for experimentally-relevant cases. 
\begin{figure}[h]
\begin{center}
\includegraphics[width = \columnwidth]{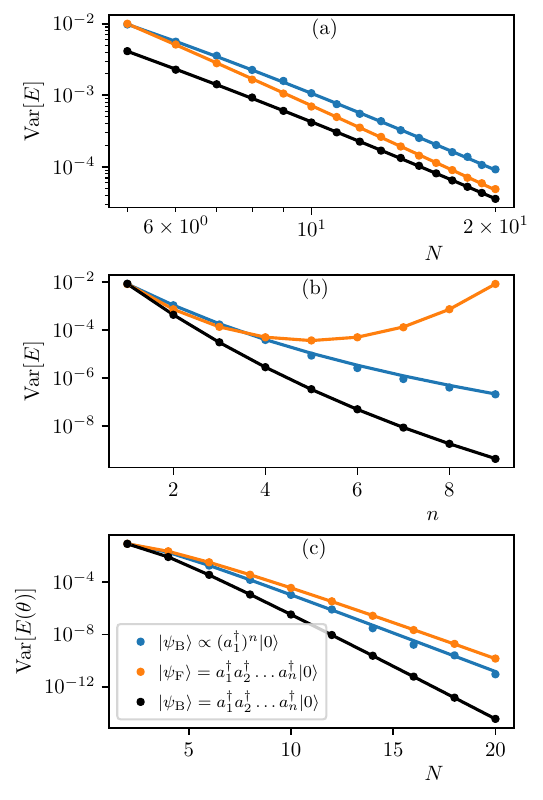}
\caption{The variance of the cost function for an $N$-mode interferometer containing $n$ particles of either bosonic or fermionic statistics, denoted by $|\psi_\text{B}\rangle$ or $|\psi_\text{F}\rangle$ respectively, where linear unitaries are sampled with respect to the Haar measure for $6 \times 10^4$ repeats. In each subplot, markers correspond to numeric data, whilst solid lines correspond to analytic formulae of Eqs.~\eqref{eq:var_1}-\eqref{eq:var_3}. (a) We fix $n = 2$ and vary $N$. We see a polynomial decay in $N$ as this is a log-log plot, hence scaling $N$ alone does not result in a barren plateau. (b) We fix $N = 10$ as we vary $n$. We see a decay slower than, but approaching, exponential decay. In particular, when $n > N/2$ the fermionic variance increases as the Hilbert space dimension begins to decrease beyond this point. (c) We let $N = 2n$ and vary $N$. We see exponential decay, implying that we have a barren plateau here. Despite this we see that the collision-free input state for bosons, which is the most relevant to a real experiment, decays orders of magnitudes faster than the same state with fermionic statistics.}
\label{fig:barren_plateau_scaling}
\end{center}
\end{figure} 

\subsection{Bosonic cost landscapes}

The previous section has demonstrated that if our particles have fermionic statistics, they are less susceptible to the barren plateau problem. In order to further investigate the difference between bosonic and fermionic linear optics, let us now study the functional form of the cost function of Eq.~\eqref{eq:general_cost_function} for an arbitrary observable $H$. If we insert $n$ bosons into an $N$-mode interferometer and vary a single phase shifter, say the $\mu$th one, whilst keeping the rest fixed, then the cost function is given by the real trigonometric polynomial
\begin{equation}
f(x) := E(\theta_\mu = x) = \sum_{k = -n}^n c_k e^{ikx}, \label{eq:bosonic_cost_landscape}
\end{equation}
where $c_k$ are coefficients that depend upon details of the rest of the interferometer and $H$. See Appendix \ref{app:cost_landscape} or Ref.~\cite{gan2022fock} for the proof. We refer to this as the \textit{bosonic cost landscape}.

Using trigonometric interpolation, the gradient of an unknown cost function with respect to a given parameter can be evaluated exactly given a set of $2n$ samples of the cost function as
\begin{equation}
f'(x)   = \sum_{k = 1}^{2n} f(x + x_k) \frac{(-1)^{k + 1}}{4n \sin^2 ( x_k/2)} , \label{eq:parameter_shift_rule}
\end{equation}
where $x_k = \frac{(2k-1)\pi}{2n}$ are the equally-spaced sample parameters~\cite{rotosolve3,facelli2024exactgradientslinearoptics,atkinson1991introduction}. This is a generalised parameter shift rule for cost functions with $n$ harmonics that can be used for gradient descent. As $n$ is the number of particles in the interferometer, this scales poorly in general which will impact the speed of the optimiser. Moreover, due to how expressive Fourier series is, with larger $n$ the cost function of Eq.~\eqref{eq:bosonic_cost_landscape} can result in extremely complicated functions, including functions with approximate discontinuities and barren plateaus, and it contains up to $n$ local minima per parameter. All of these issues will impede a gradient-based optimiser and in this paper we ask how one can mitigate them.

To obtain a cost landscape of the form of Eq.~\eqref{eq:bosonic_cost_landscape} we require the parametrised unitaries to be phase shifters. In other variational algorithms, such as QAOA~\cite{farhi2014quantumapproximateoptimizationalgorithm} or the variational quantum eigensolver~\cite{peruzzo2014variational}, the parametrised unitaries may not be phase shifters in general. However, if the parametrised unitaries are generated by quadratric Hamiltonians then one can decompose them exactly into a product of parametrised phase shifters and fixed 50:50 beamsplitters using known algorithms~\cite{PhysRevLett.73.58,Clements:16,Fldzhyan:20} which we can then vary independently.

\subsection{Fermionic cost landscapes and Rotosolve \label{sec:fermionic_cost_landscapes}}

The large number of harmonics in the bosonic cost landscape of Eq.~\eqref{eq:bosonic_cost_landscape} can be traced back to the unitary implementing the parametrised phase shifter, given by $U_\text{PS}(x) = \exp(i \hat{n} x)$, where $\hat{n}$ is the number operator for the mode it acts upon and $x \in [0,2\pi)$ is the phase.  Given an $n$-particle state $|n\rangle$, this acts as
\begin{equation}
U_\text{PS}(x) |n\rangle  =  e^{i  n x }|n\rangle, \label{eq:standard_phase_shifters}
\end{equation}
where $n \in \mathbb{N} $ are the eigenvalues of $\hat{n}$. These complex phases give rise to the harmonics of the cost function of Eq.~\eqref{eq:bosonic_cost_landscape} with the set frequencies equal to the set of differences of eigenvalues of $\hat{n}$, see appendix \ref{app:cost_landscape}. 

Suppose that we performed the same experiment with fermions instead, then the number operator has only two eigenvalues of $ n \in \{0, 1\}$ due to the Pauli exclusion principle and the cost landscape of Eq.~\eqref{eq:bosonic_cost_landscape} reduces to the simple form
\begin{equation}
f(x) = A \sin ( x - \phi) + B, \label{eq:fermionic_cost}
\end{equation}
where $A, B, \phi$ are constants depending upon details of the rest of the interferometer and the observable, see Appendix~\ref{app:cost_landscape}. This holds regardless of the observable $H$, the input state, the number of particles or the size and layout of the interferometer. We refer to this as a \textit{fermionic cost landscape}. See Fig.~\ref{fig:rotosolve_minimum} for an example produced using an exact wavefunction simulation.

We can immediately apply the parameter shift rule of Eq.~\eqref{eq:parameter_shift_rule} to this simpler cost landscape to get
\begin{equation}
f'(x) = \frac{1}{2} \left[ f\left(x + \frac{\pi}{2} \right) - f \left(x- \frac{\pi}{2}  \right) \right], \label{eq:fermionic_parameter_shift}
\end{equation}
which is the original parameter shift rule of Ref.~\cite{PhysRevA.98.032309} first applied to qubit-based systems. This means only two evaluations of the cost function are required for each component of the gradient as it no longer scales with the number of particles inserted, $n$, which is a considerable improvement over the requirements for the gradient of Eq.~\eqref{eq:parameter_shift_rule}. 

\begin{figure}
\begin{algorithm}[H]
  \caption{Rotosolve \cite{rotosolve1,rotosolve2,rotosolve3,rotosolve4}}
  \label{alg:rotosolve}
   \begin{algorithmic}
   \Require Cost function $E(\boldsymbol{\theta}) = \langle \psi_\text{out}(\boldsymbol{\theta})|H |\psi_\text{out}(\boldsymbol{\theta}) \rangle$
   \While{termination criteria not met}
   \For{$i= 1,2,\ldots,N_\text{p}$}
   		\State Let $f(x) = E(\theta_i = x)$
   		\State Estimate $f(0)$, $f(\pi/2)$ and $f(-\pi/2)$
   		\State $X = f(\pi/2) - f(-\pi/2)$
   		\State $Y = 2f(0) - f(\pi/2) - f(-\pi/2)$
   		\State $\theta_i \to -\pi/2 - \operatorname{atan2}(Y,X)$
   \EndFor
   \EndWhile
   \end{algorithmic}
\end{algorithm}
\includegraphics[width=\columnwidth]{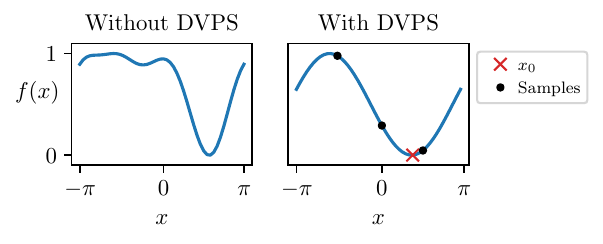}
\caption{The cost function for $n = 7$ particles and $N = 10$ modes with a random $H$.~Fermionic statistics simplifies the cost landscape for this example. The minimum $x_0$ can be found by sampling the cost function at $x=0,\pm\pi/2$ alone. Rotosolve exploits this by applying this to each variable iteratively. \label{fig:rotosolve_minimum}}
\end{figure}

On the other hand, the fact the fermionic cost landscape contains only one harmonic means we can avoid using gradient-based optimisation entirely. Due to the fact that there is a single minimum as one parameter is varied, we can solve for it using trigonometric interpolation. The minimum of $f(x)$ is given by $x_0$, where
\begin{equation}
\begin{aligned}
x_0 & = - \pi/2 - \operatorname{atan2}(Y,X), \\
X & = f(\pi/2) - f(-\pi/2),  \\
Y & = 2f(0) - f(\pi/2) - f(-\pi/2), \label{eq:rotosolve_iteration}
\end{aligned}
\end{equation}
as shown in Fig.~(\ref{fig:rotosolve_minimum}), where $\operatorname{atan2} \in [-\pi,\pi]$ is the two-argument arctangent. This forms the basis of the Rotosolve algorithm of Refs.~\cite{rotosolve1,rotosolve2,rotosolve3,rotosolve4,rotosolve5}, whereby each phase shifter is iteratively optimised whilst holding the rest fixed, as shown in Algorithm~\ref{alg:rotosolve}. This significantly reduces the number of cost function samples, and hence shots, required to optimise. If the interferometer has $N_\text{PS}$ phase shifters and $n$ photons, then the number of cost evaluations per iteration of gradient descent for bosonic and fermionic cost landscapes using the parameter shift rule is $2nN_\text{PS}$ and $2N_\text{PS}$ respectively, and for Rotosolve it is $3$. 

In Fig.~\ref{fig:barren_plateau_scaling} of the previous section, we demonstrated that linear optics suffers from the barren plateau problem. This means that gradient-based optimisers will suffer as the problem scales up. Whilst Fig.~ demonstrates that the barren plateau scaling of fermionic linear optics is better than its bosonic counterpart, the improvement is modest and it still exhibits the exponential decay. However, fermionic linear optics allows us to bypass gradient-based optimisers entirely and gives us a considerable speedup. In Sec.~\ref{sec:results} we present results comparing the performance of the various methods and it is seen that Rotosolve significantly outperforms gradient descent for all tests we performed. Before presenting the results, in the next section we discuss three possible ways we can generate fermionic cost landscapes using bosons.

\section{The dual-valued phase shifter \label{sec:DVPS}} 

\subsection{Motivation}

Fermionic linear optics is not as experimentally accessible as bosonic linear optics. For this reason, we now ask whether it is possible to achieve a fermionic cost landscape of Eq.~\eqref{eq:fermionic_cost} with photonics. The key feature of a fermionic cost landscape is that it contains a single harmonic at most per parameter, which is due to the fact the number operator generating the parametrised phase shifters has only two distinct eigenvalues. For this reason, we replace the number operator with an operator $\hat{q}$ diagonal in the number basis with only two distinct eigenvalues $a,b \in \mathbb{R}$. This defines a \textit{dual-valued phase shifter} (DVPS) as $U_\text{DVPS}(x) := \exp(i \hat{q} x)$ whose action is given by
\begin{equation}
U_\text{DVPS}(x)|n\rangle  = e^{i  q(n) x }|n\rangle, \label{eq:dvps}
\end{equation}
where $q : \mathbb{N} \to \{ a , b \}$. Unlike a standard phase shifter, the DVPS is a non-linear device and cannot be constructed deterministically with linear optics. For this reason, the boson sampler using these is a non-linear boson sampler~\cite{spagnolo2023non}. It is simple to show that this will result in a fermionic cost landscape with a frequency of $\omega = |a-b|$. We now discuss three potential ways to realise this with experiments.

\subsection{Deterministic design with non-linear optics \label{sec:deterministic_DVPS}}

One choice for the generator of the DVPS in Eq.~\eqref{eq:dvps} is given by $\hat{q} = \frac{1}{2}(1 - \hat{\pi})$, where $\hat{\pi} = \exp(i \pi \hat{n})$ is the number parity operator which is equivalent to a $\pi$-phase shifter. This operator has two distinct eigenvalues of $q(n) \in \{0,1\}$, where
\begin{equation}
q(n) = \frac{1}{2} ( 1 - (-1)^n). \label{eq:dvps_phase}
\end{equation}
This will yield the fermionic cost landscape of Eq.~\eqref{eq:fermionic_cost} with a unit frequency. 

Implementing this directly with optical elements would be impossible in practice as this unitary requires tuneable non-linear interactions. This is because the generator $\hat{q}$, which plays the role of the Hamiltonian of this device, is expanded out explicitly as
\begin{equation}
\hat{q} = \frac{1}{2}\sum_{m = 1}^\infty \frac{(-1)^{m+1} (\pi \hat{n})^{2m}}{(2m)!}, 
\end{equation}
which contains high-order interaction terms.

To circumvent this we off-load the tuneable parts of the device to linear components that we can control. We construct the circuit as shown in Fig.~\ref{fig:dvps}(a), consisting of three modes: one logical mode carrying the input and output state, and two ancillary modes. We have a single phase shifter of phase $x/2$, one tuneable beamsplitter with parameter $x$ described by the unitary
\begin{equation}
u(x) = \begin{pmatrix}
\cos \frac{x}{2} & -i \sin \frac{x}{2} \\
-i \sin \frac{x}{2} & \cos \frac{x}{2}
\end{pmatrix},
\end{equation}
and one 50:50 beamsplitter using the Hadamard convention. We also have a cross Kerr non-linearity between the second and third modes as $U_\text{K} = \exp \left( i \pi \hat{n}_2 \hat{n}_3 \right)$ with a magnitude of $\pi$. If we take the input states of the logical mode and ancillary modes to be $|\psi_\text{in}\rangle = |n\rangle$ and $|1,0\rangle$ respectively, and perform a projective measurement on the ancillary modes, then the output of the logical mode is given by
\begin{equation}
|\psi_\text{out}\rangle = \begin{cases}
 e^{iq(n)x}|n\rangle & \text{Ancillary $= | 1,0\rangle$} \\
 e^{-iq(n)x}|n\rangle & \text{Ancillary $=|0,1\rangle$} \label{eq:non_deterministic_dvps}
\end{cases},
\end{equation}
where each output has a probability of $1/2$ (see appendix~\ref{app:non_linear_DVPS}). If we measure $|1,0\rangle$ then the gate is a success as it implements the desired phase, therefore this gate is currently non-deterministic with a probability of success of $1/2$.

\begin{figure}[t]
\tikzset{every picture/.style={line width=0.75pt}} 

\begin{tikzpicture}[x=0.75pt,y=0.75pt,yscale=-1,xscale=1]

\draw    (154,28) -- (134,28) ;
\draw    (154,44) -- (134,44) ;
\draw    (210,28) -- (170,28) ;
\draw    (182,44) -- (170,44) ;
\draw    (182,60) -- (134,60) ;
\draw   (182,42) -- (198,42) -- (198,62) -- (182,62) -- cycle ;
\draw    (266,44) -- (226,44) ;
\draw  [dash pattern={on 1.5pt off 1.5pt},color={rgb, 255:red, 155; green, 155; blue, 155}] (144,18) -- (256,18) -- (256,66) -- (144,66) -- cycle ;
\draw  [fill={rgb, 255:red, 0; green, 0; blue, 0 }  ,fill opacity=1 ] (266,40) -- (270,40) .. controls (272.21,40) and (274,41.79) .. (274,44) .. controls (274,46.21) and (272.21,48) .. (270,48) -- (266,48) -- cycle ;
\draw  [fill={rgb, 255:red, 245; green, 166; blue, 35 }  ,fill opacity=1 ] (268.85,21.07) -- (270.62,24.65) -- (274.56,25.22) -- (271.71,28) -- (272.38,31.93) -- (268.85,30.07) -- (265.33,31.93) -- (266,28) -- (263.15,25.22) -- (267.09,24.65) -- cycle ;
\draw    (210,44) -- (198,44) ;
\draw    (226,28) -- (210,44) ;
\draw    (226,44) -- (210,28) ;
\draw   [stealth-] (266,60) -- (198,60) ;
\draw    (170,28) -- (154,44) ;
\draw    (170,44) -- (154,28) ;
\draw    (76,120) -- (88,120) ;
\draw   (88,100) -- (128,100) -- (128,140) -- (88,140) -- cycle ;
\draw    (140,120) -- (128,120) ;
\draw    (76,136) -- (88,136) ;
\draw    (128,104) -- (138,104) ;
\draw    (88,104) -- (76,104) ;
\draw  [fill={rgb, 255:red, 0; green, 0; blue, 0 }  ,fill opacity=1 ] (138.33,100) -- (142.67,100) .. controls (145.06,100) and (147,101.9) .. (147,104.25) .. controls (147,106.6) and (145.06,108.5) .. (142.67,108.5) -- (138.33,108.5) -- cycle ;
\draw    (96,140) -- (108,140) ;
\draw    (170,136) -- (128,136) ;
\draw    (164,120) -- (170,120) ;
\draw   (170,100.5) -- (210,100.5) -- (210,140.5) -- (170,140.5) -- cycle ;
\draw    (222,120.5) -- (210,120.5) ;
\draw    (210,104.5) -- (220,104.5) ;
\draw    (170,104.5) -- (158,104.5) ;
\draw  [fill={rgb, 255:red, 0; green, 0; blue, 0 }  ,fill opacity=1 ] (220.33,100) -- (224.67,100) .. controls (227.06,100) and (229,101.9) .. (229,104.25) .. controls (229,106.6) and (227.06,108.5) .. (224.67,108.5) -- (220.33,108.5) -- cycle ;
\draw    (178,140.5) -- (190,140.5) ;
\draw    (140,120) -- (158,104.5) ;
\draw    (252,136) -- (210,136) ;
\draw    (246,120) -- (252,120) ;
\draw   (252,100.5) -- (292,100.5) -- (292,140.5) -- (252,140.5) -- cycle ;
\draw  [dash pattern={on 1pt off 1pt}]  (292,120) -- (302,120) ;
\draw  [dash pattern={on 1pt off 1pt}]  (292,104.5) -- (302,104.5) ;
\draw    (252,104.5) -- (240,104.5) ;
\draw    (260,140.5) -- (272,140.5) ;
\draw  [dash pattern={on 1pt off 1pt}]  (292,136) -- (302,136) ;
\draw    (222,120.5) -- (240,104.5) ;
\draw   (238,22) -- (252,22) -- (252,34) -- (238,34) -- cycle ;
\draw    (238,28) -- (226,28) ;
\draw   (266,28) -- (252,28) ;

\draw (185,48.4) node [anchor=north west][inner sep=0.75pt,scale=0.9]    {$\pi $};
\draw (157,44.4) node [anchor=north west][inner sep=0.75pt,scale=0.9]    {$x$};
\draw (208,47) node [anchor=north west][inner sep=0.75pt,scale=0.7]  [align=left] {50:50};
\draw (61.22,52.4) node [anchor=north west][inner sep=0.75pt,scale=0.9]    {$\text{Logical } \quad |n\rangle $};
\draw (185,1.9) node [anchor=north west][inner sep=0.75pt,scale=0.8]    {$V( x)$};
\draw (267,54) node [anchor=north west][inner sep=0.75pt,scale=0.8]    {$U_{\text{DVPS}}( x) |n\rangle $};
\draw (273,30) node [anchor=north west][inner sep=0.75pt,scale=0.8]   [align=left] {``$\displaystyle |1,0\rangle "$};
\draw (33,3) node [anchor=north west][inner sep=0.75pt]   [align=left] {(a)};
\draw (55,15.51) node [anchor=north west][inner sep=0.75pt,scale=0.9]    {$\text{Ancillary}\begin{cases}
|1\rangle  & \\
|0\rangle  & 
\end{cases}$};
\draw (33,75) node [anchor=north west][inner sep=0.75pt]   [align=left] {(b)};
\draw (58,112.4) node [anchor=north west][inner sep=0.75pt,scale=0.9]    {$|0\rangle $};
\draw (58,96.4) node [anchor=north west][inner sep=0.75pt,scale=0.9]    {$|1\rangle $};
\draw (58,128.4) node [anchor=north west][inner sep=0.75pt,scale=0.9]    {$|n\rangle $};
\draw (94,113) node [anchor=north west][inner sep=0.75pt,scale=0.9]    {$V( x)$};
\draw (173,113) node [anchor=north west][inner sep=0.75pt,scale=0.9]    {$V( 2x)$};
\draw (147,112.4) node [anchor=north west][inner sep=0.75pt,scale=0.9]    {$|0\rangle $};
\draw (255,113) node [anchor=north west][inner sep=0.75pt,scale=0.9]    {$V( 4x)$};
\draw (229,112.4) node [anchor=north west][inner sep=0.75pt,scale=0.9]    {$|0\rangle $};
\draw (241,24.4) node [anchor=north west,scale=0.8][inner sep=0.75pt]  {$x$};

\end{tikzpicture}
\caption{(a) A non-deterministic DVPS consists of one logical mode and two ancillary modes. The logical mode interacts with the lower ancillary mode via a cross-Kerr non-linearity of $\pi$, shown by the square. The ancillary modes interact amongst themselves with a phase shifter of phase $x/2$, a tuneable beamsplitter of angle $x \in [0,2\pi)$, and a second 50:50 beamsplitter.  The success of the gate is heralded by the ancillary output state of $|1,0\rangle$. This gate has a success rate of $1/2$. (b) If the non-deterministic gate fails, as heralded by no photon in the upper ancillary mode, then the ancillary photon exits the lower ancillary mode and is rerouted into the input of a second iteration of the non-deterministic gate with double the parameter. We repeat until success. \label{fig:dvps}}
\end{figure}
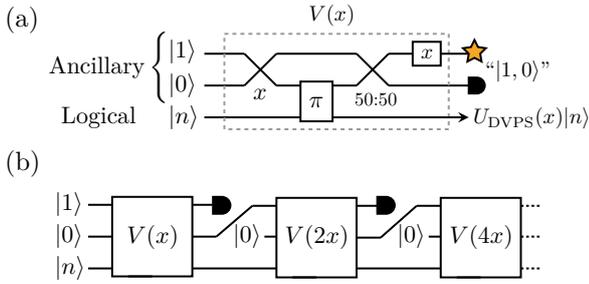

This non-deterministic gate can be upgraded to an asymptotically deterministic gate by repeating until success. From Eq.~\eqref{eq:non_deterministic_dvps}, we see that if the ancillary photon is measured in the state $|0,1\rangle$ then the gate has failed and the phase applied to the the logical mode is the complex conjugate of what we want. We correct this by using the martingale strategy~\footnote{The martingale strategy is a gambling strategy for betting on a game with two outcomes, where the player doubles their bet after each loss until an eventual win recovers all previous losses. For example when betting on red or black in roulette. Overall profit is guaranteed only if the player has infinite wealth and there is no upper limit to the allowed bets (and if the casino doesn't ask you to leave!)} by feeding the failed output into the input of a second DVPS whose phase is $2x$ instead. If the second attempt succeeds then it will correct the incorrect phase of the previous failed attempt to give us the correct output. In general, if we repeat this process $m$ times the phase of the $m$th DVPS is $x_m = 2^{m-1}x \mod 2\pi$. The probability of success after $m$ attempts is given by $1 - 1/2^m$ which is asymptotically deterministic. For example, $m = 7$ repeats gives a success probability of 99.2\%. In Fig.~\ref{fig:dvps}(b) we provide a photonic circuit that performs this feedforward process automatically without need of a classical computer: if the gate is a success the ancillary photon is consumed and the remaining gates reduce to the identity as the Kerr interaction does not activate; whereas if the gate fails the ancillary photon is automatically routed into the next attempt.

\subsection{Non-deterministic design with linear optics \label{sec:non_deterministic_DVPS}}

The design introduced in the previous section requires strong non-linearities rendering it infeasible with today's technology which motivates us to search for a realisation with linear optics alone. By using measurement-induced non-linearities~\cite{PhysRevA.68.032310,PhysRevLett.95.040502,Scheel_2004,Scheel_2005,Scheel2006} we can achieve such at the expense of it becoming non-deterministic. 

To implement a DVPS acting on the subspace of at most $N$ photons, we introduce $N+1$ modes where the first mode is the logical mode and the remaining $N$ modes are ancillary modes. We interfere these modes with an $(N+1)$-mode linear optical interferometer encoding a unitary $u \in \mathrm{U}(N+1)$. Let us prepare the logical mode in the most general state
\begin{equation}
|\psi_\text{in}\rangle = \sum_{n = 0 }^N c_n |n\rangle,
\end{equation}
where $|n\rangle$ is the state of $n$ photons, and let us prepare the $N$ ancillary modes in the number state $|\mathbf{a}\rangle = |a_1,a_2,\ldots,a_N\rangle$, where $a_i \in \{ 0, 1\}$. If we insert the total state $|\psi_\text{in}\rangle |\mathbf{a}\rangle$ into the interferometer and project the output ancillary modes onto the state $|\mathbf{\mathbf{a}}\rangle$, as shown in Fig.~\ref{fig:non_deterministic}(a), then the (unnormalised) state of the logical mode is
\begin{equation}
|\psi_\text{out} \rangle  = \sum_{n = 0}^N \frac{c_n}{n!} \operatorname{per} (u[\mathbf{v}_n | \mathbf{v}_n]) |n \rangle , \label{eq:measurement_induced_output}
\end{equation}
where we used the transition amplitude of Eq.~\eqref{eq:transition_amplitude} and $\mathbf{v}_n = (n,a_1,\ldots,a_N)$. This output state will in general be equal to a non-linear transformation applied to the input state, see appendix~\ref{app:numerical_methods}. The success probability of this transformation is $p = |\langle \psi_\text{out}|\psi_\text{out}\rangle|^2$.

\begin{figure}
\includegraphics[width=\columnwidth]{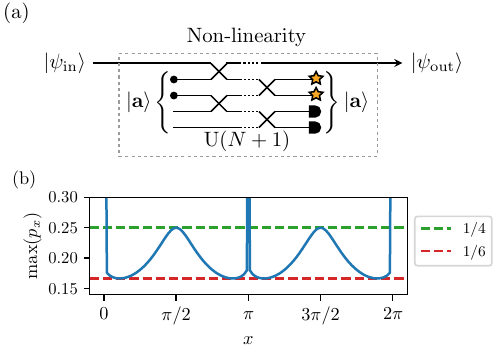}
\caption{(a) A measurement-induced non-linear mapping $|\psi_\text{in}\rangle \mapsto |\psi_\text{out}\rangle$ is obtained by interfering the input state with an ancillary state $|\mathbf{a}\rangle$ in an $(N+1)$-mode linear interferometer and postselecting on the ancillary output. (b) The maximum probability of the non-deterministic dual-valued phase shifter obtained by solving Eq.~\eqref{eq:non_linearity_condition} for the subspace of at most two photons. This uses the circuit of (a) with $N=  2$ ancillary modes and $|\mathbf{a}\rangle = |1,0\rangle$. For $x = 0,\pi$ and $2\pi$, $\mathrm{max}(p_x) = 1$. \label{fig:non_deterministic}}
\end{figure}

For this operation to give us the action of a DVPS, comparing Eqs.~\eqref{eq:dvps} and \eqref{eq:measurement_induced_output} implies that for each $x \in [0,2\pi)$ we must solve for the matrix $u_x \in \mathrm{U}(N+1)$ that obeys
\begin{equation}
\frac{1}{n!}\operatorname{per} (u_x[\mathbf{v}_n|\mathbf{v}_n]) =  \sqrt{p_x} e^{i \alpha_x}  e^{iq(n)x}, \label{eq:non_linearity_condition}
\end{equation}
for all $n = 0,1,\ldots,N$, where $p_x \in [0,1]$ is the success probability of this non-deterministic gate, $\alpha_x$ is a global phase, and $q(n)$ is chosen to be  Eq.~\eqref{eq:dvps_phase}. Solving this is a formidable task, so numerical methods can be employed such as given in Refs.~\cite{PhysRevLett.95.040502,wu2007optimizingopticalquantumlogic,sparrow2018simulating,PhysRevA.79.042326} and appendix~\ref{app:numerical_methods}.

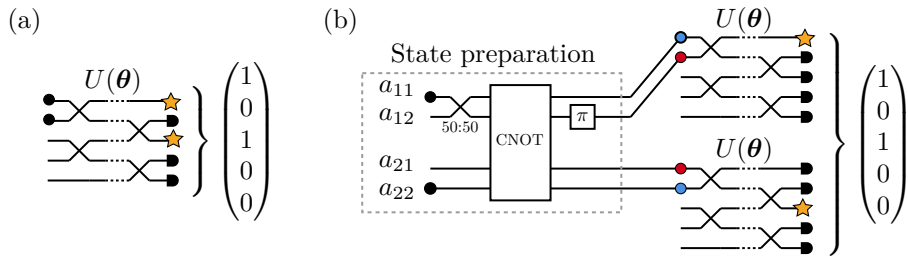
\begin{figure*}
\begin{center}
\begin{tikzpicture}[x=0.75pt,y=0.75pt,yscale=-1,xscale=1]

\draw  [line width=0.75]    (345,26) -- (320,56) ;
\draw  [line width=0.75]    (320,66) -- (345,36) ;
\draw [line width=0.75]    (345,36) -- (355,36) ;
\draw [line width=0.75]    (345,46) -- (355,46) ;
\draw [line width=0.75]    (345,56) -- (355,56) ;
\draw [line width=0.75]    (345,66) -- (365,66) ;
\draw [line width=0.75]    (345,26) -- (355,26) ;
\draw [line width=0.75]    (355,26) -- (365,36) ;
\draw [line width=0.75]    (355,36) -- (365,26) ;
\draw [line width=0.75]    (355,46) -- (365,56) ;
\draw [line width=0.75]    (355,56) -- (365,46) ;
\draw [line width=0.75]    (385,36) -- (395,46) ;
\draw [line width=0.75]    (385,46) -- (395,36) ;
\draw [line width=0.75]    (385,56) -- (395,66) ;
\draw [line width=0.75]    (385,66) -- (395,56) ;
\draw [line width=0.75]    (385,26) -- (405,26) ;
\draw [line width=0.75]    (395,36) -- (405,36) ;
\draw [line width=0.75]    (395,46) -- (405,46) ;
\draw [line width=0.75]    (395,56) -- (405,56) ;
\draw [line width=0.75]    (395,66) -- (405,66) ;
\draw  [fill={rgb, 255:red, 0; green, 0; blue, 0 }  ,fill opacity=1 ][line width=1.5]  (405.92,34) -- (407.92,34) .. controls (409.02,34) and (409.92,34.9) .. (409.92,36) .. controls (409.92,37.1) and (409.02,38) .. (407.92,38) -- (405.92,38) -- cycle ;
\draw  [fill={rgb, 255:red, 0; green, 0; blue, 0 }  ,fill opacity=1 ][line width=1.5]  (405.92,44) -- (407.92,44) .. controls (409.02,44) and (409.92,44.9) .. (409.92,46) .. controls (409.92,47.1) and (409.02,48) .. (407.92,48) -- (405.92,48) -- cycle ;
\draw  [fill={rgb, 255:red, 0; green, 0; blue, 0 }  ,fill opacity=1 ][line width=1.5]  (405.92,54) -- (407.92,54) .. controls (409.02,54) and (409.92,54.9) .. (409.92,56) .. controls (409.92,57.1) and (409.02,58) .. (407.92,58) -- (405.92,58) -- cycle ;
\draw  [fill={rgb, 255:red, 0; green, 0; blue, 0 }  ,fill opacity=1 ][line width=1.5]  (405.92,64) -- (407.92,64) .. controls (409.02,64) and (409.92,64.9) .. (409.92,66) .. controls (409.92,67.1) and (409.02,68) .. (407.92,68) -- (405.92,68) -- cycle ;
\draw [line width=0.75]    (280,102) -- (355,102) ;
\draw [line width=0.75]    (345,112) -- (355,112) ;
\draw [line width=0.75]    (345,122) -- (355,122) ;
\draw [line width=0.75]    (345,132) -- (365,132) ;
\draw [line width=0.75]    (280,92) -- (355,92) ;
\draw [line width=0.75]    (355,92) -- (365,102) ;
\draw [line width=0.75]    (355,102) -- (365,92) ;
\draw [line width=0.75]    (355,112) -- (365,122) ;
\draw [line width=0.75]    (355,122) -- (365,112) ;
\draw [line width=0.75]    (385,102) -- (395,112) ;
\draw [line width=0.75]    (385,112) -- (395,102) ;
\draw [line width=0.75]    (385,122) -- (395,132) ;
\draw [line width=0.75]    (385,132) -- (395,122) ;
\draw [line width=0.75]    (385,92) -- (405,92) ;
\draw [line width=0.75]    (395,102) -- (405,102) ;
\draw [line width=0.75]    (395,112) -- (405,112) ;
\draw [line width=0.75]    (395,122) -- (405,122) ;
\draw [line width=0.75]    (395,132) -- (405,132) ;
\draw  [fill={rgb, 255:red, 0; green, 0; blue, 0 }  ,fill opacity=1 ][line width=1.5]  (405.92,90) -- (407.92,90) .. controls (409.02,90) and (409.92,90.9) .. (409.92,92) .. controls (409.92,93.1) and (409.02,94) .. (407.92,94) -- (405.92,94) -- cycle ;
\draw  [fill={rgb, 255:red, 0; green, 0; blue, 0 }  ,fill opacity=1 ][line width=1.5]  (405.92,100) -- (407.92,100) .. controls (409.02,100) and (409.92,100.9) .. (409.92,102) .. controls (409.92,103.1) and (409.02,104) .. (407.92,104) -- (405.92,104) -- cycle ;
\draw  [fill={rgb, 255:red, 0; green, 0; blue, 0 }  ,fill opacity=1 ][line width=1.5]  (405.92,120) -- (407.92,120) .. controls (409.02,120) and (409.92,120.9) .. (409.92,122) .. controls (409.92,123.1) and (409.02,124) .. (407.92,124) -- (405.92,124) -- cycle ;
\draw  [fill={rgb, 255:red, 0; green, 0; blue, 0 }  ,fill opacity=1 ][line width=1.5]  (405.92,130) -- (407.92,130) .. controls (409.02,130) and (409.92,130.9) .. (409.92,132) .. controls (409.92,133.1) and (409.02,134) .. (407.92,134) -- (405.92,134) -- cycle ;
\draw  [fill={rgb, 255:red, 245; green, 166; blue, 35 }  ,fill opacity=1 ] (407,22) -- (408.45,24.79) -- (411.68,25.24) -- (409.34,27.42) -- (409.89,30.49) -- (407,29.04) -- (404.11,30.49) -- (404.66,27.42) -- (402.32,25.24) -- (405.55,24.79) -- cycle ;
\draw  [fill={rgb, 255:red, 245; green, 166; blue, 35 }  ,fill opacity=1 ] (405.92,107.31) -- (407.36,110.1) -- (410.59,110.55) -- (408.26,112.72) -- (408.81,115.8) -- (405.92,114.35) -- (403.03,115.8) -- (403.58,112.72) -- (401.24,110.55) -- (404.47,110.1) -- cycle ;
\draw  [line width=0.75]  (365,46) -- (373,46) ;
\draw  [line width=0.75]  (365,56) -- (373,56) ;
\draw  [line width=0.75][dash pattern={on 1pt off 1pt}]  (373,36) -- (385,36) ;
\draw [line width=0.75] [dash pattern={on 1pt off 1pt}]  (373,26) -- (385,26) ;
\draw  [line width=0.75]  (365,36) -- (373,36) ;
\draw[line width=0.75]    (365,26) -- (373,26) ;
\draw [line width=0.75] [dash pattern={on 1pt off 1pt}]  (373,46) -- (385,46) ;
\draw [line width=0.75] [dash pattern={on 1pt off 1pt}]  (373,56) -- (385,56) ;
\draw [line width=0.75] [dash pattern={on 1pt off 1pt}]  (373,66) -- (385,66) ;
\draw   [line width=0.75]   (365,66) -- (373,66) ;
\draw [line width=0.75]   (365,92) -- (373,92) ;
\draw [line width=0.75] [dash pattern={on 1pt off 1pt}]  (373,92) -- (385,92) ;
\draw [line width=0.75]   (365,102) -- (373,102) ;
\draw [line width=0.75] [dash pattern={on 1pt off 1pt}]  (373,102) -- (385,102) ;
\draw [line width=0.75] [dash pattern={on 1pt off 1pt}]  (373,112) -- (385,112) ;
\draw  [line width=0.75]  (365,112) -- (373,112) ;
\draw [line width=0.75]   (365,122) -- (373,122) ;
\draw [line width=0.75] [dash pattern={on 1pt off 1pt}]  (373,122) -- (385,122) ;
\draw  [line width=0.75]  (365,132) -- (373,132) ;
\draw [line width=0.75]   [dash pattern={on 1pt off 1pt}]  (373,132) -- (385,132) ;
\draw [line width=0.75]    (29,68) -- (39.74,68) ;
\draw [line width=0.75]    (29,78) -- (39.74,78) ;
\draw [line width=0.75]    (29,88) -- (39.74,88) ;
\draw [line width=0.75]    (29,98) -- (49.74,98) ;
\draw [line width=0.75]    (29,58) -- (39.74,58) ;
\draw [line width=0.75]    (39.74,58) -- (49.74,68) ;
\draw [line width=0.75]    (39.74,68) -- (49.74,58) ;
\draw [line width=0.75]    (39.74,78) -- (49.74,88) ;
\draw [line width=0.75]    (39.74,88) -- (49.74,78) ;
\draw [line width=0.75]    (69.74,68) -- (79.74,78) ;
\draw [line width=0.75]    (69.74,78) -- (79.74,68) ;
\draw [line width=0.75]    (69.74,88) -- (79.74,98) ;
\draw [line width=0.75]    (69.74,98) -- (79.74,88) ;
\draw [line width=0.75]    (69.74,58) -- (89,58) ;
\draw [line width=0.75]    (79.74,68) -- (89,68) ;
\draw [line width=0.75]    (79.74,78) -- (89,78) ;
\draw [line width=0.75]    (79.74,88) -- (89,88) ;
\draw [line width=0.75]    (79.74,98) -- (89,98) ;
\draw  [fill={rgb, 255:red, 0; green, 0; blue, 0 }  ,fill opacity=1 ][line width=1.5]  (89.41,66) -- (91.41,66) .. controls (92.52,66) and (93.41,66.9) .. (93.41,68) .. controls (93.41,69.1) and (92.52,70) .. (91.41,70) -- (89.41,70) -- cycle ;
\draw  [fill={rgb, 255:red, 0; green, 0; blue, 0 }  ,fill opacity=1 ][line width=1.5]  (89.41,86) -- (91.41,86) .. controls (92.52,86) and (93.41,86.9) .. (93.41,88) .. controls (93.41,89.1) and (92.52,90) .. (91.41,90) -- (89.41,90) -- cycle ;
\draw  [fill={rgb, 255:red, 0; green, 0; blue, 0 }  ,fill opacity=1 ][line width=1.5]  (89.41,96) -- (91.41,96) .. controls (92.52,96) and (93.41,96.9) .. (93.41,98) .. controls (93.41,99.1) and (92.52,100) .. (91.41,100) -- (89.41,100) -- cycle ;
\draw  [fill={rgb, 255:red, 245; green, 166; blue, 35 }  ,fill opacity=1 ] (90.49,54) -- (91.94,56.79) -- (95.17,57.24) -- (92.83,59.42) -- (93.38,62.49) -- (90.49,61.04) -- (87.6,62.49) -- (88.15,59.42) -- (85.82,57.24) -- (89.05,56.79) -- cycle ;
\draw  [fill={rgb, 255:red, 245; green, 166; blue, 35 }  ,fill opacity=1 ] (91.75,72.58) -- (93.19,75.38) -- (96.43,75.83) -- (94.09,78) -- (94.64,81.07) -- (91.75,79.62) -- (88.86,81.07) -- (89.41,78) -- (87.07,75.83) -- (90.3,75.38) -- cycle ;
\draw   [line width=0.75] (49.74,78) -- (57.74,78) ;
\draw  [line width=0.75]  (49.74,88) -- (57.74,88) ;
\draw  [line width=0.75] [dash pattern={on 1pt off 1pt}]  (57.74,68) -- (69.74,68) ;
\draw  [line width=0.75] [dash pattern={on 1pt off 1pt}]  (57.74,58) -- (69.74,58) ;
\draw  [line width=0.75]  (49.74,68) -- (57.74,68) ;
\draw [line width=0.75]   (49.74,58) -- (57.74,58) ;
\draw [line width=0.75] [dash pattern={on 1pt off 1pt}]  (57.74,78) -- (69.74,78) ;
\draw [line width=0.75]  [dash pattern={on 1pt off 1pt}]  (57.74,88) -- (69.74,88) ;
\draw [line width=0.75] [dash pattern={on 1pt off 1pt}]  (57.74,98) -- (69.74,98) ;
\draw  [line width=0.75]  (49.74,98) -- (57.74,98) ;
\draw [line width=0.75]   [fill={rgb, 255:red, 74; green, 144; blue, 226 }  ,fill opacity=1 ] (342.33,26) .. controls (342.33,24.53) and (343.53,23.33) .. (345,23.33) .. controls (346.47,23.33) and (347.67,24.53) .. (347.67,26) .. controls (347.67,27.47) and (346.47,28.67) .. (345,28.67) .. controls (343.53,28.67) and (342.33,27.47) .. (342.33,26) -- cycle ;
\draw  [fill={rgb, 255:red, 208; green, 2; blue, 27 }  ,fill opacity=1 ] (342.33,36) .. controls (342.33,34.53) and (343.53,33.33) .. (345,33.33) .. controls (346.47,33.33) and (347.67,34.53) .. (347.67,36) .. controls (347.67,37.47) and (346.47,38.67) .. (345,38.67) .. controls (343.53,38.67) and (342.33,37.47) .. (342.33,36) -- cycle ;
\draw  [color={rgb, 255:red, 0; green, 0; blue, 0 }  ,draw opacity=1 ][fill={rgb, 255:red, 208; green, 2; blue, 27 }  ,fill opacity=1 ] (342.33,92) .. controls (342.33,90.53) and (343.53,89.33) .. (345,89.33) .. controls (346.47,89.33) and (347.67,90.53) .. (347.67,92) .. controls (347.67,93.47) and (346.47,94.67) .. (345,94.67) .. controls (343.53,94.67) and (342.33,93.47) .. (342.33,92) -- cycle ;
\draw  [fill={rgb, 255:red, 74; green, 144; blue, 226 }  ,fill opacity=1 ] (342.33,102) .. controls (342.33,100.53) and (343.53,99.33) .. (345,99.33) .. controls (346.47,99.33) and (347.67,100.53) .. (347.67,102) .. controls (347.67,103.47) and (346.47,104.67) .. (345,104.67) .. controls (343.53,104.67) and (342.33,103.47) .. (342.33,102) -- cycle ;
\draw  [fill={rgb, 255:red, 0; green, 0; blue, 0 }  ,fill opacity=1 ] (27,57.33) .. controls (27,55.86) and (28.19,54.67) .. (29.67,54.67) .. controls (31.14,54.67) and (32.33,55.86) .. (32.33,57.33) .. controls (32.33,58.81) and (31.14,60) .. (29.67,60) .. controls (28.19,60) and (27,58.81) .. (27,57.33) -- cycle ;
\draw  [fill={rgb, 255:red, 0; green, 0; blue, 0 }  ,fill opacity=1 ] (27,67.33) .. controls (27,65.86) and (28.19,64.67) .. (29.67,64.67) .. controls (31.14,64.67) and (32.33,65.86) .. (32.33,67.33) .. controls (32.33,68.81) and (31.14,70) .. (29.67,70) .. controls (28.19,70) and (27,68.81) .. (27,67.33) -- cycle ;
\draw  [line width=0.75]   (250,50) -- (280,50) -- (280,108) -- (250,108) -- cycle ;
\draw [line width=0.75]    (280,66) -- (290,66) ;
\draw  [line width=0.75]    (280,56) -- (320,56) ;
\draw [line width=0.75]    (290,60) -- (302,60) -- (302,72) -- (290,72) -- cycle ;
\draw  [line width=0.75]  (302,66) -- (320,66) ;
\draw [line width=0.75]    (220,102) -- (250,102) ;
\draw [line width=0.75]    (220,92) -- (250,92) ;
\draw [line width=0.75]    (240,66) -- (250,66) ;
\draw [line width=0.75]    (220,66) -- (230,66) ;
\draw [line width=0.75]    (240,56) -- (250,56) ;
\draw [line width=0.75]    (220,56) -- (230,56) ;
\draw [line width=0.75]   (230,56) -- (240,66) ;
\draw [line width=0.75]   (230,66) -- (240,56) ;
\draw  [fill={rgb, 255:red, 0; green, 0; blue, 0 }  ,fill opacity=1 ] (217.33,56) .. controls (217.33,54.53) and (218.53,53.33) .. (220,53.33) .. controls (221.47,53.33) and (222.67,54.53) .. (222.67,56) .. controls (222.67,57.47) and (221.47,58.67) .. (220,58.67) .. controls (218.53,58.67) and (217.33,57.47) .. (217.33,56) -- cycle ;
\draw [line width=0.75]   [color={rgb, 255:red, 0; green, 0; blue, 0 }  ,draw opacity=1 ][fill={rgb, 255:red, 0; green, 0; blue, 0 }  ,fill opacity=1 ] (217.33,102) .. controls (217.33,100.53) and (218.53,99.33) .. (220,99.33) .. controls (221.47,99.33) and (222.67,100.53) .. (222.67,102) .. controls (222.67,103.47) and (221.47,104.67) .. (220,104.67) .. controls (218.53,104.67) and (217.33,103.47) .. (217.33,102) -- cycle ;
\draw [line width=0.75]   [color={rgb, 255:red, 155; green, 155; blue, 155 }  ,draw opacity=1 ][dash pattern={on 1.5pt off 1.5pt}] (186,44) -- (315.33,44) -- (315.33,114) -- (186,114) -- cycle ;

\draw (361,9.4) node [anchor=north west][inner sep=0.75pt]    {$U(\boldsymbol{\theta } )$};
\draw (361,74.9) node [anchor=north west][inner sep=0.75pt]    {$U(\boldsymbol{\theta } )$};
\draw (46,40.4) node [anchor=north west][inner sep=0.75pt]    {$U(\boldsymbol{\theta } )$};
\draw (7.6,10) node [anchor=north west][inner sep=0.75pt]   [align=left] {(a)};
\draw (165,10) node [anchor=north west][inner sep=0.75pt]   [align=left] {(b)};
\draw (83,37) node [anchor=north west][inner sep=0.75pt]   {$\begin{drcases}
 & \\
 & \\
 & 
\end{drcases}\begin{pmatrix}
1\\
0\\
1\\
0\\
0
\end{pmatrix}$};
\draw (401,20.4) node [anchor=north west][inner sep=0.75pt]   {$\begin{drcases}
 & \\
 & \\
 & \\
 & \\
 & \\
 & 
\end{drcases}\begin{pmatrix}
1\\
0\\
1\\
0\\
0
\end{pmatrix}$};
\draw (252,74) node [anchor=north west][inner sep=0.75pt]  [scale=0.6] [align=left] {CNOT};
\draw (292,62.75) node [anchor=north west][inner sep=0.75pt] [scale=0.75]   {$\pi $};
\draw (199,28) node [anchor=north west][inner sep=0.75pt]   [align=left] {State preparation};
\draw (225,68) node [anchor=north west][inner sep=0.75pt]  [scale=0.6] [align=left] {50:50};
\draw (193,46.4) node [anchor=north west][inner sep=0.75pt]    {$a_{11}$};
\draw (193,84.4) node [anchor=north west][inner sep=0.75pt]    {$a_{21}$};
\draw (193,60.4) node [anchor=north west][inner sep=0.75pt]    {$a_{12}$};
\draw (193,98.4) node [anchor=north west][inner sep=0.75pt]    {$a_{22}$};

\end{tikzpicture}
\end{center}
\caption{(a) A fermion sampling experiment with two input fermions in the state $ |\psi_\text{F}\rangle = a^\dagger_1 a^\dagger_2 |0\rangle$, represented by the solid circles, passed into a linear interferometer encoding some unitary. The clicks of the detectors map to a bit string. (b) The photonic simulation of this consists of a state preparation that transforms the state $a_{11}^\dagger a_{22}^\dagger |0\rangle$, shown by the solid circles, into the entangled state $|\psi_\text{B}\rangle = \frac{1}{\sqrt{2}}(a^\dagger_{11} a_{22}^\dagger - a_{12}^\dagger a_{21}^\dagger)|0\rangle$, where each pair of like-coloured circles represents a term in the superposition. The CNOT gate is constructed from linear optics using the unheralded design of Ref.~\cite{PhysRevA.65.062324} which has a success probability of $1/9$. This is then inserted into a pair of identical interferometers encoding the same unitary. The superimposed detector click distribution of both interferometers gives rise to the output bit string.  \label{fig:fermion_sampler}}
\end{figure*}

As an example let us encode a non-deterministic DVPS on the subspace of at most $N = 2$ photons. Following the recipe above, for each $x$ this requires us to introduce a three-mode linear interferometer encoding a unitary $u_x \in \mathrm{U}(3)$ that solves Eq.~\eqref{eq:non_linearity_condition} and postselect on the outputs of the ancillary modes. In Ref.~\cite{PhysRevA.68.032310} it was shown that there are infinitely many unitary solutions to this and the goal is to find the one maximising $p_x$. For this particular subspace, this is bounded from above as $p_x \leq 1/4$~\cite{PhysRevLett.95.040502,Scheel_2005}. Note that on this subspace this gate is closely related to the non-linear sign (NS) gate~\cite{PhysRevA.65.012314,KLM} that is a well-known non-linear phase shifter, used for constructing CNOT gates, defined via
\begin{equation}
U_\text{NS} : \alpha |0 \rangle + \beta |1\rangle + \gamma |2\rangle \mapsto \alpha |0 \rangle + \beta |1\rangle - \gamma |2\rangle.
\end{equation} 
The relationship between the NS gate and the DVPS is given by $U_\text{NS} = U_\text{PS}(\pi/2)U_\text{DVPS}(3\pi/2)$, where $U_\text{PS}$ is a standard phase shifter. Just like the NS gate, we can find a realisation of the DVPS with a single ancillary photon, so we take the ancillary state $|\mathbf{a}\rangle = |1,0\rangle$.

In Fig.~\ref{fig:non_deterministic}(b) we present the maximum value of $p_x$ by solving Eq.~\eqref{eq:non_linearity_condition} numerically for $x \in [0,2\pi)$. For $x \equiv 0, \pi \ (\text{mod $2\pi$})$ the probability is unity because for these values the DVPS is the identity and a $\pi$-phase shifter respectively, which are both deterministic gates with linear optics. For all other values of $x$ we see that the probability depends upon $x$ and is bounded as $1/6 \leq p_x \leq 1/4$. See appendix~\ref{app:numerical_methods} for details on numerics, a set of solution unitaries, and the effect of different ancillary states.

\subsection{Photonic fermion sampling \label{sec:fermion_sampling}}

The use of measurement-induced DVPSs of the previous section will be impractical for large interferometers, as each phase shifter is non-deterministic and requires the same number of ancillary modes as input photons. Instead, let us try to directly simulate fermion sampling. 

As a linear fermion sampler is equivalent to time evolution under a non-interacting and particle-conserving fermionic Hamiltonian, this can be simulated with qubits via a Jordan-Wigner transformation. The particle-conserving unitaries are equivalent to Givens rotations on the subspace of fixed Hamming weight and circuits exist for this~\cite{PhysRevA.98.022322,Arrazola2022universalquantum,Anselmetti_2021,PhysRevResearch.5.023071}, allowing simulation of fermion sampling. Additionally, one could simulate fermion sampling with cold atoms in optical lattices~\cite{doi:10.1126/science.aal3837,schafer2020tools} by interpreting each column of an interferometer as one time step of nearest-neighbour hoppings.  

To provide a concrete realisation of fermion sampling with photonic linear optics, we use the method from Ref.~\cite{Matthews_2013}. We simulate $n$ fermions in an $N$-mode interferometer encoding a given unitary by first introducing $n$ identical copies of this interferometer. This system is described by a set of bosonic ladder operators $a_{\mu i}$ with two indices, where the ordered pair $(\mu,i)$ indexes the $i$th mode of the $\mu$th interferometer. These obey the bosonic algebra $[a_{\mu i},a^\dagger_{\nu j}]  = \delta_{\mu \nu}\delta_{ij}$ and $[a_{\mu i},a_{\nu j}] = [a^\dagger_{\mu i},a^\dagger_{\nu j}] = 0$. Then we construct the resource state
\begin{equation}
|\psi_\mathrm{in}\rangle = \frac{1}{\sqrt{n!}} \sum_{\sigma \in S_n} (-1)^\sigma \prod_{ \mu = 1}^n a_{\mu \sigma(\mu)}^{\dagger} |0\rangle, \label{eq:fermionic_input_state}
\end{equation}
where $S_n$ is the permutation group of $n$ elements. This is an entangled $n$-photon state, where each photon is inserted into its own interferometer, and is anti-symmetric upon exchange of any two photons' mode degree of freedom, hence the set of states of this form is isomorphic to the space $\wedge^n(\mathbb{C}^N)$. This state is a representation of the state of the first $n$ modes of a \textit{single} interferometer occupied with a fermion. Note that the complexity of this state scales with the number of particles, $n$, and not the number of modes, $N$. With linear optics alone, the success probability of producing this ansatz is upper bounded by $1/9$ as will be shown momentarily.

The way we sample bit strings from this system is slightly different. As each interferometer contains a single photon, upon measurement each one returns an $N$-bit string with a unit Hamming weight. Due to the anti-symmetric property of the state, no two interferometers will return the same bit string. We construct a new bit string from these by summing them into a single $N$-bit string of Hamming weight $n$. 

The probability to measure the output superimposed bit string $\mathbf{m}$ given the input state had the superimposed bit string $\mathbf{n}$ is given by
\begin{equation}
p(\mathbf{m}|\mathbf{n} ) = \left| \det u[\mathbf{m}|\mathbf{n}] \right|^2,
\end{equation}
where $u \in \mathrm{U}(N)$ is the unitary programmed into the interferometers. This probability displays fermionic statistics as it agrees precisely the with the determinant of Eq.~\eqref{eq:transition_amplitude}. When we simulate fermions using photons we refer to this as \textit{photonic fermion sampling} to contrast with fermion sampling with actual fermions. As the exchange anti-symmetry of the resource state of Eq.~\eqref{eq:fermionic_input_state} is what simulates fermionic behaviour, the set of interferometers must always encode identical unitaries otherwise this breaks down. In Appendix~\ref{app:fermionic_simulation} we provide a proof of the transition amplitude.

As an example let us consider the case of simulating two fermions. From Eq.~\eqref{eq:fermionic_input_state} the required photonic resource state is given by
\begin{equation}
|\psi\rangle = \frac{1}{\sqrt{2}} \left( a_{11}^\dagger a_{22}^\dagger - a^{\dagger}_{12} a^{\dagger}_{21} \right) | 0 \rangle. \label{eq:2_femrions}
\end{equation} 
This represents state of a fermion sampler with the first two modes occupied, as shown in Fig.~\ref{fig:fermion_sampler}(a), and is equivalent to the Bell state $|\Psi^-\rangle$ if we use the dual-rail encoding of photonic qubits. In Fig.~\ref{fig:fermion_sampler}(b) we show how to prepare this state non-deterministically using linear optics alone which uses a circuit consisting of a 50:50 beamsplitter using the Hadamard convention, a CNOT gate, and a $\pi$ phase shifter. The CNOT gate is the either the heralded~\cite{KLM} or the unheralded~\cite{PhysRevA.65.062324} version, working with a probability of $1/16$ and $1/9$ respectively. We opt for the latter as it has the benefit of both a higher success rate and no requirement for ancillary photons. As it is unheralded, we simply need to postselect on there being a single photon per intereferometer to see the fermionic statistics. For this reason, the success probability of preparing Eq.~\eqref{eq:fermionic_input_state} for $n > 1$ with linear optics alone is upper bounded by $1/9$ as at least one CNOT is required. Note that this two-fermion example can also be done by entangling the polarisation degree of freedom instead~\cite{PhysRevLett.108.010502}.

In addition to resulting in a simpler cost landscape, fermion sampling has two useful features. The first is that no many-to-one mapping is required to map the outputs to bit strings as multi-occupation states are forbidden. This removes the redundancy that bosonic linear optics suffers from as every output state is distinguishable even after threshold detection. The second is that the output states have a fixed Hamming weight and this will aid in solving Hamming-constrained QUBO problems to be seen in Sec.~\ref{sec:constrained_qubo}. Additionally, it has been argued that Hamming-constrained systems can help to minimise the barren plateau problem~\cite{Larocca2022diagnosingbarren,monbroussou2023trainability,Cherrat2023quantumdeephedging}.
 
Note that one could obtain fixed Hamming weight bit strings from a standard boson sampler by postselecting these outputs and discarding the rest, however the cost function will still take the general bosonic form and will not yield the simplified fermion cost landscape. This is because the Fourier modes of the cost landscape are determined by how many photons pass through each phase shifter of the interferometer, which can be greater than one for standard bosonic linear optics, and postselection on Hamming weight will not change this.

\subsection{Resources of the three designs}

\begin{table}
\begin{center}
\resizebox{\columnwidth}{!}{\begin{tabular}{|c |c | c | c| c |}
\hline
Method & Modes & Particles & $p_\text{ansatz}$ & $p_\text{gate}$ \\
\hline
\hline
NL  & $N(1 + (1+m)N)$  & $N^2 + n$ & 1 & $1 - 1/2^m$ \\
L  & $N(1 + nN)$& $N^2n + n$ & 1 & $\leq 1/n^2$ \\
FS  & $N$ & $n$ & 1 & $1$ \\
PFS & $Nn +O(n^4)$ & $n + O(n^4)$ & $\leq 1/9$ & 1 \\
\hline
\end{tabular}}
\caption{The resources for simulating a sampling experiment of $n$ particles in an $N$-mode universal interferometer consisting of $N^2$ phase shifters when we use the non-linear DVPS with $m$ repeats (NL), measurement-induced linear DVPS (L), fermion sampling (FS) and photonic fermion sampling (PFS)  For each case, we list how many modes the interferometer requires, the number of particles (both logical and ancillary) required, the success probability to generate the ansatz, and the success probability for each parametrised phase shifter. \label{tab:scaling}}
\end{center}
\end{table}

We conclude this section with a comparison of the resources of the three methods of presented in Secs.~\ref{sec:deterministic_DVPS}, \ref{sec:non_deterministic_DVPS} and \ref{sec:fermion_sampling} for conducting a sampling experiment, which are shown in Table~\ref{tab:scaling}. For each method, we consider sampling from an $N$-mode universal interferometer, which contains $N^2$ phase shifters~\cite{Clements:16,PhysRevLett.73.58}, with $n$ input particles. The phase shifters are either standard or DVPS depending upon the method of choice. We additionally make two key assumptions. First, we assume that single-occupation number state inputs can be created deterministically for both fermionic and bosonic statistics; and second we assume that the physical hardware of each circuit, such as beamsplitters, standard phase shifters (not DVPS) and Kerr non-linearities, operate deterministically---the non-determinism of any method arises from preparing the input ansatz or the measurement and postselection. In 

First, the non-linear DVPS of Sec.~\ref{sec:deterministic_DVPS}. The input states are simple number states so are deterministically prepared. Each gate requires two ancillary modes and one ancillary photon, however if the gate is repeated $m$ times in the repeat-until-sucess protocol we require $1 + m$ ancillary modes, so the entire interferometer requires $(1+m)N^2$ ancillary modes, $N$ logical modes, $N^2$ ancillary photons, and $n$ logical photons. Each phase shifter operates with a success probability of $1 - 1/2^m$. However, this relies upon a strong cross Kerr non-linearity which makes this infeasible with photonics today.

Second, the linear DVPS of Sec.~\ref{sec:non_deterministic_DVPS}. The input states for this system are also simple number states so are deterministically prepared. Each gate requires $n$ ancillary modes with $n$ ancillary photons in general (see Appendix~\ref{app:numerical_methods}) where each gate must be a universal interferometer itself in order to encode any unitary required. This gives a total of $N^2 n$ ancillary modes, $N$ logical modes, $N^2 n$ ancillary photons and $n$ logical photons. The DVPS fall into the class of generalised NS gates and these have a success probability of at most $1/n^2$~\cite{Scheel_2005}.

\begin{figure*}[t!]
\begin{center}
\includegraphics[width=1\textwidth]{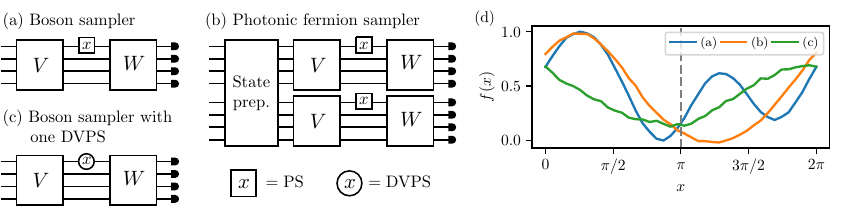}
\end{center}
\caption{(a)~The boson sampler is an $N$-mode interferometer constructed from a single phase shifter (PS) of phase $x$ sandwiched between two fixed linear unitaries. (b)~The photonic fermion sampler simulating $n$ fermions is an $nN$-mode interferometer consisting of an entangling state preparation state (shown in Fig.~\ref{fig:fermion_sampler}) followed by $n$ identical copies of the boson sampler. For this particular example, $n = 2$. (c)~The bosonic DVPS sampling experiment has the same hardware as the boson sampler, except the parametrised phase shifter has been replaced with a DVPS. (d)~The numerical simulation of the cost landscapes for the three cases in (a)-(c) as we vary $x$ with the input states $|\psi_\text{in}\rangle = |1100\rangle$ for the bosonic cases, and the two-photon entangled state from Eq.~\eqref{eq:fermionic_input_state} for the fermionic case, which represents the fermionic state $|\psi_\text{in}\rangle = |1100\rangle$. The same observable $H$ and fixed linear unitaries $V$ and $W$ were chosen for each case. The boson sampler produces two harmonics as expected, as it contains two photons. This reduces to a single harmonic if a photonic fermion sampler or DVPS is used instead which removes the local minimum. Numerical simulation parameters were taken to be $10^6$ shots per $x$, purity of 0.9, indistinguishability of 0.95, brightness of 0.5 (equivalently loss of $0.5$), dark counts rate of $5 \times 10^{-6}$ and threshold detection with efficiency of $0.9$. \label{fig:cost_function_comparison}}
\end{figure*} 

Finally, if we perform fermion sampling of Sec.~\ref{sec:fermion_sampling} with, say cold atoms in an optical lattice, then this system has no ancillary modes, and each gate is deterministic as only standard phase shifters are required. On the other hand, if we were to perform photonic fermion sampling we require $n$ copies of the interferometer giving us $Nn$ modes in total. Simulating $n$ fermions also requires a state preparation stage with $O(n^4)$ controlled swap gates~\cite{Matthews_2013}. With linear optics alone, this is non-deterministic and requires $O(n^4)$ ancillary photons and $O(n^4)$ ancillary modes. The probability of success of state preparation is at most $1/9$ (see Sec~\ref{sec:fermion_sampling}) however each phase shifter in the interferometer works deterministically as they are linear

We see that both realisations of fermion sampling scale the best out of the methods presented. Whilst the non-linear DVPS is promising, it requires infeasible non-linearities. As we are interested in photonic linear optics, one would opt for photonic fermion sampling as it scales better than the DVPS for large problems because each gate is deterministic and the total number of photons required is independent of $N$, allowing for very large $N$ systems to be constructed at relatively low cost.

\section{Results \label{sec:results}}

\subsection{Cost landscapes}

In this section we numerically test the cost landscapes that can be achieved using the three methods of boson sampling of Sec.~\ref{sec:vqe}, the non-deterministic DVPS of Sec.~\ref{sec:non_deterministic_DVPS}, and photonic fermion sampling of Sec.~\ref{sec:fermion_sampling}. All three methods can be implemented today with linear optics, single photon sources and threshold detectors. 

For each method, we insert two photons into an $N$-mode linear interferometer encoding the parametrised unitary $W U(x) V$, where $V$ and $W$ are two randomly-chosen linear unitaries that are held fixed and $U(x)$ is either a parametrised phase shifter or DVPS depending on which of the three methods we choose. The circuits of these three methods are shown in Figs.~\ref{fig:cost_function_comparison}(a)-(c). If the DVPS is used, then for each $x$ the DVPS equation of Eq.~\eqref{eq:non_linearity_condition} must be solved numerically beforehand to find the required unitaries to encode. See table~\ref{tab:unitary_table} of appendix~\ref{app:numerical_methods} for some example solutions. Without loss of generality, we choose an observable $H$ for the cost function that is diagonal in the number basis with eigenvalues chosen randomly, which is equivalent to assigning a randomly-chosen cost to each possible output bit string.

We numerically simulate these three methods using the Python package Lightworks~\cite{lightworks} which takes into account realistic effects of finite sampling, photon purity, photon indistinguishability, photon source brightness, photon losses, threshold detection, detector dark counts and detector efficiency. In Fig.~\ref{fig:cost_function_comparison}(d) we compare the cost landscape as we sweep over the phase $x$ for the three methods. The cost function for boson sampling has two harmonics due to the presence of two photons, as predicted by Eq.~\eqref{eq:bosonic_cost_landscape}, giving rise to a local minimum. If a DVPS or photonic fermion sampling is used instead, the cost function has a single harmonic as predicted by Eq.~\eqref{eq:fermionic_cost} which removes this local minimum. Note that the cost function of the DVPS and phase shifter agree at integer multiples of $\pi$ as they are equivalent unitaries for these values. In Sec.~\ref{sec:fermion_sampling} we show how to prepare the two-fermion resource state and in appendix~\ref{app:numerical_methods} we show how to construct the non-deterministic DVPS.

\subsection{Application to constrained and unconstrained QUBO problems \label{sec:constrained_qubo}}

Using dual-valued phase shifters as our parametrised unitaries instead of phase shifters simplifies the cost landscape considerably. In this section we assess the consequences of this for variational quantum algorithms using gradient descent and Rotosolve as their optimisers. 

In order to simulate a DVPS, we opt for fermion sampling of Sec.~\ref{sec:fermion_sampling} as it displays the desired features whilst being numerically efficient to simulate. Any discrete variable linear optics simulator can be modified to simulate fermion sampling by replacing the permanent with the determinant when calculating transition amplitudes in Eq.~\eqref{eq:transition_amplitude} allowing us to simulate large systems. From this point onwards we will be comparing the performance of boson sampling and fermion sampling and all simulations were conducted using the fermion sampler of Lightworks~\cite{lightworks}.

We first test this for solving Hamming constrained QUBO problems. This constraint can be imposed by modifying the cost function of Eq.~\eqref{eq:QUBO_cost} with a penalty term as
\begin{equation}
C_w(\mathbf{x}) = C(\mathbf{x}) + \lambda \left(w - \sum_{i = 1}^N x_i  \right)^2, \label{eq:hamming_constrained_cost}
\end{equation}
where $\lambda$ is a Lagrange multiplier and $w$ is the Hamming constraint. As the outputs of a fermion sampler have a fixed Hamming weight, this will aid the optimiser as all outputs obey the constraint. One could argue that for an $N$-mode interferometer with $n$ bosons, where $N \gg n^2$, the bosonic birthday paradox will come to our aid as the chance of multiply-occupied modes at the outputs is low, so most outputs of a boson sampler will have a fixed Hamming weight anyway~\cite{arkhipov2012bosonic,10.1145/1993636.1993682}. However, for some problems $w$, and hence $n$, may be on the order of $N$ so this will not apply. Examples problems include the portfolio optimisation problem~\cite{PhysRevResearch.5.023071} and the travelling salesman problem~\cite{tsp}. In these cases, removing redundancy with a fermion sampler is desirable.

In Fig.~\ref{fig:constrained_scaling}(a) we present the cost function over time for a random constrained QUBO problem with these three methods, where gradient descent uses the update rule $\boldsymbol{\theta} \to \boldsymbol{\theta} - h \nabla E(\boldsymbol{\theta})$ with $h = 0.05$. This value of the learning rate $h$ was chosen as the algorithm diverged for larger values. Gradients were evaluated using the parameter shift rules of Eqs.~\eqref{eq:parameter_shift_rule}~and~\eqref{eq:fermionic_parameter_shift}.  We see how effective Rotosolve is compared to gradient descent, as for this particular example with a solution space of 56 solutions (8-bit strings with Hamming weight 3), it finds the minimum in just two steps, where we define a step as a parameter update for either Rotosolve or gradient descent. 

In Fig.~\ref{fig:constrained_scaling}(c) we compared the average number of cost function evaluations for solving problems of various sizes $N$ with the Hamming constraint of $w = \lfloor N/2 \rfloor$ for the three methods. Each time, the QUBO matrix was generated by randomly sampling integer elements and we took $\lambda = 2 \operatorname{max} \{ Q_{ij} \}$. Fermion sampling is seen to outperform boson sampling when using gradient descent and gains a further speed-up when using Rotosolve.

\begin{figure}[t]
\begin{center}
\tikzset{every picture/.style={line width=0.75pt}} 

\begin{tikzpicture}[x=0.75pt,y=0.75pt,yscale=-1,xscale=1]

\draw [line width=0.75]    (74,70) -- (94,70) ;
\draw [line width=0.75]    (74,80) -- (94,80) ;
\draw [line width=0.75]    (74,90) -- (94,90) ;
\draw [line width=0.75]    (74,100) -- (94,100) ;
\draw [line width=0.75]    (74,110) -- (94,110) ;
\draw [line width=0.75]    (104,100) -- (114,100) ;
\draw [line width=0.75]    (104,90) -- (114,90) ;
\draw [line width=0.75]    (104,70) -- (114,70) ;
\draw [line width=0.75]    (74,60) -- (94,60) ;
\draw [line width=0.75]    (94,60) -- (104,70) ;
\draw [line width=0.75]    (94,70) -- (104,60) ;
\draw [line width=0.75]    (94,80) -- (104,90) ;
\draw [line width=0.75]    (94,90) -- (104,80) ;
\draw [line width=0.75]    (94,100) -- (104,110) ;
\draw [line width=0.75]    (94,110) -- (104,100) ;
\draw [line width=0.75]    (124,70) -- (134,80) ;
\draw [line width=0.75]    (124,80) -- (134,70) ;
\draw [line width=0.75]    (104,80) -- (114,80) ;
\draw [line width=0.75]    (124,90) -- (134,100) ;
\draw [line width=0.75]    (124,100) -- (134,90) ;
\draw [line width=0.75]    (104,60) -- (114,60) ;
\draw [line width=0.75]    (134,70) -- (154,70) ;
\draw [line width=0.75]    (134,80) -- (154,80) ;
\draw [line width=0.75]    (134,90) -- (154,90) ;
\draw [line width=0.75]    (134,100) -- (154,100) ;
\draw  [fill={rgb, 255:red, 0; green, 0; blue, 0 }  ,fill opacity=1, line width=1.5]  (154,58) -- (156,58) .. controls (157.1,58) and (158,58.9) .. (158,60) .. controls (158,61.1) and (157.1,62) .. (156,62) -- (154,62) -- cycle ;
\draw  [fill={rgb, 255:red, 0; green, 0; blue, 0 }  ,fill opacity=1, line width=1.5]  (154,78) -- (156,78) .. controls (157.1,78) and (158,78.9) .. (158,80) .. controls (158,81.1) and (157.1,82) .. (156,82) -- (154,82) -- cycle ;
\draw  [fill={rgb, 255:red, 0; green, 0; blue, 0 }  ,fill opacity=1, line width=1.5]  (154,88) -- (156,88) .. controls (157.1,88) and (158,88.9) .. (158,90) .. controls (158,91.1) and (157.1,92) .. (156,92) -- (154,92) -- cycle ;
\draw [line width=0.75]    (74,130) -- (94,130) ;
\draw [line width=0.75]    (104,130) -- (114,130) ;
\draw [line width=0.75]    (74,120) -- (94,120) ;
\draw [line width=0.75]    (74,130) -- (94,130) ;
\draw [line width=0.75]    (104,120) -- (114,120) ;
\draw [line width=0.75]    (104,110) -- (114,110) ;
\draw [line width=0.75]    (94,120) -- (104,130) ;
\draw [line width=0.75]    (94,130) -- (104,120) ;
\draw [line width=0.75]    (124,110) -- (134,120) ;
\draw [line width=0.75]    (124,120) -- (134,110) ;
\draw [line width=0.75]    (134,110) -- (154,110) ;
\draw [line width=0.75]    (134,120) -- (154,120) ;
\draw   (154,96) -- (155,96) -- (155,132) -- (154,132) -- cycle ;
\draw [line width=0.75]  [dash pattern={on 1pt off 1pt}]  (114,120) -- (124,120) ;
\draw [line width=0.75]  [dash pattern={on 1pt off 1pt}]  (114,130) -- (124,130) ;
\draw [line width=0.75]  [dash pattern={on 1pt off 1pt}]  (114,110) -- (124,110) ;
\draw [line width=0.75]  [dash pattern={on 1pt off 1pt}]  (114,100) -- (124,100) ;
\draw [line width=0.75]  [dash pattern={on 1pt off 1pt}]  (114,90) -- (124,90) ;
\draw [line width=0.75]  [dash pattern={on 1pt off 1pt}]  (114,80) -- (124,80) ;
\draw [line width=0.75]  [dash pattern={on 1pt off 1pt}]  (114,70) -- (124,70) ;
\draw [line width=0.75]  [dash pattern={on 1pt off 1pt}]  (114,60) -- (124,60) ;
\draw [line width=0.75]    (124,130) -- (154,130) ;
\draw [line width=0.75]    (124,60) -- (154,60) ;
\draw  [fill={rgb, 255:red, 0; green, 0; blue, 0 }  ,fill opacity=1 ] (71.5,60) .. controls (71.5,58.62) and (72.62,57.5) .. (74,57.5) .. controls (75.38,57.5) and (76.5,58.62) .. (76.5,60) .. controls (76.5,61.38) and (75.38,62.5) .. (74,62.5) .. controls (72.62,62.5) and (71.5,61.38) .. (71.5,60) -- cycle ;
\draw  [fill={rgb, 255:red, 0; green, 0; blue, 0 }  ,fill opacity=1 ] (71.5,70) .. controls (71.5,68.62) and (72.62,67.5) .. (74,67.5) .. controls (75.38,67.5) and (76.5,68.62) .. (76.5,70) .. controls (76.5,71.38) and (75.38,72.5) .. (74,72.5) .. controls (72.62,72.5) and (71.5,71.38) .. (71.5,70) -- cycle ;
\draw  [fill={rgb, 255:red, 245; green, 166; blue, 35 }  ,fill opacity=1 ] (156.34,64.58) -- (157.78,67.38) -- (161.02,67.83) -- (158.68,70) -- (159.23,73.07) -- (156.34,71.62) -- (153.45,73.07) -- (154,70) -- (151.66,67.83) -- (154.89,67.38) -- cycle ;
\draw  [fill={rgb, 255:red, 0; green, 0; blue, 0 }  ,fill opacity=1 ] (71,80.5) .. controls (71,79.12) and (72.12,78) .. (73.5,78) .. controls (74.88,78) and (76,79.12) .. (76,80.5) .. controls (76,81.88) and (74.88,83) .. (73.5,83) .. controls (72.12,83) and (71,81.88) .. (71,80.5) -- cycle ;
\draw  [fill={rgb, 255:red, 0; green, 0; blue, 0 }  ,fill opacity=1 ] (71,90.5) .. controls (71,89.12) and (72.12,88) .. (73.5,88) .. controls (74.88,88) and (76,89.12) .. (76,90.5) .. controls (76,91.88) and (74.88,93) .. (73.5,93) .. controls (72.12,93) and (71,91.88) .. (71,90.5) -- cycle ;

\draw (97,42.4) node [anchor=north west][inner sep=0.75pt]    {$U(\boldsymbol{\theta } )$};
\draw (26,55) node [anchor=north west,scale=1][inner sep=0.75pt]    {$|\psi _{\text{F}} \rangle \begin{cases}
 & \\
 & \\
 & \\
 & \\
\end{cases}$};
\draw (145,37) node [anchor=north west][inner sep=0.75pt]    {$\begin{drcases}
 & \\
 & 
\end{drcases}\begin{pmatrix}
0\\
1\\
0\\
0
\end{pmatrix}$};
\end{tikzpicture}

\end{center}
\caption{All possible $N$-bit strings are generated by inserting $N$ fermions into a $2N$-mode linear optical interferometer in the state $|\psi_\mathrm{F}\rangle$, represented by the black circles, and measuring only the top $N$ modes. In order to perform this with photonic fermion sampling, we must introduce $N$ copies of the interferometer as discussed in Sec.~\ref{sec:fermion_sampling} and shown in Fig.~\ref{fig:fermion_sampler} \label{fig:fermion_sampling_unconstrained}}
\end{figure}
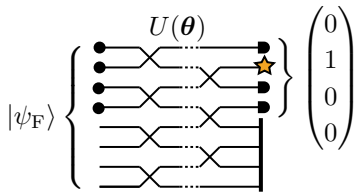

We can also use a fermion sampler to solve unconstrained problems for which $\lambda = 0$ in Eq.~\eqref{eq:hamming_constrained_cost}. To generate all possible $N$-bit strings with a fermion sampler, we introduce an interferometer with $2N$ modes but measure only the top half, as shown in Fig.~\ref{fig:fermion_sampling_unconstrained}. In order to compare boson and fermion sampling, we use a $2N$-mode interferometer for boson sampling of the form in Fig.~\ref{fig:fermion_sampling_unconstrained} as well. This means that both the fermion and boson sampler have precisely the same hardware, allowing for a direct comparison that reveals the effects of the particles' statistics alone. 

In Fig.~\ref{fig:constrained_scaling}(b) we present the cost function over time for a random unconstrained QUBO problem with the same three methods as before, and in Fig.~\ref{fig:constrained_scaling}(d) we compare the number of cost function evaluations as the problem scales.  We see that fermion sampling outperforms boson sampling for solving unconstrained QUBO problems as it is able to exploit the simple cost landscape to gain an advantage, with Rotosolve giving the further speedup over gradient descent.

In both tests, we see a decent improvement when using fermion sampling with gradient descent, which is to be expect as the barren plateau scaling of fermion sampling is better, but only modestly. However, the significant speedup is obtained when we use Rotosolve, exploiting the simple cost landscape.

\begin{figure}[t]
\includegraphics[width=\columnwidth]{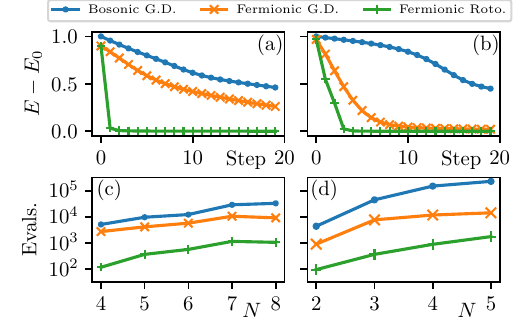}
\caption{(a) The cost function over time for a random QUBO problem of size $N = 8$ with Hamming constraint $w = 3$. The choice of algorithm was either boson sampling with gradient descent, fermion sampling with gradient descent, or fermion sampling with Rotosolve. Each method had the same initial variational parameters and was simulated using exact wavefunctions. All data was normalised to the initial value of the bosonic case. (b) The same test for a random unconstrained problem of size $N = 4$. (c) The average number of cost function evaluations before termination for five random Hamming constrained QUBO problems with Hamming constraint $w = \lfloor N/2 \rfloor$ vs. the problem size $N$. For each $N$ the number of input particles was $w$. (d) The same test but for unconstrained problems. For each $N$ the number of input particles was $n = N$. For both fermions and bosons we used a $2N$ mode interferometer and measured only the top $N$ modes as shown in Fig.~\ref{fig:fermion_sampling_unconstrained}.  \label{fig:constrained_scaling} \label{fig:cost_landscape_comparison}}
\end{figure}

Benchmarking the performance of variational quantum algorithms is difficult due to their heuristic nature and the inability to gauge the time scaling analytically. Additionally, direct comparison with classical algorithms for solving the same problems could be difficult as they may perform in completely different ways. However, here we have indirectly proposed a classical variational algorithm which can be directly compared with the original boson sampling variational quantum algorithm. This is because fermion and boson samplers operate in precisely the same way, with the same hardware, ansatzes, and classical optimisers, except that fermion sampling is classically efficient to simulate. This implies the existence a classical algorithm that outperforms the best-known boson sampling quantum variational algorithm, and for this reason the quantum advantage of boson sampling does not result in a \textit{practical} quantum advantage for these algorithms.

\subsection{Rotosolve for boson sampling}

Rotosolve is an effective algorithm for minimising sinusoidal cost functions because there is an analytical expression for the unique minimum given just three samples of the cost function. This algorithm can be generalised to handle bosonic cost functions with multiple harmonics like Eq.~\eqref{eq:bosonic_cost_landscape} and is based upon trigonometric interpolation, see Refs.~\cite{rotosolve3,lai2025optimalinterpolationbasedcoordinatedescent} and Appendix~\ref{app:stationary_points}. This follows a similar route to Rotosolve in Algorithm~\ref{alg:rotosolve}, except that the update rule requires solving for the set of stationary points of the cost function and choosing the one with the lowest cost. As with evaluating the gradient, however, this does not scale well because for each iteration of the algorithm we need to sample the cost function $2n+1$ times for an $n$-photon bosonic cost landscape, solve for the roots of a polynomial and sort through  these to find the one corresponding to the minimum. We ask whether we can avoid this.

Consider an observable of the form
\begin{equation}
H = \prod_{i = 1}^N \hat{n}_i^{p_i}, \label{eq:reduced_harmonic_observable}
\end{equation}
where $p_i \in \mathbb{N}$ and $\hat{n}_i= a^\dagger_i a_i$ is the number operator for the $i$th mode. It was shown that the maximum number of harmonics in the bosonic cost function generated by this observable is given by $R = \min \{p,n \}$, where $p = \sum_i p_i$ and $n$ is the number of bosons inserted into the interferometer~\cite{facelli2024exactgradientslinearoptics}. This result implies certain observables will have a simpler cost function, allowing for a more efficient evaluation of the gradient as only the first $2R$ terms of the generalised parameter shift rule in Eq.~\eqref{eq:parameter_shift_rule} are required. For some observables $R = 1$ and we can apply Rotosolve even if the system is bosonic. Unfortunately, the QUBO Hamiltonian of Eq.~\eqref{eq:qubit_hamiltonian} has $R = n$ giving rise to the maximum number of harmonics in general for a given number of photons. This can be seen because on the subspace of at most $n$ photons, we can use polynomial interpolation to represent the threshold operator $\Theta(\hat{n})$ as a polynomial of degree $n$ in the number operator (see appendix~\ref{app:numerical_methods}) as 
\begin{equation}
\Theta(\hat{n}) \equiv \sum_{k = 0}^n c_k \hat{n}^k.
\end{equation}
For example, if $n = 3$ then we can write $\Theta(\hat{n}) \equiv \frac{1}{6}\hat{n}(\hat{n}^2 - 6 \hat{n} + 11)$. Therefore, if we have $n$ photons in our interferometer then each quadratic term of the QUBO Hamiltonian will contribute a term of the form of Eq.~\eqref{eq:reduced_harmonic_observable} with $p = 2n$, so $R = n$.

\begin{figure}
\begin{center}
\includegraphics[width=0.49\columnwidth]{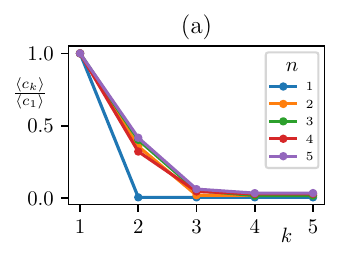}
\includegraphics[width=0.49\columnwidth]{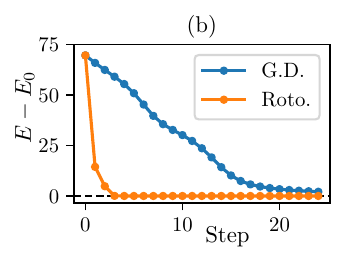}
\end{center}
\caption{(a) The ratio of the average Fourier coefficients $\langle c_k \rangle/\langle c_1 \rangle$ of the bosonic cost landscape of Eq.~\eqref{eq:bosonic_cost_landscape} for 20 randomly-chosen QUBO problems of size $N = 6$ for different boson numbers $n$. (b) An example of the cost function of a boson sampler over time using gradient descent (G.D.) or Rotosolve for a randomly-chosen QUBO problem of size $N = 6$ and $n = 4$ bosons.  \label{fig:bosonic_rotosolve}} 
\end{figure}

Despite this, it is seen numerically that the Fourier spectrum of the bosonic cost landscape for QUBO problems remains peaked around low frequencies, even as we scale up $n$. In Fig.~\ref{fig:bosonic_rotosolve}(a) we show the average magnitude of the Fourier coefficients of the cost function for 20 randomly chosen QUBO Hamiltonians versus the number of bosons in the system. We sample random symmetric QUBO matrices with integer matrix elements in the range $[-10,10]$ to insert into the cost function of Eq.~\eqref{eq:QUBO_cost}. The first harmonic is the dominant term, whilst the magnitude of the higher-order harmonics dies off quickly, even as we increase the number of bosons $n$ in the system. This suggests that Rotosolve, despite being designed for sinusoidal cost functions only, may work approximately for solving QUBO problems with boson sampling as the error will be small.

In Fig.~\ref{fig:bosonic_rotosolve}(b) we give an example of optimising the cost function for a QUBO problem size $N = 6$ with $n = 4$ photons inserted into a boson sampler. Despite having up to four harmonics, Rotosolve finds the minimum in a few steps compared to gradient descent which takes far longer. The improvement in performance over gradient descent is significant and could give boson sampling a faster and gradient-free way to get an approximate solution.

\section{Conclusion and discussion}

A key result of this paper was to investigate under what conditions we expect to see barren plateaus in linear optics, by providing both exact analytic results and numerics to confirm this. As we are free to vary the size of the interferometer and the number of particles independently, we do not always expect to see barren plateaus when scaling these variables. However, when both the number of modes and particles scales together, which is what one would typically define scaling up a linear optical quantum computer to mean, we do indeed see barren plateau scaling for both bosonic and fermionic linear optics, however the barren plateau decay for bosonic statistics is orders of magnitude faster than for fermionic statistics, implying that fermionic statistics is less susceptible.

Despite this, fermionic linear optics has some desirable features. Namely, a cost landscape which contains only a single harmonic in each parameter direction regardless of the size of the problem. This comes from the fact that fermionic phase shifters have only two distinct eigenvalues. Motivated by this, we have shown that one can gain a significant improvement in the performance of photonic variational quantum algorithms by replacing each parametrised phase shifter of a linear optical interferometer with a non-linear phase shifter, the dual-valued phase shifter (DVPS), which has two distinct eigenvalues. We provided three ways to construct this phase shifter, either by using non-linearities directly, measurement-induced non-linearities, or simulating fermion sampling with an entangled photonic resource state. The latter two designs require linear optics alone, allowing them to be constructed today. This allows one to exploit the benefits of fermionic cost landscapes using widely-available photonic linear optics.

We showed that this results in a simpler cost landscape with fewer local minima in a way that is independent of the choice of input state ansatzes, circuit layout, and the observable to minimise. Additionally, we showed that this allows us to bypass gradient descent entirely by using the gradient-free Rotosolve algorithm which has not been applied to interferometric systems until now. This results in a significant speed-up over the best known intereferometer-based photonic variational quantum algorithms and gains this advantage by exploiting quantum effects alone with no classical preprocessing. As a by-product of this, it implies that the quantum advantage of boson sampling does not result in a practical quantum advantage for solving problems with variational quantum algorithms, as it is outperformed by fermion sampling which is classically efficient to simulate. 

We conclude with three interesting directions of further investigation. First, the advantage of our method is the ability to mould the cost landscape into something less barren in a way that is independent of the problem, ansatz or circuit design. This work applies to interferometric systems whose parametrised unitaries are phase shifters, but whether this can be generalised to systems whose parametrised unitaries are more general, such as those found in other variational algorithms such as QAOA~\cite{farhi2014quantumapproximateoptimizationalgorithm,BLEKOS20241}, the variational quantum eigensolver~\cite{peruzzo2014variational,TILLY20221} or quantum machine learning~\cite{PERALGARCIA2024100619}, remains an open question. 

Secondly, in section~\ref{sec:barren_plateau} we derived the barren plateau scaling of linear optics, but omitted an analytic derivation for linear optics with dual-valued phase shifters. If we extend linear optics to include these non-linear phase shifters, then the set of generators of the system is extended. In this work we derived the barren plateau scaling using the representation theory of $\mathrm{U}(N)$ by focussing on the Lie group only, as this was sufficient to get exact results. However, tackling the non-linear DVPS may require us to use Lie algebraic methods by looking at the dynamical Lie algebra instead, which is an alternative way to study the barren plateau problem~\cite{fontanacharacterizing, ragone2024lie}. This may be more fruitful for this endeavour as we are looking at non-Gaussian unitaries.

Third, whilst we have shown that sampling bit strings from a boson sampler for the purpose of variational quantum algorithms does not have a practical advantage, it still has its \#P quantum advantage for sampling alone. Whether a boson sampler with all its phase shifters replaced by DVPSs retains this quantum advantage is unknown. This non-linear boson sampler reduces the number of barren plateaus and it is strongly believed that there is a link between a lack of barren plateaus and classical simulability~\cite{cerezo2024doesprovableabsencebarren}. If the quantum advantage of this non-linear boson sampler remains, then this could be a counter example to this. On the other hand, it has been shown that introducing non-linear elements into a boson sampler can increase its complexity~\cite{spagnolo2023non}. We leave these open problems to future work.

\acknowledgements
I thank Harry Bromley for helping me use Aegiq's Lightworks Python SDK~\cite{lightworks} for the numerical simulations of this work. I thank Callum W. Duncan for pointing out the two-eigenvalue generalisation beyond fermion sampling. I thank Pieter Kok for pointing out the martingale strategy to improve the non-linear DVPS and its link with the NS gate. Finally, I thank Scott Dufferwiel, Jiannis K. Pachos, Emma Heley, Alex Little, Maksym Sich and everyone mentioned above for valuable discussions.

\bibliographystyle{quantum}
\bibliography{refs.bib}

\clearpage
\onecolumngrid
\appendix

\section{Bosonic and fermionic linear optics \label{app:boson_fermion_sampling}} 

\subsection{Linear intereferometers}
Let us consider a set fermionic or bosonic creation and annihilation operators, $a_i^\dagger$ and $a_i$ respectively, acting on a Fock space $\mathcal{F}$, where $i$ labels the mode degree of freedom. These operators obey the algebra
\begin{equation}
[a_i,a_j]_\pm = [a_i^\dagger,a_j^\dagger]_\pm = 0, \quad [a_i,a_j^\dagger]_\pm = \delta_{ij},
\end{equation}
where $[A,B]_\pm = AB \pm BA$ is the anti-commutator for fermions $(+)$ and the commutator for bosons $(-)$. Given an $N$-mode system, multi-particle states are given by
\begin{equation}
|\mathbf{n}\rangle \equiv |n_1,n_2,\ldots,n_N\rangle = \prod_{i = 1}^N \frac{(a^\dagger_i)^{n_i}}{\sqrt{n_i!}} |0\rangle, \label{eq:number_state}
\end{equation}
where $n_i$ is the number of particles in the $i$th mode and $|0\rangle$ is the vacuum state. For bosons, the creation operators can be applied as many times as you like, so $n_i \in \mathbb{N}$. However, for fermions, the fermionic algebra implies that $(a_i^\dagger)^2 = 0$ hence $n_i \in \{ 0 , 1\} $, which is the Pauli exclusion principle. The set of states of this form are called number states or Fock states and form a canonical basis of the Fock space. The dimension of the subspaces $\mathcal{H}_\text{B}$ and $\mathcal{H}_\text{F} $ containing $n$ bosons or fermions and $N$ modes is respectively given by
\begin{equation}
\mathop{\mathrm{dim}}\mathcal{H}_\text{B} = \frac{(N+n-1)!}{(N-1)!n!},\quad \mathop{\mathrm{dim}} \mathcal{H}_\text{F} = \frac{N!}{(N-n)!n!}. \label{eq:fock_space_dimensions}
\end{equation}

In this study, we are interested in $N$-mode linear optical interferometers. These are devices constructed from a network of $N$ connected waveguides whose action on any input state is described by a unitary operator $U: \mathcal{F} \to \mathcal{F}$ acting linearly on the ladder operators in the Heisenberg picture as
\begin{equation}
U a_i^\dagger U^\dagger = \sum_{j = 1}^N u_{ji} a_j^\dagger, \label{eq:unitary}
\end{equation}
where $u \in \mathrm{U}(N)$ is an $N \times N$ unitary matrix which is in 1:1 correspondence with $U$. In fact, the unitary $U$ forms a reducible representation of the unitary matrix $u$ on the Fock space, where each irrep is an $n$-photon subspace. It is $u$ that is chosen, not $U$, when programming unitaries into an interferometers. We assume that the particles have identical internal state, such as frequency, polarisation and time bin, so the only degree of freedom available is the spatial degree of freedom of which waveguide the photons are on. From the Baker-Campbell-Hausdorff formula, this linear transformation implies that the unitaries are generated by quadratic Hamiltonians as $U = \exp(iHt)$, where
\begin{equation}
H = \sum_{i,j = 1}^N h_{ij} a_i^\dagger a_j.
\end{equation}
where the unitary matrix in Eq.~\eqref{eq:unitary} is given by exponentiating the single-particle Hamiltonian as $u = \exp(iht)$. Hence, a linear optical interferometer can be viewed as simulating time evolution under a non-interacting and particle conserving Hamiltonian.

For example, given a two-mode system with modes $a_0$ and $a_1$, a phase shifter that acts on the first mode is given by
\begin{equation}
U = \exp( - i \theta a_0^\dagger a_0 ) \quad \Leftrightarrow \quad u = \begin{pmatrix}
e^{-i \theta} & 0 \\ 0 & 1
\end{pmatrix},
\end{equation}
where $\theta \in [0,2 \pi)$ is the phase shift, whilst a beamsplitter is given by
\begin{equation}
U = \exp[ \theta (a^\dagger_0 a_1 - a_1^\dagger a_0) ] \quad \Leftrightarrow \quad u = \begin{pmatrix} \sqrt{T}  &  \sqrt{1-T} \\ -\sqrt{1-T} & \sqrt{T}  \end{pmatrix},
\end{equation}
where $T = \cos^2 \theta$ for $\theta \in [0,2 \pi)$ is the transmission of the beamsplitter. In both cases we have provided the Fock space unitary $U$ and the corresponding linear transformation $u$. 

In practice, these interferometers are constructed from a network of phase shifters and 50:50 beamsplitters only. There exist multiple universal interferometer layouts that can encode any $u \in \mathrm{U}(N)$, e.g.~\cite{PhysRevLett.73.58,Clements:16,Fldzhyan:20}, however this does not mean that the interferometer is a universal quantum computer, as only $U$ of the form in Eq.~\eqref{eq:unitary} can be generated and ones describing interactions between photons or particle non-conservation are not possible. For universal quantum computing with linear optics, a non-deterministic approached must be used~\cite{KLM,PhysRevA.65.012314,Kok_Lovett_2010}. 
 
\subsection{Sampling amplitudes}
Let us calculate the transition amplitude between the $n$-particle number states $|\mathbf{n}\rangle = |n_1,n_2,\ldots,n_N\rangle $ and $|\mathbf{m}\rangle = |m_1,m_2,\ldots,m_N\rangle$ after inserting $|\mathbf{n}\rangle$ into an $N$-mode linear interferometer, where $\sum_{i = 1}^N n_i = \sum_{i = 1}^N m_i = n$. We first rewrite these states in a slightly different way to how they are defined in Eq.~\eqref{eq:number_state} which makes calculation of the transition amplitudes easier. Let us define $\nu_i$ as the mode that the $i$th particle is on in the state $|\mathbf{n}\rangle$, for $i = 1,2,\ldots,n$, and similarly $\mu_i$ for $|\mathbf{m}\rangle$. Some of these indices may be identical if more than one particle is found on any of the modes. Then we can rewrite the number states as 
\begin{equation}
|\mathbf{n}\rangle = \frac{1}{\sqrt{\mathbf{n}!}} \prod_{i = 1}^n a_{\nu_i}^\dagger |0\rangle, \quad |\mathbf{m}\rangle = \frac{1}{\sqrt{\mathbf{m}!}} \prod_{i = 1}^n a_{\mu_i}^\dagger |0\rangle. \label{eq:number_state_2}
\end{equation}
where $\mathbf{n}! = \prod_{i = 1}^N n_i !$. We make sure the ladder operators are applied in the same order as in Eq.~\eqref{eq:number_state}, i.e., indices from largest to smallest from left to right, to avoid picking up any relative phases between the two ways of writing down the states. Let us apply the unitary $U$, whose action is defined in Eq.~\eqref{eq:unitary}, to $|\mathbf{n}\rangle$ which gives
\begin{equation}
\begin{aligned}
U |\mathbf{n}\rangle & = \frac{1}{\sqrt{\mathbf{n}!}} \prod_{i = 1}^n (U a_{\nu_i}^\dagger U^\dagger )|0\rangle \\ 
& = \frac{1}{\sqrt{\mathbf{n}!}} \prod_{i = 1}^n \left( \sum_{j = 1}^N u_{j \nu_i} a^\dagger_j \right) |0\rangle \\
& = \frac{1}{\sqrt{\mathbf{n}!}} \sum_{j_1 = 1}^N \sum_{j_2 = 1}^N \ldots \sum_{j_n = 1}^N u_{j_1 \nu_1}  u_{j_2 \nu_2}   \ldots u_{j_n \nu_n} a^\dagger_{j_1} a^\dagger_{j_2} \ldots a^\dagger_{j_n} |0\rangle
\end{aligned}
\end{equation}
Now if we take the inner product with $|\mathbf{m}\rangle$ we get
\begin{equation}
\langle \mathbf{m}| U |\mathbf{n}\rangle  = \frac{1}{\sqrt{\mathbf{n}!}} \sum_{j_1 = 1}^N \sum_{j_2 = 1}^N \ldots \sum_{j_n = 1}^N u_{j_1 \nu_1}  u_{j_2 \nu_2}   \ldots u_{j_n \nu_n} \langle \mathbf{m}| a^\dagger_{j_1} a^\dagger_{j_2} \ldots a^\dagger_{j_n} |0\rangle
\end{equation}
If we substitute in Eq.~\eqref{eq:number_state_2} and apply Wick's theorem we have
\begin{equation}
\begin{aligned}
\langle \mathbf{m}| a^\dagger_{j_1} a^\dagger_{j_2} \ldots a^\dagger_{j_n}|0\rangle 
& = \frac{1}{\sqrt{\mathbf{m}!}} \langle 0| a_{\mu_1} a_{\mu_2} \ldots a_{\mu_n} a^\dagger_{j_1} a^\dagger_{j_2} \ldots a^\dagger_{j_n}|0\rangle \\
& = \frac{1}{\sqrt{\mathbf{m}!}} \sum_{\sigma \in S_n} (\pm 1)^\sigma \prod_{i = 1}^n \delta_{\mu_{\sigma(i)},j_i}
\end{aligned}
\end{equation}
where Wick's theorem amounts to repeated use of the commutation relations and gives us a sum over all possible permutations, where $S_n$ is the permutation group of $n$ elements, and we additionally have a factor of $(\pm 1)^\sigma$ which arises from commuting operators past each and picking up a sign if they have fermionic anti-commutation relations. Substituting this back into the inner product yields 
\begin{equation}
\begin{aligned}
\langle \mathbf{m}| U |\mathbf{n}\rangle  & = \frac{1}{\sqrt{\mathbf{m}! \mathbf{n}! }} \sum_{j_1 = 1}^N \sum_{j_2 = 1}^N \ldots \sum_{j_n = 1}^N u_{j_1 \nu_1}  u_{j_2 \nu_2}   \ldots u_{j_n \nu_n} \sum_{\sigma \in S_n} (\pm 1)^\sigma \prod_{i = 1}^n \delta_{\mu_{\sigma(i)} j_i} \\
& =\frac{1}{\sqrt{\mathbf{m}! \mathbf{n}!}}  \sum_{\sigma \in S_n}  (\pm 1)^\sigma  \prod_{i = 1}^n \sum_{j_i = 1}^N u_{j_i \nu_i}    \delta_{\mu_{\sigma(i)} j_i} \\
& = \frac{1}{\sqrt{\mathbf{m}! \mathbf{n}! }} \sum_{ \sigma \in S_n} (\pm 1)^\sigma \prod_{i = 1}^n u_{\mu_{\sigma(i)} \nu_i} \\
& = \frac{1}{\sqrt{\mathbf{m}! \mathbf{n}!}} \begin{cases}
\operatorname{per} u[\mathbf{m}|\mathbf{n}] & \text{bosons ($+$)} \\
\det u[\mathbf{m}|\mathbf{n}] & \text{fermions ($-$)}
\end{cases},
\end{aligned} \label{eq:app_transition_amplitude}
\end{equation}
where $\operatorname{per}$ is the permanent, $\det$ is the determinant, and $u[\mathbf{m}|\mathbf{n}]$ is the matrix $u$ with the $i$th row index repeated $m_i$ times and the $j$th column index repeated $n_j$ times~\cite{10.1145/1993636.1993682,scheel2004permanents}. For example, 
\begin{equation}
u = \begin{pmatrix}
u_{11} & u_{12} & u_{13} \\
u_{21} & u_{22} & u_{23} \\
u_{31} & u_{32} & u_{33}
\end{pmatrix} \quad \Rightarrow \quad u[(3,1,0)|(1,1,2)] = \begin{pmatrix}
u_{11} & u_{12} & u_{13} & u_{13} \\
u_{11} & u_{12} & u_{13} & u_{13} \\
u_{11} & u_{12} & u_{13} & u_{13} \\
u_{21} & u_{22} & u_{23} & u_{23}
\end{pmatrix}.
\end{equation}
The process of inserting bosons/photons into a linear optical interferometer and sampling from the output states is known as boson sampling. As calculating the permanent is a \#P-hard problem classically, then simulating the output distribution of boson sampling affords a quantum advantage~\cite{10.1145/1993636.1993682}. On the other hand, the determinant can be calculated efficiently using an algorithm such as LU decomposition which has a complexity of $O(n^3)$ for an $n \times n$ matrix, hence simulation of fermions is classically efficient. This determinant is sometimes called the Slater determinant.

\subsection{Simulating fermions with single photons and linear optics \label{app:fermionic_simulation}}

The following reviews the results of Ref.~\cite{Matthews_2013}. We wish to simulate fermionic statistics with photons. As photons are bosons, we must use entangled states to simulate this. Suppose we have $n$ photons that we insert into $n$ separate interferometers, each with $N$ modes. Define the state
\begin{equation}
|\psi\rangle = \frac{1}{\sqrt{n!}}\sum_{\sigma \in S_n} (-1)^\sigma \prod_{\mu =1}^n  a^{ \dagger}_{ \mu \sigma(\mu) } |0\rangle, \label{eq:fermionic_photons}
\end{equation}
where $a_{\mu i} $ is the mode operator for the $i$th mode of the $\mu$th interferometer. These modes obey the commutation relations
\begin{equation}
[a_{\mu i}, a_{\nu j} ] = [a_{\mu i}^\dagger, a_{ \nu j}^\dagger ] = 0, \quad [a_{\mu i}, a_{\nu j}^\dagger ] = \delta_{\mu \nu}\delta_{ij},
\end{equation} 
so operators from different interferometers will always commute. The state defined in  Eq.~\eqref{eq:fermionic_photons} consists of a single photon in one of the first $n$ modes of each interferometer. This generates an entangled state. Now we pass this state through the system. Each photon enters a separate interferometer, so there will be no interference between photons and the only effects are due to the entanglement and interference of photons with themselves. 

We now define the unitary $U$ which acts on these modes as
\begin{equation}
U a^{ \dagger}_{\mu i} U^\dagger = \sum_{j = 1}^N u_{ji} a^{ \dagger }_{\mu j}, \quad \forall \mu.
\end{equation}
This applies a linear transformation to the mode degree of freedom $i$ and not the interferometer degree of freedom $\mu$ which remains unchanged. Therefore, this unitary $U$ describes describes the process of applying the same unitary operation to each photon within its respective interferometer. If we apply this to our input state we have
\begin{equation}
\begin{aligned}
U|\psi\rangle & = \frac{1}{\sqrt{n!}} \sum_{\sigma \in S_n} (-1)^\sigma \prod_{\mu=1}^n U a^{ \dagger}_{\mu \sigma(\mu)} U^\dagger |0\rangle \\
& = \frac{1}{\sqrt{n!}} \sum_{\sigma \in S_n} (-1)^\sigma \prod_{\mu=1}^n \sum_{i = 1}^N u_{i \sigma(\mu)} a^{ \dagger}_{\mu i} |0\rangle \\
& = \frac{1}{\sqrt{n!}} \sum_{\sigma \in S_n} (-1)^\sigma \sum_{i_1 = 1}^N \sum_{i_2 = 1}^N \ldots \sum_{i_n = 1}^N u_{i_1 \sigma(1)} u_{i_2 \sigma(2)} \ldots u_{i_n \sigma(n)} a^{ \dagger}_{1 i_1} a^{ \dagger}_{2 i_2} \ldots a^{ \dagger}_{n i_n} |0\rangle .
\end{aligned}
\end{equation}
Now introduce an $n$-photon state, where each interferometer has a single photon in it, given by
\begin{equation}
|\phi_\mathbf{k}\rangle = \prod_{\mu = 1}^n   a^{ \dagger}_{\mu k_\mu} |0\rangle ,
\end{equation}
which is the state for which the $\mu$th interferometer has a single photon in the $k_\mu$th mode. The amplitude for the input state to be found in the state $|\phi_\mathbf{k}\rangle$ is given by
\begin{equation}
\begin{aligned}
\langle \phi_\mathbf{k} | U |\psi\rangle & =\frac{1}{\sqrt{n!}} \sum_{\sigma \in S_n} (-1)^\sigma \sum_{i_1 = 1}^N \sum_{i_2 = 1}^N \ldots \sum_{i_n = 1}^N u_{i_1 \sigma(1)} u_{i_2 \sigma(2)} \ldots u_{i_n \sigma(n)} \langle \phi_\mathbf{k}| a^{ \dagger}_{1 i_1} a^{ \dagger}_{2 i_2} \ldots a^{ \dagger}_{n i_n} |0\rangle  \\
& = \frac{1}{\sqrt{n!}} \sum_{\sigma \in S_n} (-1)^\sigma \sum_{i_1 = 1}^N \sum_{i_2 = 1}^N \ldots \sum_{i_n = 1}^N u_{i_1 \sigma(1)} u_{i_2 \sigma(2)} \ldots u_{i_n \sigma(n)} \delta_{k_1 i_1} \delta_{k_2 i_2} \ldots \delta_{k_n i_n} \\
& = \frac{1}{\sqrt{n!}} \sum_{\sigma \in S_n} (-1)^\sigma  u_{k_1 \sigma(1)} u_{k_2 \sigma(2)} \ldots u_{k_n \sigma(n)} \\
& = \frac{1}{\sqrt{n!}} \det u_\mathbf{k},
\end{aligned}
\end{equation}
where $u_\mathbf{k}$ is an $n \times n$ matrix with components
\begin{equation}
[u_{\mathbf{k}}]_{ij} = u_{k_i j}, 
\end{equation}
where $u$ is the original unitary matrix. Suppose we ask for the probability that two interferometers have a photon on the same mode, say interferometers $\mu$ and $\nu$, in which case $k_\mu = k_\nu$. We see that this amplitude vanishes because the matrix $u_\mathbf{k}$ would have two identical rows and hence its determinant is zero.

Suppose instead we ask: what is the probability to get the distribution of photons $\mathbf{m} = (m_1,m_2,\ldots,m_N)$, where $m_i = 1$ if there is a photon in the $i$th mode \textit{somewhere} across the $N$ chips, in other words what is the probability that the superimposed output of all $n$ interferometers is $\mathbf{m}$? This means we now ignore the interferometer degree of freedom $\mu$ so multiple output distributions will be equivalent. In this case we sum up the probabilities for all permutations of $\mathbf{k}$ that yield this particular $\mathbf{m}$. We have
\begin{equation}
\begin{aligned}
P(\mathbf{m})  =  \frac{1}{n!}\sum_{\sigma \in S_n}  \left | \det u_{\sigma(\mathbf{k})} \right|^2  = \frac{1}{n!} \sum_{\sigma \in S_n}  \left | (-1)^\sigma \det u_{\mathbf{k}} \right|^2  = \left| \det u_{\mathbf{k}} \right|^2,
\end{aligned} 
\end{equation}
however we have
\begin{equation}
\left|\det u_{\mathbf{k}} \right| = \left|\det u[\mathbf{m}|\mathbf{n}]\right|
\end{equation}
which is because $k_\mu$ are simply the indices for which $m_{k_\mu} = 1$, which therefore allows us to rewrite this in terms of the matrix $u[\mathbf{m}|\mathbf{n}]$ as defined in Eq.~\eqref{eq:app_transition_amplitude} in the fermionic case. Therefore, the entangled photonic state gives rise to fermionic statistics. 

The success probability of producing the fermionic resource state of Eq.~\eqref{eq:fermionic_photons} with linear optics alone is upper bounded by $1/9$. This is because for the smallest non-trivial example of $n = 2$ fermions a single CNOT gate is required to produce the state which has a probability of $1/9$ with linear optics and postselection, see Sec.~\ref{sec:fermion_sampling} for this example. For larger $n$, more CNOTs are required, reducing the probability. In the supplementary material of Ref.~\cite{Matthews_2013} a three-fermion example is shown.

\section{Cost landscapes of linear optics \label{app:cost_landscape}}

\subsection{Derivation of the cost landscape \label{app:cost_landscape_proof}}

As discussed above in Sec.~\ref{sec:vqe} the interferometer is constructed from an array of parametrised phase shifters and fixed 50:50 beamsplitters. In this work, we introduce a \textit{non-linear} phase shifter which acts slightly differently to how standard phase shifters act, which we refer to a dual-valued phase shifter (DVPS), and we compare the the performance of variational quantum algorithms which use interferometers constructed from these types. The two types of phase shifters are described by the unitaries
\begin{equation}
U(x) = \begin{cases}
\exp(i \hat{n} x) & \text{standard} \\
\exp(i \hat{q} x) & \text{DVPS}
\end{cases} \label{eq:phase_shifters}
\end{equation}
where $\hat{n} = a^\dagger a$ is the number operator for the mode that the phase shifter acts upon, whilst the DVPS is generated by a new Hermitian operator $\hat{q}$ which is diagonal in the number basis, just like the number operator is, but instead has only two distinct eigenvalues. In other words, it acts as 
\begin{equation}
\hat{q}|n\rangle = q(n)|n\rangle,
\end{equation} 
where $q: \mathbb{N} \to \{ a, b \} $.

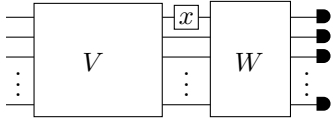
\begin{figure}
\begin{center}

\begin{tikzpicture}[x=0.75pt,y=0.75pt,yscale=-1,xscale=1]

\draw   (64,49) -- (128,49) -- (128,106) -- (64,106) -- cycle ;
\draw   (152,48) -- (192,48) -- (192,106) -- (152,106) -- cycle ;
\draw   (134,50) -- (146,50) -- (146,62) -- (134,62) -- cycle ;
\draw    (128,56) -- (134,56) ;
\draw    (146,56) -- (152,56) ;
\draw    (128,66) -- (152,66) ;
\draw    (128,76) -- (152,76) ;
\draw    (192,56) -- (206,56) ;
\draw    (192,66) -- (206,66) ;
\draw    (192,76) -- (206,76) ;
\draw  [fill={rgb, 255:red, 0; green, 0; blue, 0 }  ,fill opacity=1, line width=1.5] (206,53) -- (208.5,53) .. controls (209.88,53) and (211,54.12) .. (211,55.5) .. controls (211,56.88) and (209.88,58) .. (208.5,58) -- (206,58) -- cycle ;
\draw [fill={rgb, 255:red, 0; green, 0; blue, 0 }  ,fill opacity=1, line width=1.5]  (206,63) -- (208.5,63) .. controls (209.88,63) and (211,64.12) .. (211,65.5) .. controls (211,66.88) and (209.88,68) .. (208.5,68) -- (206,68) -- cycle ;
\draw [fill={rgb, 255:red, 0; green, 0; blue, 0 }  ,fill opacity=1, line width=1.5]  (206,73) -- (208.5,73) .. controls (209.88,73) and (211,74.12) .. (211,75.5) .. controls (211,76.88) and (209.88,78) .. (208.5,78) -- (206,78) -- cycle ;
\draw    (50,56) -- (64,56) ;
\draw    (50,66) -- (64,66) ;
\draw    (50,76) -- (64,76) ;
\draw    (50,100) -- (64,100) ;
\draw    (128,100) -- (152,100) ;
\draw    (192,100) -- (206,100) ;
\draw [fill={rgb, 255:red, 0; green, 0; blue, 0 }  ,fill opacity=1, line width=1.5]  (206,97) -- (208.5,97) .. controls (209.88,97) and (211,98.12) .. (211,99.5) .. controls (211,100.88) and (209.88,102) .. (208.5,102) -- (206,102) -- cycle ;

\draw (87,72.4) node [anchor=north west][inner sep=0.75pt]    {$V$};
\draw (163,72.4) node [anchor=north west][inner sep=0.75pt]    {$W$};
\draw (135,52.4) node [anchor=north west][inner sep=0.75pt]    {$x$};
\draw (137,74.4) node [anchor=north west][inner sep=0.75pt]    {$\vdots $};
\draw (197,74.4) node [anchor=north west][inner sep=0.75pt]    {$\vdots $};
\draw (53,74.4) node [anchor=north west][inner sep=0.75pt]    {$\vdots $};
\end{tikzpicture}

\end{center}
\caption{When we vary a single phase shifter in the interferometer, we effectively have an interferometer consisting of a single phase shifter with phase $x$ sandwiched between two (non-universal in general) interferometers encoding the fixed unitaries $V$ and $W$. \label{fig:varying_phase_shifter}}
\end{figure}

We now investigate how the choice of $\hat{n}$ or $\hat{q}$ modifies the cost landscape. This closely follows the calculatios of Ref.~\cite{gan2022fock}. Suppose we  introduce the Hamiltonian $H$ whose (possibly degenerate) ground state corresponds to the solution to a given problem. We introduce the cost function
\begin{equation}
E(\boldsymbol{\theta}) = \langle \psi(\boldsymbol{\theta})| H |\psi(\boldsymbol{\theta})\rangle,
\end{equation}
where $|\psi(\boldsymbol{\theta}) \rangle = U(\boldsymbol{\theta}) |\psi_\text{in}\rangle$ is the output of the interferometer described by the parametrised unitary $U(\boldsymbol{\theta})$. Now suppose we vary one of the parameters of the interferometer, say the $j$th one, whilst keeping the others fixed, then the cost function can be viewed as the function
\begin{equation}
f(x) :=  E(\theta_j = x) = \langle \psi_\mathrm{in} | U^\dagger(x) H U(x) |\psi_\mathrm{in}\rangle.
\end{equation}
Each variable of the unitary controls a single phase shifter in the interferometer, so here the unitary $U$ takes the form $
U(x) = WS(x)V$, where $S(x)$ is the phase shifter from Eq.~\eqref{eq:phase_shifters}; whilst $V$ and $W$ are the unitaries describing the remainder of the interferometer before and after this phase shifter respectively as shown in Fig.~\ref{fig:varying_phase_shifter}. 

As the unitaries conserve particle number, we can restrict ourselves to the $n$-particle subspace. Let us consider two $n$-particle states $|\mathbf{n}\rangle$ and $|\mathbf{m}\rangle$. The matrix elements on this subspace are given by
\begin{equation}
\begin{aligned}
U_{\mathbf{m} \mathbf{n}} \equiv \langle \mathbf{m}| U |\mathbf{n}\rangle   & = \sum_\mathbf{p,q} W_\mathbf{mp} S_\mathbf{pq}(x) V_{\mathbf{qn}} \\
& =  \sum_\mathbf{p,q} W_\mathbf{mp} V_\mathbf{qn} e^{i p_j x} \delta_\mathbf{pq}\\
& =  \sum_\mathbf{p} W_\mathbf{mp} V_\mathbf{pn} e^{i p_j x},
\end{aligned}
\end{equation}
where the sums are over the $n$-particle Fock number basis. 

Suppose we took our initial state as $|\psi_\mathrm{in}\rangle = |\mathbf{n}\rangle$ and pass it through the interferometer, the expectation value of the cost function gives
\begin{equation}
f(x)  = \langle \mathbf{n} |U^\dagger H U |\mathbf{n} \rangle
= \sum_\mathbf{p,q} U^\dagger_\mathbf{np} H_\mathbf{pq} U_\mathbf{qn}.
\end{equation}
We can safely assume that the Hamiltonian is diagonal in the Fock basis as $H_{\mathbf{pq}} = E_\mathbf{p} \delta_{\mathbf{pq}}$. Alternatively, we could diagonalise $H$ to this form with a unitary which we absorb into the definition of $U$. We can also encode the use of threshold detectors, which are unable to count the number of photons in each mode, by taking the eigenvalues of the Hamiltonian to be identical for all states that map to the same bit string under threshold detection, i.e., $E_\mathbf{p} = E_\mathbf{q}$ if $\mathbf{p} = \Theta(\mathbf{q})$ where $\Theta$ is the Heaviside step function that acts on each element of the vector. We could also map to bit strings by using parity photo-detectors that can detect whether there was an even or odd number of photons too~\cite{bradler2021certain} and an alternative parity encoding on the Hamiltonian is used be used for this case. 

 Using this and substituting in the matrix elements of $U$ we have the cost function
\begin{equation}
\begin{aligned}
f(x) & =\sum_\mathbf{p,q} \left( \sum_\mathbf{k} W^*_\mathbf{pk} V^*_\mathbf{kn} e^{-i k_j x}\right) E_\mathbf{p} \delta_\mathbf{pq} \left(\sum_\mathbf{l} W_\mathbf{ql} V_\mathbf{ln} e^{i l_j x} \right) \\
& = \sum_\mathbf{k,l}  \left(\sum_\mathbf{p} E_\mathbf{p} W^*_\mathbf{pk} V^*_\mathbf{kn} W_\mathbf{pl} V_\mathbf{ln} \right) e^{i(l_j-k_j)x} \\
& \equiv \sum_\mathbf{k,l} a_\mathbf{kl} e^{i(l_j-k_j)x},
\end{aligned}
\end{equation}
This is beginning to look like a Fourier series. Let us define the frequency $p = l_j - k_j$. If we work with standard phase shifters then on the $n$-particle subspace the set of frequencies is given by $p \in \{ -n,\ldots,n \}$. This is because $l_j$ and $k_j$, which are both the possible eigenvalues of the number operator, can take all integer values from $0$ to a maximum of $n$ as we sum over the number states, hence $p = l_j - k_j$ takes all values from $-n$ to $n$.

Now let us reorder the sum by combining all the coefficients of each exponential with the same frequency $p$ to give us
\begin{equation}
f(x) = \sum_{p = -n}^n c_p e^{i p x}, \quad c_p = \sum_{\substack{ \mathbf{k,l} \\ l_j - k_j = p}} a_\mathbf{kl}, \label{eq:fourier_truncated}
\end{equation}
which is our final result. This result was originally from Ref.~\cite{gan2022fock}.  This means that if we vary a single parameter of the cost function, the resultant function takes the general form of a Fourier series with a maximum frequency of $n$, where $n$ is the number of bosons inserted into the interferometer.

\begin{figure}
\begin{center}
\includegraphics[scale=0.75]{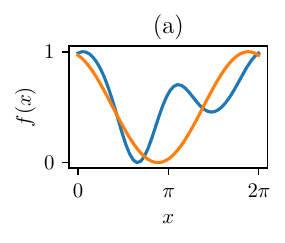}
\includegraphics[scale=0.75]{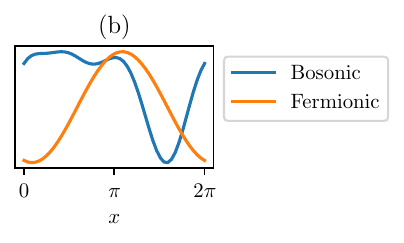}
\caption{The normalised cost function for a random Hamiltonian $H$ if we use bosonic or fermionic statistics as we vary a single phase shifter $x$ in the interferometer. We see that fermionic statistics can remove or reduce (a) local minima and (b) barren plateaus . \label{fig:cost_function_examples}}
\end{center}
\end{figure}

If instead we work with dual-valued phase shifters, then the only possible frequencies are given by $p \in \{0,a-b,b-a\}$. This is because the eigenvalues $l_j$ and $k_j$ can only take values $a$ or $b$. The actual values of the eigenvalues we assign each Fock state is not important, only that the Fock states are eigenstates with two possible eigenvalues. This means that we only have a single frequency in our resultant Fourier series, so Eq.~\eqref{eq:fourier_truncated} is truncated down further to
\begin{equation}
f(x) = A \sin(\omega x - \phi) + B, \label{eq:two_eigenvalues_cost_landscape}
\end{equation}
where $\omega = |a-b|$, and $A$, $B$ and $\phi$ are constants determined by the rest of the parameters in the interferometer that control $V$ and $W$ and the observable $H$. Similarly, if we worked with standard phase shifters and \textit{fermionic} states, for which $\hat{n} = 0,1$ automatically, we would obtain the same cost landscape for $\omega = 1$ without the need to use $\hat{q}$ explicitly. In Fig.~\ref{fig:cost_function_examples} we show how transforming from bosonic to fermionic statistics removes local minima and barren plateaus for the same cost function Hamiltonian and unitaries $V$ and $W$.

\subsection{Solving for the minima \label{app:stationary_points}}

In this appendix we solve for the stationary points of the bosonic cost landscape of Eq.~\eqref{eq:fourier_truncated} which forms the basis of the Optimal Interpolation-based Coordinate Descent (OICD)~\cite{lai2025optimalinterpolationbasedcoordinatedescent} algorithm that generalises Rotosolve to trigonometric cost functions with multiple harmonics. This closely follows the original works of Refs.~\cite{lai2025optimalinterpolationbasedcoordinatedescent,boyd2006computing}. 

Firstly, we can obtain the Fourier coefficients of an unknown trigonometric cost function of the form of Eq.~\eqref{eq:fourier_truncated} given a finite set of samples. Let us take $2n+1$ samples of the cost function at the set of equally-spaced points $x_j = 2j \pi /(2n+1)$ as
\begin{equation}
f(x_j) = \sum_{k = -n}^n c_k e^{ikx_j},
\end{equation}
for $j = 1,2, \ldots, 2n+1$. This takes the form of a discrete Fourier transform. By perfoming an inverse discrete Fourier transform we arrive at
\begin{equation}
c_k = \frac{1}{2n+1} \sum_{j = -n}^n f(x_j) e^{-ikx_j}
\end{equation}
which gives us the coefficients.

After solving for the Fourier coefficients $\{ c_k \}$ the exact form of the Fourier series is known and the the minima can be solved for. We have the gradient
\begin{equation}
f'(x) = \sum_{k = -n}^n d_k e^{ikx},
\end{equation}
where $d_k = ikc_k$, so the stationary points of the cost function $f(x)$ are given by the roots of a trigonometric polynomial as $f'(x) = 0$ which we solve for. Let us extend the domain of $f'(x)$ by defining the Laurent polynomial
\begin{equation}
p(z) = \sum_{k = -n}^n d_k z^k.
\end{equation}
The restriction of $p(z)$ to the unit circle returns the original trigonometric polynomial as $f'(x) = p(e^{ix})$. The roots of of $p(z)$ such that $|z| = 1$ correspond to the roots of $f'(x)$, where the relationship is given by $x = \mathrm{arg}(z)$. In order to solve for the roots, define a $2n$-degree polynomial $q(z) = z^n p(z)$. This polynomial has the same roots as $p(z)$ and can be solved for using simple root finding numerical methods. By the fundamental theorem of algebra, the polynomial $q(z)$ will have $2n$ roots that lie in the complex plane and a subset of these may lie on the unit circle corresponding to the roots of the original problem. In Fig.~\ref{fig:fourier_roots} we show an example of this in action. To find the global minima we sort through the list of stationary points on the unit circle to find the stationary point of $f(x)$ with the smallest cost.
 
This also upper bounds the number of minima of $f(x)$. Not all of the roots of $q(z)$ will lie on the unit circle in general, as seen in Fig.~\ref{fig:fourier_roots} for example, so all we can say is that there are at most $2n$ roots that lie on the unit circle and therefore at most $2n$ stationary points of $f(x)$. As the number of minima and maxima are equal, so the number of minima is half of this giving us the upper bound on the number of local minima of $n$.

\begin{figure}[t]
\begin{center}
\includegraphics[scale=0.5]{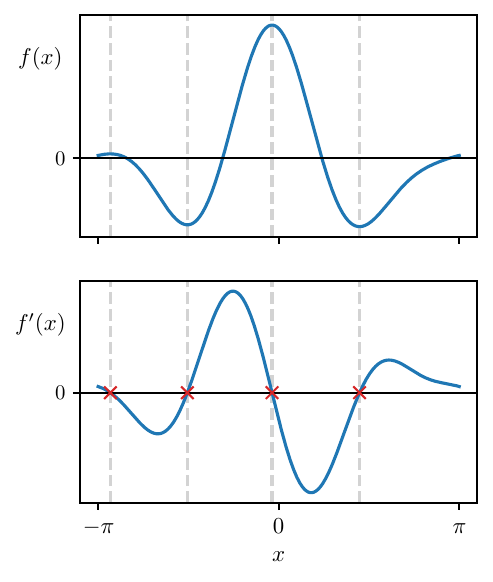}
\includegraphics[scale=0.75]{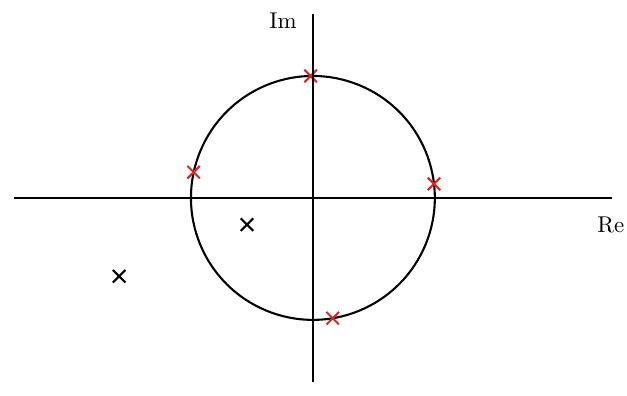}
\end{center}
\caption{The stationary points of a trigonometric polynomial $f(x)$ of order $n$ can be found by solving for the roots of a polynomial whose coefficients are equal to the Fourier coefficients of the derivative $f'(x)$. Roots $z$ of the polynomial that lie on the unit circle in the complex plane, shown by the red crosses, correspond to roots of the derivative and hence stationary points via $x = \mathrm{arg}(z)$. Other roots, shown by black crosses, do not correspond to stationary points. \label{fig:fourier_roots}}
\end{figure}

\section{Derivations for the dual-valued phase shifter}

\subsection{Deterministic design with non-linear optics \label{app:non_linear_DVPS}}

\begin{figure}[h]
\begin{center}

\tikzset{every picture/.style={line width=0.75pt}} 

\begin{tikzpicture}[x=0.75pt,y=0.75pt,yscale=-1,xscale=1]

\draw    (154,28) -- (134,28) ;
\draw    (154,44) -- (134,44) ;
\draw    (210,28) -- (170,28) ;
\draw    (182,44) -- (170,44) ;
\draw    (182,60) -- (134,60) ;
\draw   (182,42) -- (198,42) -- (198,62) -- (182,62) -- cycle ;
\draw    (266,44) -- (226,44) ;
\draw  [dash pattern={on 1.5pt off 1.5pt},color={rgb, 255:red, 155; green, 155; blue, 155}] (144,18) -- (256,18) -- (256,66) -- (144,66) -- cycle ;
\draw  [fill={rgb, 255:red, 0; green, 0; blue, 0 }  ,fill opacity=1 ] (266,40) -- (270,40) .. controls (272.21,40) and (274,41.79) .. (274,44) .. controls (274,46.21) and (272.21,48) .. (270,48) -- (266,48) -- cycle ;
\draw  [fill={rgb, 255:red, 245; green, 166; blue, 35 }  ,fill opacity=1 ] (268.85,21.07) -- (270.62,24.65) -- (274.56,25.22) -- (271.71,28) -- (272.38,31.93) -- (268.85,30.07) -- (265.33,31.93) -- (266,28) -- (263.15,25.22) -- (267.09,24.65) -- cycle ;
\draw    (210,44) -- (198,44) ;
\draw    (226,28) -- (210,44) ;
\draw    (226,44) -- (210,28) ;
\draw   [stealth-] (266,60) -- (198,60) ;
\draw    (170,28) -- (154,44) ;
\draw    (170,44) -- (154,28) ;
\draw    (76,120) -- (88,120) ;
\draw   (88,100) -- (128,100) -- (128,140) -- (88,140) -- cycle ;
\draw    (140,120) -- (128,120) ;
\draw    (76,136) -- (88,136) ;
\draw    (128,104) -- (138,104) ;
\draw    (88,104) -- (76,104) ;
\draw  [fill={rgb, 255:red, 0; green, 0; blue, 0 }  ,fill opacity=1 ] (138.33,100) -- (142.67,100) .. controls (145.06,100) and (147,101.9) .. (147,104.25) .. controls (147,106.6) and (145.06,108.5) .. (142.67,108.5) -- (138.33,108.5) -- cycle ;
\draw    (96,140) -- (108,140) ;
\draw    (170,136) -- (128,136) ;
\draw    (164,120) -- (170,120) ;
\draw   (170,100.5) -- (210,100.5) -- (210,140.5) -- (170,140.5) -- cycle ;
\draw    (222,120.5) -- (210,120.5) ;
\draw    (210,104.5) -- (220,104.5) ;
\draw    (170,104.5) -- (158,104.5) ;
\draw  [fill={rgb, 255:red, 0; green, 0; blue, 0 }  ,fill opacity=1 ] (220.33,100) -- (224.67,100) .. controls (227.06,100) and (229,101.9) .. (229,104.25) .. controls (229,106.6) and (227.06,108.5) .. (224.67,108.5) -- (220.33,108.5) -- cycle ;
\draw    (178,140.5) -- (190,140.5) ;
\draw    (140,120) -- (158,104.5) ;
\draw    (252,136) -- (210,136) ;
\draw    (246,120) -- (252,120) ;
\draw   (252,100.5) -- (292,100.5) -- (292,140.5) -- (252,140.5) -- cycle ;
\draw  [dash pattern={on 1pt off 1pt}]  (292,120) -- (302,120) ;
\draw  [dash pattern={on 1pt off 1pt}]  (292,104.5) -- (302,104.5) ;
\draw    (252,104.5) -- (240,104.5) ;
\draw    (260,140.5) -- (272,140.5) ;
\draw  [dash pattern={on 1pt off 1pt}]  (292,136) -- (302,136) ;
\draw    (222,120.5) -- (240,104.5) ;
\draw   (238,22) -- (252,22) -- (252,34) -- (238,34) -- cycle ;
\draw    (238,28) -- (226,28) ;
\draw   (266,28) -- (252,28) ;

\draw (185,48.4) node [anchor=north west][inner sep=0.75pt,scale=0.9]    {$\pi $};
\draw (157,44.4) node [anchor=north west][inner sep=0.75pt,scale=0.9]    {$x$};
\draw (208,47) node [anchor=north west][inner sep=0.75pt,scale=0.7]  [align=left] {50:50};
\draw (61.22,52.4) node [anchor=north west][inner sep=0.75pt,scale=0.9]    {$\text{Logical } \quad |n\rangle $};
\draw (185,1.9) node [anchor=north west][inner sep=0.75pt,scale=0.8]    {$V( x)$};
\draw (267,54) node [anchor=north west][inner sep=0.75pt,scale=0.8]    {$U_{\text{DVPS}}( x) |n\rangle $};
\draw (273,30) node [anchor=north west][inner sep=0.75pt,scale=0.8]   [align=left] {``$\displaystyle |1,0\rangle "$};
\draw (33,3) node [anchor=north west][inner sep=0.75pt]   [align=left] {(a)};
\draw (55,15.51) node [anchor=north west][inner sep=0.75pt,scale=0.9]    {$\text{Ancillary}\begin{cases}
|1\rangle  & \\
|0\rangle  & 
\end{cases}$};
\draw (33,75) node [anchor=north west][inner sep=0.75pt]   [align=left] {(b)};
\draw (58,112.4) node [anchor=north west][inner sep=0.75pt,scale=0.9]    {$|0\rangle $};
\draw (58,96.4) node [anchor=north west][inner sep=0.75pt,scale=0.9]    {$|1\rangle $};
\draw (58,128.4) node [anchor=north west][inner sep=0.75pt,scale=0.9]    {$|n\rangle $};
\draw (94,113) node [anchor=north west][inner sep=0.75pt,scale=0.9]    {$V( x)$};
\draw (173,113) node [anchor=north west][inner sep=0.75pt,scale=0.9]    {$V( 2x)$};
\draw (147,112.4) node [anchor=north west][inner sep=0.75pt,scale=0.9]    {$|0\rangle $};
\draw (255,113) node [anchor=north west][inner sep=0.75pt,scale=0.9]    {$V( 4x)$};
\draw (229,112.4) node [anchor=north west][inner sep=0.75pt,scale=0.9]    {$|0\rangle $};
\draw (241,24.4) node [anchor=north west,scale=0.8][inner sep=0.75pt]  {$x$};

\end{tikzpicture}

\caption{(a) A non-deterministic DVPS. (b) A repeat-until-success variant. \label{fig:appendix_DVPS}}
\end{center}
\end{figure}
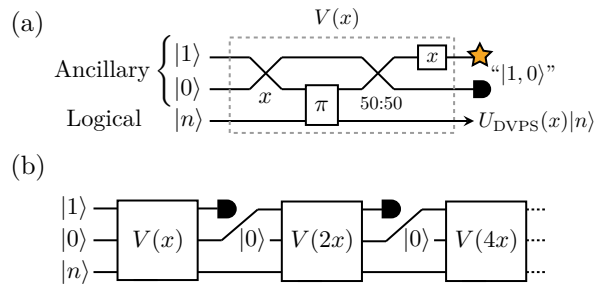

A possible realisation of the dual-valued phase shifter (DVPS) is to take 
\begin{equation}
U_\text{DVPS}(x) = e^{i \hat{q} x}, \quad \hat{q} = \frac{1}{2}\left( 1 - \hat{\pi} \right),
\end{equation}
where $\hat{\pi} = \exp(i\pi \hat{n})$ is the parity operator, equivalent to a $\pi$-phase shifter, and $\hat{n} = a^\dagger a$ is the number operator. This operator acts as
\begin{equation}
e^{i\hat{q}x} |n\rangle = e^{iq(n)x} |n\rangle, \quad q(n) = \frac{1}{2}(1 - (-1)^n),
\end{equation}
where $|n\rangle$ is a single-mode number state of $n$ photons. The goal is to construct this unitary.

Consider the circuit layout as given in Fig.~\ref{fig:appendix_DVPS} which consists of three modes $a_0$, $a_1$ and $a_2$ labelled from top to bottom, where $a_0$ and $a_1$ are ancillary modes and $a_2$ is the logical mode. The beamsplitters and phase shifters act on the space of modes linearly according to Eq.~\eqref{eq:unitary} as
\begin{equation}
u_\text{BS}(x) = \begin{pmatrix}
\cos \frac{x}{2} & -i \sin \frac{x}{2} & 0 \\ -i \sin \frac{x}{2} & \cos \frac{x}{2} & 0 \\ 0 & 0 & 1
\end{pmatrix}, \quad u_{\text{50:50}} = \begin{pmatrix}
\frac{1}{\sqrt{2}} & \frac{1}{\sqrt{2}} & 0 \\ \frac{1}{\sqrt{2}} & - \frac{1}{\sqrt{2}} & 0 \\ 0 & 0 & 1 
\end{pmatrix}, \quad u_\text{PS}(\theta) = \begin{pmatrix}
e^{i\theta} & 0 & 0 \\ 0 & 1 & 0 \\ 0 & 0 & 1
\end{pmatrix},
\end{equation}
whilst the cross Kerr non-linearity acts as $U_\mathrm{K} = e^{i \phi \hat{n}_1 \hat{n}_2}$ and has no matrix representation on the space of modes due to it being non-linear.

If we take the input state as $|\psi_\text{in}\rangle = |1,0,n\rangle =  a_0^\dagger (a^\dagger_2)^n |0\rangle/\sqrt{n!} $
then using the linear transformation rule from Eq.~\eqref{eq:unitary} for the ladder operators we see the first phase shifter acts on the top mode to give us
\begin{equation}
|\psi_1 \rangle = e^{i\theta } |1,0,n\rangle.
\end{equation}
The tuneable beamsplitter mixes ancillary modes only to give us
\begin{equation}
|\psi_2\rangle = e^{i \theta } \left(  \cos \frac{ x}{2}  |1,0\rangle - i \sin \frac{ x}{2}  |0,1\rangle \right) |n\rangle
\end{equation}
 The cross Kerr non-linearity acts between $1$ and $2$ and applies a phase of $e^{in \phi}$ only if mode $1$ has a photon in it, so we get
\begin{equation}
|\psi_3\rangle  =  e^{i \theta} \left( \cos \frac{ x}{2}   |1,0 \rangle - i e^{i n \phi} \sin  \frac{ x}{2}  |0,1\rangle \right) |n\rangle.
\end{equation}
The second beamsplitter mixes modes $0$ and $1$ only to give
\begin{equation}
\begin{aligned}
|\psi_4\rangle & = \frac{e^{i\theta}}{\sqrt{2}} \left[ \cos \frac{ x}{2}   \left( |1,0\rangle + |0,1\rangle \right) - i e^{i n \phi} \sin  \frac{ x}{2}  \left( |1,0\rangle - |0,1 \rangle \right) \right] |n\rangle \\
& = \frac{e^{i \theta}}{\sqrt{2}} \left[ \left( \cos  \frac{ x}{2}  - i e^{in\phi} \sin  \frac{ x}{2}  \right) |1,0 \rangle +  \left( \cos \frac{ x}{2}  + i e^{i n \phi} \sin  \frac{ x}{2}  \right) |0,1\rangle \right] |n\rangle.
\end{aligned}
\end{equation}
If we postselect on the states for which the top two modes are in the state $|1,0\rangle$, then we have the total output state $P_{10}| \psi_4 \rangle/\sqrt{\langle \psi_4 | P_{10} |\psi_4 \rangle}$, where $P_{10} = |1,0\rangle \langle 1,0| \otimes \mathbb{I}$ is the projector. We have
\begin{equation}
\begin{aligned}
\langle \psi_4 | P_{10} |\psi_4 \rangle & = \frac{1}{2} \left| \cos \frac{ x}{2} - i e^{in\phi} \sin \frac{ x}{2} \right|^2  \\
& = \frac{1}{2} \left[1 + \sin(x)\sin(n\phi) \right].
\end{aligned}
\end{equation}
which is also the success probability of this non-deterministic gate. 

Now we are interested in the case where the Kerr non-linearity parameter is $\phi = \pi$ and $\theta = x/2$, in which case  we get $\langle \psi_4 |P_{10}|\psi_4 \rangle = 1/2$ so the success probability is $1/2$ and the output state of the logical mode is given by
\begin{equation}
\begin{aligned}
|\psi_\text{out}\rangle & = e^{i \frac{x}{2}} \left( \cos \frac{ x}{2} - i(-1)^n \sin \frac{ x}{2} \right)|n\rangle \\
& = e^{i\frac{x}{2}} e^{-i \frac{x}{2} (-1)^n} |n\rangle  \\
& = e^{iq(n)x} |n\rangle 
\end{aligned}
\end{equation}
which is the desired phase of a DVPS. This concludes the proof of the heralded non-deterministic DVPS with a success probability of $1/2$. 

Note that for a DVPS applied to a single mode the phase shift $\theta$ is a global phase and is unphysical, however if we were to embed this within a larger interferometric network then this phase would result in physical relative phases which will distort the cost landscapes. Regardless of the choice of $\theta$, as we vary $x$ we would still yield a sinusoidal fermionic cost landscape with unit frequency as the two phases  here are $\theta \pm x/2$ and their difference is $x$, resulting in the desired unit frequency sine wave, see Sec.~\ref{app:cost_landscape_proof}. To generalise, we could upgrade $\theta$ to an additional variational parameter if we wish which may assist when optimising

If the gate fails, as signalled by the ancillary output state $|0,1\rangle$,  we can apply a second DVPS to the output of the failed gate, except with a phase of $2x$ now. If this second attempt is successful it will correct the incorrect phase of the first attempt. We repeat this procedure until success by taking the phase of the $n$th iteration as $x_n = 2^{n-1}x$ which gives a success probability after $n$ iterations of $1 - 1/2^n$. This is known as the martingale strategy. To do this, we only need to measure the top ancillary mode to tell whether the gate was a success or not, so if no photon is present we know it exited on the lower mode and this photon can be reused for the next iteration. This process is automatic, as the moment the gate fails the ancillary photon is routed into the next gate, whereas if the gate succeeds the ancillary photon is consumed and the remaining gates reduce to the identity as the Kerr interaction is no longer activated.

\subsection{Non-deterministic design with linear optics \label{app:numerical_methods}}

\begin{figure}

\begin{center}
\tikzset{every picture/.style={line width=0.75pt}} 

\begin{tikzpicture}[x=0.75pt,y=0.75pt,yscale=-1,xscale=1]

\draw  [fill={rgb, 255:red, 245; green, 166; blue, 35 }  ,fill opacity=1 ] (203.56,45.49) -- (204.92,48.11) -- (207.96,48.53) -- (205.76,50.58) -- (206.28,53.46) -- (203.56,52.1) -- (200.84,53.46) -- (201.36,50.58) -- (199.16,48.53) -- (202.2,48.11) -- cycle ;
\draw  [fill={rgb, 255:red, 245; green, 166; blue, 35 }  ,fill opacity=1 ] (203.56,55.78) -- (204.92,58.4) -- (207.96,58.82) -- (205.76,60.87) -- (206.28,63.75) -- (203.56,62.39) -- (200.84,63.75) -- (201.36,60.87) -- (199.16,58.82) -- (202.2,58.4) -- cycle ;
\draw  [fill={rgb, 255:red, 0; green, 0; blue, 0 }  ,fill opacity=1 ] (110.23,60.87) .. controls (110.23,59.78) and (111.11,58.91) .. (112.19,58.91) .. controls (113.28,58.91) and (114.15,59.78) .. (114.15,60.87) .. controls (114.15,61.95) and (113.28,62.83) .. (112.19,62.83) .. controls (111.11,62.83) and (110.23,61.95) .. (110.23,60.87) -- cycle ;
\draw  [fill={rgb, 255:red, 0; green, 0; blue, 0 }  ,fill opacity=1 ] (110.23,50.58) .. controls (110.23,49.5) and (111.11,48.62) .. (112.19,48.62) .. controls (113.28,48.62) and (114.15,49.5) .. (114.15,50.58) .. controls (114.15,51.66) and (113.28,52.54) .. (112.19,52.54) .. controls (111.11,52.54) and (110.23,51.66) .. (110.23,50.58) -- cycle ;
\draw  [fill={rgb, 255:red, 0; green, 0; blue, 0 }  ,fill opacity=1 ][line width=1.5]  (199.89,68.22) -- (202.83,68.22) .. controls (204.46,68.22) and (205.77,69.53) .. (205.77,71.16) .. controls (205.77,72.78) and (204.46,74.09) .. (202.83,74.09) -- (199.89,74.09) -- cycle ;
\draw [line width=0.75]    (114.15,50.58) -- (135.99,50.58) ;
\draw [line width=0.75]    (114.15,60.87) -- (135.99,60.87) ;
\draw [line width=0.75]    (111.7,71.16) -- (135.99,71.16) ;
\draw [line width=0.75]    (60,40) -- (136,40) ;
\draw [line width=0.75]    (135.99,40.29) -- (146.27,50.58) ;
\draw [line width=0.75]    (135.99,50.58) -- (146.27,40.29) ;
\draw [line width=0.75]    (135.99,60.87) -- (146.27,71.16) ;
\draw [line width=0.75]    (135.99,71.16) -- (146.27,60.87) ;
\draw [line width=0.75]    (166.85,50.58) -- (177.14,60.87) ;
\draw [line width=0.75]    (166.85,60.87) -- (177.14,50.58) ;
\draw [line width=0.75]    (166.85,71.16) -- (177.14,81.44) ;
\draw [line width=0.75]    (166.85,81.44) -- (177.14,71.16) ;
\draw [line width=0.75,-stealth]    (168,40) -- (259,40) ;
\draw [line width=0.75]    (177.14,50.58) -- (201.36,50.58) ;
\draw [line width=0.75]    (177.14,60.87) -- (201.36,60.87) ;
\draw [line width=0.75]    (177.14,71.16) -- (201.36,71.16) ;
\draw [line width=0.75]    (177.14,81.44) -- (199.89,81.44) ;
\draw    (146.27,60.87) -- (154.5,60.87) ;
\draw    (146.27,71.16) -- (154.5,71.16) ;
\draw  [dash pattern={on 1pt off 1pt}]  (154.5,50.58) -- (166.85,50.58) ;
\draw  [dash pattern={on 1pt off 1pt}]  (154.23,40) -- (166.58,40) ;
\draw    (146.27,50.58) -- (154.5,50.58) ;
\draw    (146,40) -- (154.23,40) ;
\draw  [dash pattern={on 1pt off 1pt}]  (154.5,60.87) -- (166.85,60.87) ;
\draw  [dash pattern={on 1pt off 1pt}]  (154.5,71.16) -- (166.85,71.16) ;
\draw  [dash pattern={on 1pt off 1pt}]  (154.5,81.44) -- (166.85,81.44) ;
\draw    (111.7,81.44) -- (154.5,81.44) ;
\draw  [fill={rgb, 255:red, 0; green, 0; blue, 0 }  ,fill opacity=1 ][line width=1.5]  (199.89,78.5) -- (202.83,78.5) .. controls (204.46,78.5) and (205.77,79.82) .. (205.77,81.44) .. controls (205.77,83.07) and (204.46,84.38) .. (202.83,84.38) -- (199.89,84.38) -- cycle ;
\draw  [color={rgb, 255:red, 155; green, 155; blue, 155}  ,draw opacity=1 ][dash pattern={on 1.5pt off 1.5pt}] (77,34) -- (243,34) -- (243,100) -- (77,100) -- cycle ;

\draw (27,32) node [anchor=north west][inner sep=0.75pt]    {$|\psi _{\text{in}} \rangle $};
\draw (263,32) node [anchor=north west][inner sep=0.75pt]    {$|\psi _{\text{out}} \rangle $};
\draw (80,44) node [anchor=north west][inner sep=0.75pt]   [align=left] {$\displaystyle |\mathbf{a} \rangle \begin{cases}
 & \\
 & 
\end{cases}$};
\draw (130,82.4) node [anchor=north west][inner sep=0.75pt]    {$\mathrm{U}( N+1)$};
\draw (192,44) node [anchor=north west][inner sep=0.75pt]   [align=left] {$\displaystyle \begin{drcases}
 & \\
 & 
\end{drcases} |\mathbf{a} \rangle $};
\draw (119,16) node [anchor=north west][inner sep=0.75pt]   [align=left] {Non-linearity};

\end{tikzpicture}
\end{center}
\caption{The measurement-induced non-linearity \label{fig:measurement_induced_nonlinearity}}
\end{figure}
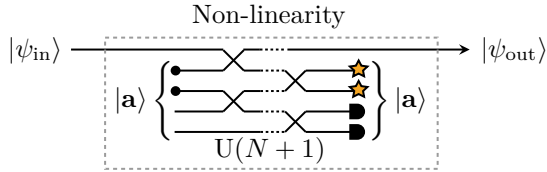

In this appendix we give a detailed overview of how to realise the non-deterministic dual-valued phase shifter with linear optics. 

On the subspace of at most $N$ photons, sometimes called the $N$th Fock layer, the dual-valued phase shifter (DVPS) can be represented as an $N$th degree polynomial in the number operator $\hat{n}$ by using polynomial interpolation: given a function $f(x)$ and a set of $N+1$ points $\{ x_i \}_{i = 0}^{N}$ then there exists a unique $K$th degree polynomial $P_K(x)$ that intersects the function at these points, i.e., $P_K(x_i) = f(x_i)$ for all $i = 0,1,\ldots,N$, where $K \leq N$. We shall apply this in order to find the polynomial representation of the DVPS. This operator acts as
\begin{equation}
e^{i\hat{q}x} |n\rangle = e^{iq(n)x} |n\rangle, \quad q(n) = \frac{1}{2}(1 - (-1)^n), 
\end{equation}
where $|n\rangle$ is a single-mode number state of $n$ photons. Let us first represent the eigenvalues on the $N$th Fock layer as an $N$th degree polynomial in $n$ as
\begin{equation}
 e^{iq(n)x} \equiv \sum_{m = 0}^N a_m(x) n^m,
\end{equation}
where $n = 0,1,\ldots,N$. This can be written as a linear equation by interpreting $T_{nm} \equiv n^m$ as the elements of a matrix $T$, known as a Vandermonde matrix, and the coefficients $a_n(x)$ and values $\exp[ixq(n)]$ as the elements of two vectors. In matrix notation we have
\begin{equation}
\begin{pmatrix}
1 \\ e^{ix} \\ 1 \\ \vdots
\end{pmatrix} = \begin{pmatrix}
1 & 0 & 0 &  \cdots \\
1 & 1 & 1  &\cdots \\
1 & 2 & 4 & \cdots \\
\vdots & \vdots & \vdots & \ddots
\end{pmatrix} \begin{pmatrix}
a_0(x) \\ a_1(x) \\ a_2(x) \\ \vdots 
\end{pmatrix}.
\end{equation}
Inverting the matrix $T$ allows us to solve for the vector of coefficients $a_m(x)$ giving us the sought after polynomial. Then by replacing the integer $n$ in the polynomial with the number operator $\hat{n}$ we arrive at the alternative expression for the DVPS when restricted to the $N$th Fock layer. 

As an example consider $N = 2$. The DVPS can be written as 
\begin{equation}
e^{i \hat{q} x} \equiv 1 - (e^{ix} - 1) \hat{n}(\hat{n}-2). \label{eq:dvps_polynomial}
\end{equation}
We stress that this equivalence is true on the subspace with at most $N = 2$ photons only. On the $N$th Fock layer we need a polynomial of degree $N$.

In order to encode this operation on the $N$th Fock layer, we use the following theorem from Ref.~\cite{PhysRevA.68.032310}.
\begin{theorem}[Measurement-induced non-linearities \cite{PhysRevA.68.032310}]
Consider an $(N+1)$-mode linear optical interferometer encoding a unitary $u \in \mathrm{U}(N+1)$ consisting of a single logical mode and $N$ ancillary modes, where the logical mode is prepared in the state $|\psi\rangle$ on the $n$th Fock layer and the ancillary modes are prepared in the single-occupation state $ |1,1,\ldots,1\rangle$. If the output of the ancillary modes is projected onto its input state, then the unnormalised output of the logical mode is $u_{00}^{\hat{n}} P_K(\hat{n})|\psi\rangle$, where $ P_K(\hat{n})$ is a polynomial of degree $K = \min\{ N,n \}$. 
\label{thm:1}
\end{theorem}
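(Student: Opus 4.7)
The plan is to decompose each Heisenberg-evolved mode operator into a logical and an ancillary piece and exploit that these commute. Writing $U b_0^\dagger U^\dagger = u_{00} b_0^\dagger + A^\dagger$ with $A^\dagger = \sum_{j=1}^{N} u_{j0} b_j^\dagger$, and $U b_i^\dagger U^\dagger = u_{0i} b_0^\dagger + B_i^\dagger$ with $B_i^\dagger = \sum_{j=1}^{N} u_{ji} b_j^\dagger$ for $i = 1, \ldots, N$, the operators $A^\dagger$ and $B_i^\dagger$ live entirely in the ancillary subspace and therefore commute with the logical $b_0^\dagger$. This clean split is the engine of the whole argument.

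By linearity it suffices to verify the claim on a single Fock state $|\psi\rangle = |k\rangle$ with $k \leq n$. Applying $U$ to the input $(b_0^\dagger)^k \prod_{i=1}^{N} b_i^\dagger |0\rangle / \sqrt{k!}$, I would expand $(u_{00} b_0^\dagger + A^\dagger)^k$ via the binomial theorem and expand $\prod_{i=1}^{N}(u_{0i} b_0^\dagger + B_i^\dagger)$ as a sum over subsets $S \subseteq \{1, \ldots, N\}$, where $i \in S$ picks up the $b_0^\dagger$ term. Projecting onto $\langle k, 1, \ldots, 1|$ and using particle conservation in the logical mode then forces $j + |S| = k$, where $j$ is the binomial index, while the identity $\langle k | (b_0^\dagger)^k = \sqrt{k!}\, \langle 0|$ cancels the input normalisation. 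What remains is an ancillary-space overlap $\langle 1, \ldots, 1 | (A^\dagger)^{k-j} \prod_{i \notin S} B_i^\dagger |0\rangle$ which depends only on $m := k - j$ and on $S$, not on $k$ independently.

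Collecting terms yields the amplitude $\langle k, 1, \ldots, 1 | U | k, 1, \ldots, 1\rangle = \sum_{j=0}^{k} \binom{k}{j} u_{00}^{j} \gamma_{k-j}$ with $k$-independent coefficients $\gamma_m$ built only from $u$ and its ancillary-ancillary submatrix. Re-indexing by $m = k - j$ and using that $\binom{k}{m} = 0$ for $m > k$ lets the upper limit be extended to $N$, giving the factorisation $u_{00}^{k} \sum_{m=0}^{N} \binom{k}{m} u_{00}^{-m} \gamma_m$. The sum is manifestly a polynomial in $k$ of degree at most $N$; restricting to the $n$-photon Fock layer, polynomial interpolation through the $n+1$ values $k = 0, 1, \ldots, n$ lets it be represented by a polynomial of degree at most $K = \min\{N, n\}$, which I identify as $P_K(\hat{n})$. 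Reinstating the superposition $|\psi\rangle = \sum_k c_k |k\rangle$ then delivers the stated output $u_{00}^{\hat{n}} P_K(\hat{n}) |\psi\rangle$.

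The main obstacle I expect is the bookkeeping needed to confirm that the ancillary overlap is genuinely $k$-independent once $m = k - j$ is fixed, so that all $k$-dependence of the amplitude is carried by the binomial factor $\binom{k}{m}$; this is the step that unlocks the polynomial form. A subtlety worth flagging is the tightening of the degree from $N$ to $\min\{N, n\}$: the raw expansion produces degree $N$, and the improvement relies only on the observation that agreement is needed on the finite set $\{0, 1, \ldots, n\}$, where a degree-$n$ interpolating polynomial always suffices.
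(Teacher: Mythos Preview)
Your proof is correct and follows essentially the same route as the paper's: both expand the Heisenberg-evolved creation operators, separate the logical-mode contribution from the ancillary one, count the number $m$ of photons exchanged between the two sectors, and read off the polynomial structure from the combinatorial prefactor in front of an $m$-dependent but $k$-independent amplitude. The paper uses the multinomial theorem where you use the cleaner binomial split $u_{00}b_0^\dagger + A^\dagger$, and it truncates the sum directly at $\min\{N,n\}$ (since the falling factorial $k!/(k-m)!$ vanishes for $m>k$) rather than appealing to interpolation, but these are packaging differences rather than distinct ideas.
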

\begin{proof}
We use the index convention that the logical mode is indexed as $0$ and the ancillary modes are indexed with $\{1,2,\ldots,N\}$. Consider the input state $|\psi_\text{in}\rangle|\mathbf{a}\rangle $, where the logical mode is in an arbitrary state $|\psi_\text{in}\rangle$ and the ancillary modes contain a single photon each as $|\mathbf{a}\rangle = |1,1,\ldots,1\rangle$. Inserting this into an interferometer encoding a unitary $U:\mathcal{F} \rightarrow \mathcal{F}$ and projecting the ancillary modes onto their input state induces an operator $M : \mathcal{F}_0 \rightarrow \mathcal{F}_0$, where $\mathcal{F}_0$ is the Fock space of the logical mode. This mapping is given by
\begin{equation}
|\psi_\text{in}\rangle \mapsto \frac{M |\psi_\text{in}\rangle }{\sqrt{ \Vert M |\psi_\text{in}\rangle \Vert}}, \quad
M  =  \langle \mathbf{a}| U | \mathbf{a}\rangle , \label{eq:M_definition}
\end{equation}
where the inner product is a partial inner product on the ancillary mode indices only. The magnitude of $M |\psi_\text{in}\rangle$ is the success probability. 

The induced operation $M$ conserves particle number. This is because the unitary $U$ encoded by the interferometer is particle-conserving, so if the number of photons in the ancillary modes is conserved, which is the case when projecting onto $|\mathbf{a}\rangle$, then the number of logical photons must not change either. Therefore $M$ is diagonal in the number basis as
\begin{equation}
M = \sum_n \alpha(n) |n\rangle \langle n |  = \alpha(\hat{n}), \label{eq:M_diagonal}
\end{equation}
where $\alpha$ is some function. From Eqs.~\eqref{eq:M_definition} and \eqref{eq:M_diagonal} the function $\alpha(n)$ is given by 
\begin{equation}
\alpha(n) = \langle n | M | n \rangle  = \langle n | \langle \mathbf{a} | U | n \rangle | \mathbf{a}\rangle. \label{eq:M_eigenvalue}
\end{equation}
The $n$th eigenvalue of $M$ is then the amplitude for the interferometer to leave the state $|n \rangle |\mathbf{a}\rangle$ invariant which of course is given by the permanent from Eq.~\eqref{eq:app_transition_amplitude}. We would like to find the functional dependence of this on $n$ so we must expand this out explicitly. We have
\begin{equation}
\begin{aligned}
U |n\rangle |\mathbf{a}\rangle & = \frac{1}{\sqrt{n!}} U (a_0^\dagger)^n \prod_{i= 1}^N a_i^\dagger |0\rangle \\
& = \frac{1}{\sqrt{n!}} \left( \sum_{i = 0}^N u_{i0} a_i^\dagger \right)^n \prod_{i= 1}^N \left(  \sum_{j=0}^N u_{ji}a_j^\dagger \right) |0\rangle ,
\end{aligned}
\end{equation}
where we have inserted in the linear transformation of the ladder operators of Eq.~\eqref{eq:unitary}. 
We now apply the multinomial theorem to the sum raised to the power of $n$, to give us 
\begin{equation}
\begin{aligned}
U |n\rangle |\mathbf{a}\rangle & = \frac{1}{\sqrt{n!}} \sum_{\substack{k_0+\ldots+k_N = n \\  k_i \geq 0}} \frac{n!}{k_0! k_1! \ldots k_N!} ( u_{00} a^\dagger_0)^{k_0} ( u_{10} a^\dagger_1)^{k_1} \ldots ( u_{N0} a^\dagger_N)^{k_N} \prod_{i= 1}^N \left(  \sum_{j=0}^N u_{ji}a_j^\dagger \right) |0\rangle , \label{eq:multinomial}
\end{aligned}
\end{equation} 
This is a complicated mess of terms, however from Eq.~\eqref{eq:M_eigenvalue} we are interested in the amplitude for $|n\rangle |\mathbf{a}\rangle$ at the output only, so we look for the coefficient of this term in the expansion. To proceed note that we have two sets of modes: a single logical mode that must be occupied by $n$ photons and $N$ ancillary modes that must each be occupied by a single photon. We can divide up the amplitude into cases where $m$ photons swap between these two subsets, where the number of swaps is upper bounded as $m \leq \min \{ N,n \}$. 

Consider the amplitude for $m$ swaps. In Eq.~(\ref{eq:multinomial}) the index $k_0$ in the sum corresponds to the number of logical photons that remain in the logical mode after the transformation, so we can change variables as $k_0 = n - m$. The indices $k_1,\ldots,k_N$ correspond to the logical photons that \textit{did} swap and tell us what ancillary modes they ended up in. As we can have at most one logical photon swapping with an ancillary photon, we have $k_i! = 1$ for $i = 1,\ldots,N$. Therefore, we can write
\begin{equation}
\begin{aligned}
U |n\rangle |\mathbf{a}\rangle & = \frac{1}{\sqrt{n!}} \sum_{m = 0}^{\min \{ N,n \}} \frac{n!}{(n-m)!} u_{00}^{n-m} A_m (a_0^\dagger)^n a_1^\dagger a_2^\dagger \ldots a_N^\dagger |0\rangle + \ldots \\
& = \sum_{m = 0}^{\min \{ N,n \}} \frac{n!}{(n-m)!} u_{00}^{n-m} A_m |n\rangle |\mathbf{a}\rangle + \ldots,
\end{aligned}
\end{equation}
where $A_m$ is the remainder of the amplitude that corresponds to the transitions made by the remaining photons. If $m$ logical photons and $m$ ancillary photons are swapped, then we have many choices of ancillary photons to swap with. This amplitude is then a sum over all possible subsets of ancillary photons of size $m$ as
\begin{equation}
A_m = \sum_{ \substack{T \subseteq \{ 1,\dots,N\} \\ |T| = m}} \Bigg( \prod_{i \in T} u_{0i}u_{i0}  \Bigg) \Bigg( \sum_{\sigma \in S_{N-m}} \prod_{j \not\in T} u_{j \sigma(j)} \Bigg).
\end{equation}
The first term in parentheses is the amplitude for $m$ logical photons to swap with $m$ ancillary photons from the subset $T$, where $u_{0i}$ is the amplitude for the logical photon to move to the $i$th ancillary mode, and $u_{i0}$ is the ampltiude for the $i$th ancillary photon to move to the logical mode. The second term in parentheses corresponds to the amplitude for the ancillary photons that did not swap. These photons transition between the modes $T^C =  \{1,\ldots,N\} \setminus T$, with only one per mode at the output, and includes all possible permutations of this. This term can be rewritten as a permanent of the submatrix of $u$ with only rows and columns selected from $T^C$. 

Pulling everything together, we have the eigenvalues
\begin{equation}
\alpha(n) =  u^n_{00} \sum_{m = 0}^{\min \{N,n \}}  \frac{n!}{(n-m)!}   \frac{1}{u_{00}^m} A_m. \label{eq:induced_amplitudes}
\end{equation}
This summation is a polynomial in $n$ of degree $ \min \{ N,n\}$ as the last term in the sum has the combinatorial factor $n(n-1)\ldots(n- \min \{ N,n \} + 1)$ which is the highest-order term. By replacing all $n$ with the number operator $\hat{n}$, we arrive at the result that on the $n$th Fock layer $M$ is given by $u_{00}^{ \hat{n} } P_K(\hat{n})$ for $K = \min \{N,n \}$, where $P_K$ is an $K$th order polynomial. See Fig.~\ref{fig:measurement_induced_nonlinearity} for an example of this in action.
\end{proof}

In general it will not be possible to solve for the generated polynomial analytically, so we resort to numerical methods to find the unitary $u$ that results in the polynomial we seek. In order to do this we reframe the problem into something easier to encode and solve numerically. Suppose we prepare the logical mode in the most general photonic state for a single mode on the $N$th Fock layer as
\begin{equation}
|\psi_\text{in} \rangle = \sum_{n = 0}^N  c_n |n\rangle,
\end{equation}
where $|n\rangle$ is the state consisting of $n$ photons and $\sum_{n= 0}^N |c_n|^2 = 1$. We then prepare an $N$-mode ancillary state $|\mathbf{a}\rangle = |a_1,a_2,\ldots,a_N \rangle$. The total state of the logical mode and the ancillary modes is
\begin{equation}
|\psi_\text{in} \rangle |\mathbf{a}\rangle  = \sum_{n = 0}^N  c_n |n\rangle |\mathbf{a}\rangle  \equiv \sum_{n = 0}^N  c_n |\mathbf{v}_n\rangle,
\end{equation}
where $\mathbf{v}_n = (n,a_1,a_2,\ldots,a_N)$. We take the ancillary modes have occupation numbers $a_i \in \{0,1\}$ only. We now insert this state into an $(N+1)$-mode linear interferometer described by a unitary $U$ which acts as Eq.~\eqref{eq:unitary}. We have
\begin{equation}
\begin{aligned}
U |\psi_\text{in} \rangle |\mathbf{a}\rangle & = \sum_{n = 0}^N  c_n U |\mathbf{v}_n\rangle \\
& = \sum_{n = 0}^N  \sum_{ \mathbf{m} \in \mathbb{N}^{N + 1}} c_n \langle \mathbf{m} | U |\mathbf{v}_n\rangle |\mathbf{m}\rangle \\
& = \sum_{n = 0}^N  \sum_{\mathbf{m} \in \mathbb{N}^{N+1}} \frac{c_n}{\sqrt{\mathbf{m}! n!}}  \operatorname{per}(u[\mathbf{m}|\mathbf{v}_n]) |\mathbf{m}\rangle \\
& =  \sum_{n = 0}^N  \frac{c_n}{n!} \operatorname{per}( u[\mathbf{v}_n|\mathbf{v}_n]) |n\rangle |\mathbf{a}\rangle + \ldots,
\end{aligned}
\end{equation}
where in the second line we multiplied by the resolution of the identity in the Fock basis, in the third line we used the permanent formula for the transition amplitudes from Eq.~\eqref{eq:app_transition_amplitude} and used the fact that $\mathbf{v}_n! = n!$ if $a_i \in \{0,1\}$, and in the final line we pulled out a common factor of $|\mathbf{a}\rangle$. We now measure the ancillary modes and keep the logical state if the ancillary modes are measured in the state $|\mathbf{a}\rangle$. This gives us the unnormalised output state of the logical mode 
\begin{equation}
|\psi_\text{out}\rangle = \sum_{n = 0}^N  \frac{c_n}{n!} \operatorname{per}(u[\mathbf{v}_n|\mathbf{v}_n]) |n\rangle, \label{eq:post_selected_output}
\end{equation}
where the success probability is given by $p = |\langle \psi_\text{out}|\psi_\text{out}\rangle|^2$. On the other hand, the desired action of the dual-valued phase shifter on the logical mode is given by
\begin{equation}
U_\text{DVPS}(x)|\psi_\text{in}\rangle = \sum_{n = 0}^N  c_n e^{iq(n)x} |n\rangle, \quad q(n) = \frac{1}{2}(1 - (-1)^n), \label{eq:DVPS_definition}
\end{equation}
where we have chosen $q(n)$ to take this particular form, however any two-valued real function $q : \mathbb{N} \to \{ a,b \}$ could be used.

Let us assume that we can achieve this operation up to a global phase with measurement-induced non-linearities with a probability of $p_x \in [0,1]$, so we can write
\begin{equation}
|\psi_\text{out}\rangle = \sqrt{p_x} e^{i \alpha_x} U_\text{DVPS}(x) |\psi_\text{in}\rangle, \label{eq:measurement_induced_DVPS}
\end{equation}
where $\alpha_x \in [0,2\pi)$ is the global phase that in general depends upon $x$. We see that $ |\langle\psi_\text{out}| \psi_\text{out}\rangle|^2 = p_x$ confirming that $p_x$ is indeed the success probability. If we combine Eqs.~\eqref{eq:post_selected_output}, \eqref{eq:DVPS_definition} and \eqref{eq:measurement_induced_DVPS} then we must solve for the matrix $u_x \in \mathrm{U}(N+1)$ that solves the equation
\begin{equation}
\frac{1}{n!} \operatorname{per}(u_x[\mathbf{v}_n|\mathbf{v}_n]) = \sqrt{p_x} e^{i\alpha_x} e^{ixq(n)}, \label{eq:condition}
\end{equation}
for all $n = 0,1,\ldots,N$ and $x \in [0,2\pi)$. Eq.~\eqref{eq:condition} provides us with $N$ constraints in the form of polynomial equations in the elements of the unitary matrix $u_x$. Within the space of solutions that satisfy the constraints, we additionally need to find the particular unitary that maximises the success probability $p_x \in [0,1]$. By setting $n = 0$ in Eq.~\eqref{eq:condition}, the success probability is given by
\begin{equation}
\sqrt{p_x} e^{i \alpha_x} = \mathop{\mathrm{per}}(u_x[\mathbf{v}_0 |\mathbf{v}_0]) \quad \Rightarrow \quad  p_x = \left|\operatorname{per}(u_x[\mathbf{v}_0|\mathbf{v}_0])\right|^2.
\end{equation}
The problem can then be recast as an optimisation problem for each $x \in [0,2\pi)$ as
\begin{equation}
\begin{aligned}
\mathop{\mathrm{maximise}}_{u_x \in \mathrm{U}(N+1)} \quad &  p_x = \left|\operatorname{per}(u_x[\mathbf{v}_0|\mathbf{v}_0])\right|^2,  \\
\text{subject to} \quad & \left| \frac{1}{n!} \operatorname{per}(u_x[\mathbf{v}_n|\mathbf{v}_n]) -  \mathop{\mathrm{per}}(u_x [\mathbf{v}_0 |\mathbf{v}_0]) e^{ixq(n)} \right|^2 = 0 , \quad \forall n = 1,\ldots,N.
\end{aligned}  \label{eq:optimisation_problem}
\end{equation}
This is a complicated optimisation problem and many methods have been presented in the literature, e.g., Refs.~\cite{PhysRevLett.95.040502,wu2007optimizingopticalquantumlogic,sparrow2018simulating,PhysRevA.79.042326}. However, for small $N$ this can be solved in a simple manner as we now show.

\begin{figure}
\begin{center}
\includegraphics[scale=0.7]{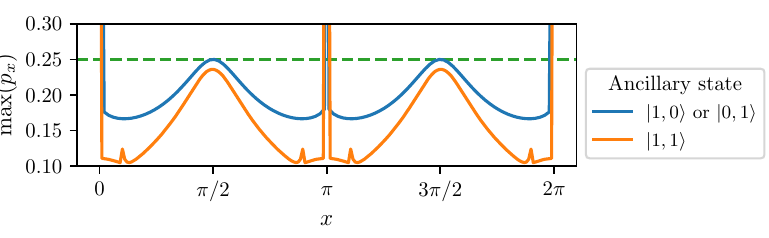}
\end{center}
\caption{The maximum success probability $p_x$ for the non-deterministic DVPS versus the phase $x$ for various ancillary input states on the $N = 2$ subspace. We see that the ancillary state $|1,0\rangle$ or $|0,1\rangle$ obtains the upper bound of $1/4$. For $x = 0,\pi$ the gate is deterministic with $p_x = 1$, as here the DVPS is simply the identity or $\pi$ phase shifter, respectively. Note that the discontinuity in $|1,1\rangle$ happens at $x \approx \pi/10$. \label{fig:ancillary_comparison}}
\end{figure}

In the main text we looked at an example for $N = 2$ for the ancillary state $|\mathbf{a}\rangle = |1,0\rangle$. In order to solve the optimisation problem of Eq.~\eqref{eq:optimisation_problem} we parametrised our $\mathrm{U}(3)$ unitaries by writing them in the exponential form as
\begin{equation}
u(\boldsymbol{\theta}) = \exp \left(i \sum_{\mu = 1}^9 \theta_\mu T^\mu \right),
\end{equation}
where $ \{ T^\mu \}_{\mu = 1}^9$ is a basis of nine three-dimensional Hermitian matrices that spans the Lie algebra $\mathfrak{u}(3)$ and $\theta_\mu \in \mathbb{R}$ parametrise the unitaries. We chose the basis
\begin{equation}
\begin{aligned}
T^1 & = 
\begin{pmatrix}
1 & 0 & 0 \\ 
0 & 0 & 0 \\ 
0 & 0 & 0
\end{pmatrix}, 
\quad 
T^2 = 
\begin{pmatrix}
0 & 0 & 0 \\ 
0 & 1 & 0 \\ 
0 & 0 & 0
\end{pmatrix}, 
\quad 
T^3 = 
\begin{pmatrix}
0 & 0 & 0 \\ 
0 & 0 & 0 \\ 
0 & 0 & 1
\end{pmatrix} \\
T^4  & = 
\begin{pmatrix}
0 & 1 & 0 \\ 
1 & 0 & 0 \\ 
0 & 0 & 0
\end{pmatrix}, 
\quad 
T^5 = 
\begin{pmatrix}
0 & 0 & 1 \\ 
0 & 0 & 0 \\ 
1 & 0 & 0
\end{pmatrix}, 
\quad
T^6 = 
\begin{pmatrix}
0 & 0 & 0 \\ 
0 & 0 & 1 \\ 
0 & 1 & 0
\end{pmatrix},
\\
T^7  & = 
\begin{pmatrix}
0 & -i & 0 \\ 
i & 0 & 0 \\ 
0 & 0 & 0
\end{pmatrix}, 
\quad
T^8 = 
\begin{pmatrix} 
0 & 0 & -i \\ 
0 & 0 & 0 \\ 
i & 0 & 0 
\end{pmatrix}, 
\quad 
T^9 = 
\begin{pmatrix}
0 & 0 & 0 \\ 
0 & 0 & -i \\ 
0 & i & 0 
\end{pmatrix}.
\end{aligned}
\end{equation}
We then solved the constrained optimisation problem of Eq.~\eqref{eq:optimisation_problem} by optimising over the parameters $\theta_\mu$ using the Sequential Least Squares Programming (SLSQP) of SciPy's optimize package on Python. Due to local minima, the optimiser would often return a suboptimal solution, i.e., one which solved the permanent equation constraints but did not maximise $p_x$, so we repeated the optimisation $500$ times per $x$ and returned the solution with the largest value of $p_x$.

 In Fig.~\ref{fig:ancillary_comparison} we show the maximum probability vs. $x$ for three possible ancillary states $|1,1\rangle$, $|1,0\rangle$ and $|0,1\rangle$. Note that the performance of $|1,0\rangle$ and $|0,1\rangle$ is identical as they can be transformed between with a deterministic linear transformation which does not change the success probability. We see that $|1,0\rangle$ actually outperforms $|1,1\rangle$ and is able to obtain the upper bound of $1/4$ as predicted analytically. In Table~\ref{tab:unitary_table} we give a few example unitaries which encode the desired DVPS with phase $x$ on the subspace of at most $N = 2$ photons using the ancillary state $|\mathbf{a}\rangle = |1,0\rangle$, together with the success probability $p_x$.
 
We may wonder why the ancillary state $|\mathbf{a}\rangle = |1,0\rangle$ is able to induce the action of the second order polynomial of Eq.~\eqref{eq:dvps_polynomial} when theorem~\ref{thm:1} states that it should induce a polynomial $P_1(\hat{n})$ of degree $1$ in the number operator. The reason is that the additional factor of $u_{00}^{\hat{n}}$ boosts the degree of the polynomial by one, giving rise to a polynomial of degree $2$ on this Fock layer. More precisely, with a single ancillary photon we generate an operator of the form 
\begin{equation}
M = u_{00}^{\hat{n}} (a \hat{n} + b)
\end{equation}
which, via polynomial interpolation, is equivalent to a polynomial of degree two on the second Fock layer. If $u_{00} = 1$ then $M$ is just a first order polynomial. 

Now whether the polynomial we generate is what we would like is another matter, which is where the additional empty ancillary mode aids us. This empty ancillary mode allows the submatrix of the encoded unitary $u$ corresponding to the occupied input and output modes to have more freedom. If we have one ancillary photon, then Eq.~\eqref{eq:induced_amplitudes} gives us the amplitudes
\begin{align}
\alpha_0 & = u_{00},  \\
\alpha_1 & = u_{00} u_{11} + u_{01} u_{10}, \\
\alpha_2  &= u^2_{00} u_{11} + 2u_{00} u_{01} u_{10}, 
\end{align}
and there is enough degrees of freedom in the unitary $u \in \mathrm{U}(3)$ to find one such that $\alpha_n \propto \exp[i q(n) x]$. On the other hand if $u \in \mathrm{U}(2)$, which would be the case if we did not have this additional empty ancillary mode, then we would have the constraint that the columns are orthonormal, so $|u_{00}|^2 + |u_{10}|^2 = |u_{01}|^2 + |u_{11}|^2 = 1$ and $u_{00} u_{11} + u_{10}u_{01} = 0$, and similarly for the rows, which heavily constrains the matrix elements. However, if $u \in \mathrm{U}(N)$ for $N > 2$ then submatrices of the unitaries are not always unitary themselves, so the constraints on the matrix elements are relaxed and we therefore have more freedom to choose the matrix elements to solve these equations. This also makes it clear that once the optimal solution has been found, then additional ancillary modes will not increase the probability as it is only the 2D submatrix that appears in these equations.

\begin{table}
\begin{center}
\begin{tabular}{|c | c |c |} 
 \hline
 $x$ & $u_x$ & $p_x$  \\ [0.5ex] 
 \hline\hline 
 $ 0 $ & $\begin{pmatrix}
 1 & 0 & 0 \\ 0 & 1 & 0 \\ 0 & 0 & 1
 \end{pmatrix}$ & 1 \\ 
 \hline
 $ \pi/8 $ & 
 $ 
 \begin{pmatrix}
 0.4376-0.3446i & -0.4479+0.3954i & 0.0255+0.5763i \\
       -0.4538-0.39i  &  0.2225-0.3428i & -0.43  +0.5385i \\
        0.0155-0.5757i & -0.4392-0.5321i & 0.2717-0.3442i
 \end{pmatrix}
 $
  & $1/6$ \\
 \hline
 $\pi/4$ & $\begin{pmatrix} 0.2517-0.3920i & -0.6332+0.3146i &  0.5144-0.1359i \\
       -0.7080 +0.1018i & -0.0896-0.4160i &  0.5120 -0.2123i \\
       -0.3028-0.4238i & -0.5023-0.2584i & -0.6398-0.0181i  \end{pmatrix}$ & $ 2/11 $\\
 \hline 
 $ 3\pi/8 $ & 
 $
 \begin{pmatrix}
 0.1171-0.4097i & -0.7931-0.0695i & -0.1944+0.3832i \\
       -0.0499-0.7917i &  0.4535+0.1013i & -0.3871-0.0706i \\
       -0.3224-0.2919i & -0.3174-0.2225i &  0.3428-0.7369i
 \end{pmatrix}
 $
  & $8/37$ \\
 \hline
 $\pi/2$ & 
$
\begin{pmatrix}
0.0001-0.4146i &  0.7130+0.339i & 0.3426-0.0321i \\
0.8046-0.2382i &  0.0631-0.4962i &  0.0792+0.1985i \\
0.0680+0.3455i & -0.1952-0.044i &  0.8769-0.2588i 
\end{pmatrix}
$ 
  & $1/4$  \\
 \hline
 $\pi$ & $\begin{pmatrix}
 -1 & 0 & 0 \\ 0 & 1 & 0 \\ 0 & 0 & 1
\end{pmatrix}$  & $1$  \\ [1ex] 
 \hline
\end{tabular}
\caption{Example unitaries $u_x$ and success probabilities $p_x$ for encoding the dual-valued phase shifter for various $x$ on the Fock layer of $N = 2$, given the ancillary state $|\mathbf{a}\rangle = |1,0\rangle$. Note that these unitaries have been rounded to four decimal places for presentational purposes. \label{tab:unitary_table}}
\end{center}
\end{table}

\section{Proof of Theorem~\ref{thm:barren_plateaus} \label{app:barren_plateaus}}

\subsection{Haar measure moments}
In this section we derive some useful identities to be used later. 

\begin{lemma}
\label{thm:haar_measure_moments}
Let $U : \mathrm{U}(N) \rightarrow \mathrm{U}(d)$ be a reducible unitary representation of $\mathrm{U}(N)$ on the vector space $V$ with dimension $\mathop{\mathrm{dim}}(V) = d$ and orthonormal basis $\{ |i \rangle \}_{i = 1}^d$, where each irreducible representation has a multiplicity of one, then 
\begin{equation}
\int_{\mathrm{U}(N)} du U_{ij}(u) U^*_{kl}(u) = \sum_{\alpha} \frac{1}{d_\alpha} P^\alpha_{ik} P^\alpha_{lj} ,
\end{equation}
where $P^\alpha$ is the projector onto the irrep $\alpha$, $U_{ij}(u) = \langle i | U(u)|j\rangle$ and $P^\alpha_{ij} = \langle i | P^\alpha | j \rangle$ are the matrix elements with respect to the orthonormal basis, $d_\alpha$ is the dimension of the irrep $\alpha$ and $du$ is the Haar measure.
\end{lemma}
\begin{proof}
If the representation is reducible there exists a basis that block-diagonalises all $U(u)$ for all $u \in \mathrm{U}(N)$. Let this basis be denoted by $\{ |\alpha, x, n \rangle \}$, where $\alpha$ labels the irrep, $x$ labels the any multiplicity of the irreps, and $n$ labels the states within this basis are given by
\begin{equation}
\langle \alpha , x , m | U(u) | \beta , y , n \rangle = \delta_{\alpha \beta} \delta_{xy} U^{\alpha}_{mn}(u),
\end{equation}
where $U^\alpha_{mn}$ are the matrix elements of the irrep $\alpha$. If this irrep has non-trivial multiplicity, we assume that the matrix elements for each one are equal.

Next we introduce a unitary change of basis $C_{i(\alpha,x ,m)} = \langle i | \alpha, x , m \rangle$ where the triplet $(\alpha , x , m)$ is viewed as a multi-index. These components are sometimes called Clebsch-Gordon coefficients. We can then rewrite the matrix elements with respect to the original basis as
\begin{equation}
U_{ij}(u)  = \langle i | U(u) |j\rangle = \sum_{ \alpha} \sum_{x} \sum_{m, n} C_{i (\alpha, x , m)} C^*_{j (\alpha, x , n) }U^{\alpha}_{mn}(u).
\end{equation}
If we insert this into the integral, we get
\begin{equation}
\begin{aligned}
\int_{\mathrm{U}(N)} du U_{ij}(u)U_{kl}^*(u) & = \int du \left( \sum_{ \alpha} \sum_{x} \sum_{m, n}C_{i (\alpha, x , m)} C^*_{j (\alpha, x , n) }U^{\alpha}_{mn}(u) \right) \left( \sum_{ \beta} \sum_{y} \sum_{p, q}C^*_{k (\beta, y , p)} C_{l (\beta, y , q) }U^{\beta *}_{pq}(u) \right) \\
& =  \sum_{ \alpha , \beta} \sum_{x,y} \sum_{m,n,p,q} C_{i (\alpha, x , m)} C^*_{j (\alpha, x , n) } C^*_{k (\beta, y , p)} C_{l (\beta, y , q) }  \underbrace{\int_{\mathrm{U}(N)} du U^{\alpha}_{mn}(u)  U^{\beta *}_{pq}(u)}_{\frac{1}{d_\alpha} \delta^{\alpha \beta} \delta_{mp} \delta{nq}} \\
& = \sum_{\alpha} \frac{1}{d_\alpha} \sum_{x,y} \sum_{m,n} C_{i (\alpha, x , m)} C^*_{j (\alpha, x , n) } C^*_{k (\alpha, y , m)} C_{l (\alpha, y , n) }, 
\end{aligned}
\end{equation}
where we use the orthogonality relationship of matrix elements of irreps~\cite{georgi} to evaluate the integral in the second line and $d_\alpha$ is the dimension of the irrep $\alpha$. Define the operator
\begin{equation}
P^\alpha_{xy} = \sum_{m} |\alpha, x,m\rangle \langle \alpha, y ,m |,
\end{equation}
so we can rewrite the integral as
\begin{equation}
\int_{\mathrm{U}(N)} du U_{ij}(u) U^*_{kl}(u) = \sum_\alpha \sum_{x, y} \frac{1}{d_\alpha} (P^\alpha_{xy})_{ik} (P^\alpha_{yx})_{lj}. 
\end{equation}
Additionally, if we assume that the multiplicity of each irrep is unity then we can drop the indices $x$ and $y$, we then get
\begin{equation}
\int_{\mathrm{U}(N)} du U_{ij}(u) U^*_{kl}(u) = \sum_\alpha \frac{1}{d_\alpha} P^\alpha_{ik} P^\alpha_{lj},
\end{equation}
where $P^\alpha_{ij}$ are the matrix elements of the projector onto irrep $\alpha$.
\end{proof}

\begin{lemma}
\label{thm:haar_twirl}
Let $U : \mathrm{U}(N) \rightarrow \mathrm{U}(d)$ be a reducible representation of $\mathrm{U}(N)$ on the vector space $V$ with dimension $\mathop{\mathrm{dim}}(V) = d$ and orthonormal basis $\{ |i \rangle \}_{i = 1}^d$, where each irreducible representation has a multiplicity of one, and let $A$ be a linear operator on $V$ then
\begin{equation}
I = \int_{\mathrm{U}(N)} du  U(u) A U^\dagger(u)  = \sum_\alpha \frac{1}{d_\alpha} \mathrm{Tr}(P^\alpha A) P^\alpha.
\end{equation}
\begin{proof}
Let us calculate the matrix elements of $I$. We have
\begin{equation}
\begin{aligned}
I_{ij} & = \int_{\mathrm{U}(N)} du \sum_{k,l}   U_{ik}(u) A_{kl} U_{lj}^\dagger(u) \\
& =  \sum_{k,l}  A_{kl} \int_{\mathrm{U}(N)} du   U_{ik}(u)  U_{jl}^*(u) \\
& = \sum_{k,l} A_{kl} \sum_\alpha \frac{1}{d_\alpha}  P_{lk}^\alpha  P_{ij}^\alpha  \\
& = \sum_{\alpha} \frac{1}{d_\alpha}  \mathrm{Tr}(P^\alpha A) P_{ij}^\alpha.
\end{aligned}
\end{equation} 
where in the third equality we used Lemma~\ref{thm:haar_measure_moments}.
\end{proof}
\end{lemma}

\subsection{Calculating the variance}
Given a state $\rho$ and an observable $H$, the cost function is defined as
\begin{equation}
E(\boldsymbol{\theta}) = \mathrm{Tr}\left[ U(\boldsymbol{\theta}) \rho U^\dagger(\boldsymbol{\theta}) H \right].
\end{equation}
We calculate the variance of this quantity by assuming that our parametrised unitaries $U(\boldsymbol{\theta})$ form a $2$-design, so averaging over the parameters is equivalent to averaging the unitaries with respect to the Haar measure~\cite{mcclean2018barren}. In other words, given some operator $A$ on a $t$-fold tensor product space and an ensemble of parametrised unitaries $\{ p(\boldsymbol{\theta}), U(\boldsymbol{\theta}) \}$ then
\begin{equation}
\int d \boldsymbol{\theta}  p(\boldsymbol{\theta}) U^{\otimes t}(\boldsymbol{\theta}) A U^{\dagger \otimes t}(\boldsymbol{\theta}) = \int_{\mathrm{U}(N)} du U^{\otimes t}(u) A U^{\dagger \otimes t}(u).
\end{equation}
where $du$ is the Haar measure. This assumption is perfectly valid to make in linear optics, because we can use a universal linear interferometer which can encode any linear unitary, so we can sample with respect to the Haar measure. Let us assume that our unitaries form a representation of $\mathrm{U}(N)$ as $U = U(u)$, where $u \in \mathrm{U}(N)$. Then the first moment of the cost function is given by
\begin{equation}
\begin{aligned}
\mathbb{E}_{\boldsymbol{\theta}}[E(\boldsymbol{\theta})] & = \int d \boldsymbol{\theta} p(\boldsymbol{\theta}) \mathrm{Tr}( U(\boldsymbol{\theta}) \rho U^\dagger(\boldsymbol{\theta}) H ) \\
 & = \int_{\mathrm{U}(N)} du \mathrm{Tr}( U(u) \rho U^\dagger(u) H )  \\
& = \mathrm{Tr} \left[ \left( \int_{\mathrm{U}(N)} du U(u) \rho U^\dagger(u) \right) H \right] \\
& = \mathrm{Tr} \left[ \frac{1}{d} \mathrm{Tr}(\rho) H \right] \\
& = \frac{1}{d} \mathrm{Tr}(H)
\end{aligned}
\end{equation} 
where going from the third to the fourth line we used Lemma~\ref{thm:haar_twirl} and the fact that the projectors are the identity as $U(u)$ forms an irrep and $d$ is the dimension of the Hilbert space, and going to the final line we used the fact that $\mathrm{Tr}(\rho) = 1$.

The second moment gives us 
\begin{equation}
\begin{aligned}
\mathbb{E}_{\boldsymbol{\theta}}[E^2(\boldsymbol{\theta})] & = \int d \boldsymbol{\theta} p(\boldsymbol{\theta}) \mathrm{Tr}\left[ U(\boldsymbol{\theta}) \rho U^\dagger(\boldsymbol{\theta}) H \right]^2 \\
& =  \int_{\mathrm{U}(N)} du \mathrm{Tr}\left[ U(u) \rho U^\dagger(u) H \right]^2 \\
& = \int_{\mathrm{U}(N)} du \mathrm{Tr}\left[ U(u) \rho U^\dagger(u) H \otimes  U(u) \rho U^\dagger(u) H \right] \\
& = \int_{\mathrm{U}(N)} du \mathrm{Tr}\left[ U^{\otimes 2}(u) \rho^{\otimes 2} U^{\dagger  \otimes 2}(u) H^{\otimes 2} \right] \\
& = \mathrm{Tr} \left[ \left( \int_{\mathrm{U}(N)} du   U^{\otimes 2}(u) \rho^{\otimes 2} U^{\dagger  \otimes 2}(u)  \right) H^{\otimes 2} \right] \\
& = \mathrm{Tr} \left[ \sum_\alpha \frac{1}{d_\alpha} \mathrm{Tr}\left(P^\alpha \rho^{\otimes 2} \right) P^\alpha H^{\otimes 2} \right] \\
& = \sum_{\alpha} \frac{1}{d_\alpha} \mathrm{Tr}\left( P^\alpha \rho^{\otimes 2} \right) \mathrm{Tr} \left( P^\alpha H^{\otimes 2} \right)
\end{aligned}
\end{equation}
where going from the second to the third line we used the fact that $\mathrm{Tr}(A)^2 = \mathrm{Tr}(A \otimes A)$, going from the fifth to the sixth line we used Lemma~\ref{thm:haar_twirl} and noted that $U^{\otimes 2}(u) \equiv U(u) \otimes U(u)$ forms a reducible tensor product representation and we made the assumption that each irrep has unit multiplicity in anticipation of applying this to linear optics.  The second moment result was derived independently in the work of Ref.~\cite{kolarovszki2026generalframeworkanticoncentrationlinear}.

Pulling everything together, the variance is given by
\begin{equation}
\mathrm{Var}_{\boldsymbol{\theta}}[E(\boldsymbol{\theta})] = \sum_{\alpha} \frac{1}{d_\alpha} \mathrm{Tr}\left( P^\alpha \rho^{\otimes 2} \right) \mathrm{Tr} \left( P^\alpha H^{\otimes 2} \right) - \frac{ \mathrm{Tr}^2(H)}{d^2}. \label{eq:variance}
\end{equation}
which concludes the proof of Theorem~\ref{thm:barren_plateaus}. 

\subsection{Application to linear optics}
\subsubsection{Representations of linear optics}
In linear optics, if we have an $N$-mode interferometer and $n$ particles, then we have a representation is defined of $\mathrm{U}(N)$ on the Fock space $\mathcal{F}$. This representation acts on the ladder operators in the Heisenberg picture as 
\begin{equation}
U(u) a_i^\dagger U^\dagger(u) = \sum_{j = 1}^N u_{ji} a_j^\dagger
\end{equation}
To see this is a representation, we have explicitly 
\begin{equation}
\begin{aligned}
U(v)U(u) a_i^\dagger U^\dagger(u) U^\dagger(v) & = \sum_{j = 1}^N u_{ji} U(v) a_j^\dagger U^\dagger(v)  \\
& = \sum_{j = 1}^N \sum_{k = 1}^N u_{ji} v_{kj} a_k^\dagger \\
& = \sum_{k = 1}^N (v u)_{ki} a_k^\dagger \\
& = U(vu) a_i^\dagger U^\dagger(vu)
\end{aligned}
\end{equation}
therefore, $U(v) U(u) = U(vu)$ and $U(\mathbb{I}_N) = \mathbb{I}_d$, where $d$ is the dimension of the representation. Moreover, this representation commutes with the total number operator as
\begin{equation}
\begin{aligned}
U \left( \sum_{i = 1}^N a^\dagger_i a_i \right) U^\dagger & = \sum_{i =1}^N \sum_{k = 1}^N \sum_{l = 1}^N u_{ji} u^*_{ki} a^\dagger_j a_k  \\
& = \sum_{j = 1}^N \sum_{k = 1}^N (u u^\dagger)_{jk} a_j^\dagger a_k \\
& = \sum_{k = 1}^N a_k^\dagger a_k
\end{aligned}
\end{equation}
Therefore, the subspace of fixed particle number is an invariant subspace under the action of $U$. 

Let $V = \mathbb{C}^N$ be the Hilbert space of a single particle in an interferometer of $N$ modes, the Fock space $\mathcal{F}$, and the representation of $\mathrm{U}(N)$, decomposes as
\begin{equation}
\mathcal{F} = \bigoplus_n \mathcal{H}_n, \quad
\mathcal{H}_n = \begin{cases}
\mathrm{Sym}^n (V) & \text{bosons} \\
\wedge^n(V) & \text{fermions}
\end{cases}, \quad \mathrm{dim}(\mathcal{H}_n) = \begin{cases} { N + n - 1 \choose n } & \text{bosons} \\
{ N \choose n } & \text{fermions}
\end{cases},
\end{equation}
where $\mathcal{H}_n$ is the $n$-particle subspace and ($\wedge^n$) $\mathrm{Sym}^n$ is the (anti-)symmetrisation $n$-fold tensor product operation. Importantly, these subspaces forms \textit{irreducible} representations. This can be seen quite simply: the general form of the linear optical transformation is given by the exponential of a quadratic Hamiltonian that can be rewritten as
\begin{equation}
\begin{aligned}
U & = \exp\left( i \sum_{i,j} h_{ij} a_i^\dagger a_j \right) \\
&  = \exp\left( i \sum_i h_{ii} a_i^\dagger a_i + i \sum_{i > j} R_{ij} (a_i^\dagger a_j + a_j^\dagger a_i) + i I_{ij} (a_i^\dagger a_j - a_j^\dagger a_i ) \right) \\
& = \exp \left( i \sum_i h_{ii} \hat{n}_i + i \sum_{i > j} (R_{ij} X_{ij} + I_{ij} Y_{ij}) \right)
\end{aligned}
\end{equation}
where $h_{ij} = R_{ij} + iI_{ij}$ for $i \neq j$ and $h_{ii} \in \mathbb{R}$. The corresponding Lie algebra representation $\mathfrak{g}$ is given by the tangent vectors to the identity which is $\mathfrak{g} = \mathrm{span}_\mathbb{R} \{ \hat{n}_i, X_{ij}, Y_{ij} \}$. There are $N^2$ generators in total and these obey the $\mathrm{U}(N)$ algebra.

We now complexify the Lie algebra by taking complex linear combinations of the generators. We can divide the generators into the Cartan subalgebra of mutually commuting generators $\mathfrak{h} = \mathrm{span}_\mathbb{C}\{ \hat{n}_i \}_{i=1}^N$, and the remaining generators we take linear combinations of to form ladder operators as
\begin{equation}
E^\dagger_{ij} = \frac{1}{2}(X_{ij} + i Y_{ij}) = a_i^\dagger a_j, \quad (i < j),
\end{equation}
of which there are $N(N-1)/2$. The lowering ladder operators are given by the Hermitian conjugate, $E_{ij}$. These obey
\begin{equation}
[H_i, E^\dagger_{jk}] = \begin{cases}
0 & \text{if $i \neq j$ and $i \neq k$},\\
E^\dagger_{jk} & \text{if $i = j$}, \\
- E^\dagger_{jk} & \text{if $i = k$}
\end{cases}, \quad [H_i, E_{jk}] = \begin{cases}
0 & \text{if $i \neq j$ and $i \neq k$},\\
-E_{jk} & \text{if $i = j$} \\
 E_{jk} & \text{if $i = k$},
\end{cases}
\end{equation}
The highest-weight state is defined as the state $|\lambda\rangle$ that is annihilated by all raising operators. For $n$ particles in $N$ modes this is given by
\begin{equation}
|\lambda \rangle = \begin{cases}
|n,0,\ldots,0,0, \ldots , 0 \rangle & \text{bosons} \\
|\underbrace{1,1,\ldots,1}_{n},0, \ldots, 0\rangle & \text{fermions}
\end{cases}
\end{equation}
as $E^\dagger_{ij} |\lambda\rangle = 0$ for all $i > j$. This representation is irreducible, as every state in $\mathcal{H}_n$ can be reached by applying lowering operators $E_{ij}$ to $|\lambda\rangle$. This is because the lowering operators allow us to shuffle the particles around the modes as they take the form of hopping operators. In conclusion, there is no basis that block diagonalises the generators, hence it is an irrep. 

Without loss of generality we can study the barren plateau problem by restricting ourselves to the $n$-particle subspace $\mathcal{H}_n$ as in a linear optics experiment the number of particles will be fixed anyway. Referring back to Theorem~, we see that calculation of the variance requires us to perform a \textit{fourth} moment integral in the matrix elements of the representation, which is equivalent to a \textit{second} moment calculation in the tensor product representation instead and is given in terms of a sum over irreps of this tensor product. Therefore, we must understand how $\mathcal{H}_n \otimes \mathcal{H}_n$ decomposes into irreps under $\mathrm{U}(N)$.
\subsubsection{Young Tableaux}
This section follows Ref.~\cite{georgi}. The $n$-particle subspace $\mathcal{H}_n$ is irreducible and the irreps of its tensor products can be found quite easily using Young's tableaux. We introduce the notation $(n_1,n_2,\ldots,n_N)$ to denote a Young tableau with $n_1$ boxes in its first row, $n_2$ in its second row, and so on, where $n_1 \geq n_2 \geq \ldots \geq n_N$ and $n_1 + n_2+ \ldots n_N = n$ always, in which case each tableau corresponds to a partition of the integer $n$. If we have $k$ consecutive rows containing $n$ elements we can use the notation $n^k$. For example 
\begin{equation}
\ytableausetup{centertableaux}
(2) = \ydiagram{2} \quad 
(3,2) = \ydiagram{3,2} \quad \ytableausetup{centertableaux}
(4,2^2,1) = \ydiagram{4,2,2,1} \quad
(2^2,1^2) = \ydiagram{2,2,1,1}
\end{equation}
Given a Young tableau, its dimension is given by $d = F/H$, where
\begin{equation}
F = \prod_{i,j} F_{ij}, \quad H = \prod_{i,j} H_{ij},
\end{equation}
where $F_{ij}$ is a value assigned to each box $(i,j)$ in the diagram by first inserting an $N$ in the top left box and then increase the value by one as we move one step to the right or decreasing by one as we move one step down, and $H_{ij}$ is the hook factor of the box $(i,j)$ defined by first drawing a hook-shaped line that starts directly below it at the bottom of the column, turning right turn on the box at $(i,j)$, and moving out to the right and then counting number of boxes this line passes through. For example for $(m,n)$ we have 
\begin{align}
\ytableausetup
{mathmode, boxframe=normal, boxsize=3em}
\begin{ytableau}
\scriptstyle N & \scriptstyle N+1 & \scriptstyle \cdots & \scriptstyle \scriptstyle N + n - 1 &\scriptstyle N+n &  \scriptstyle \cdots & \scriptstyle N+m-1 \\
\scriptstyle N-1 & \scriptstyle N & \scriptstyle \cdots & \scriptstyle N+n-2  
\end{ytableau} & \quad F = \frac{(N+m-1)!}{(N-1)!} \cdot \frac{(N+n-2)!}{(N-2)!} \\
\begin{ytableau}
\scriptstyle m+1 & \scriptstyle m &  \scriptstyle \cdots & \scriptstyle m-n+2 & \scriptstyle m-\scriptstyle n & \scriptstyle \cdots & \scriptstyle 1 \\
\scriptstyle n & \scriptstyle n-1 &  \scriptstyle \cdots & \scriptstyle 1  
\end{ytableau} & \quad H = \frac{n!(m+1)!}{ (m-n + 1)}
\end{align}
so the dimension is given by
\begin{equation}
d(m,n)   = { N + m - 1 \choose m} {N + n - 2 \choose n} \frac{m-n+1}{m+1}. \label{eq:bosonic_tableau_dimension}
\end{equation}
If we repeat this calculation for Tableau with two columns instead, the dimension is given by
\begin{equation}
d(2^m,1^n) =  {N \choose m + n} {N+1 \choose m} \frac{(m+n)!m! }{(m+1)! n!} . \label{eq:fermionic_tableaudimension}
\end{equation}

\subsubsection{Derivation of examples \label{app:example_derivation}}
The Young tableaux corresponding to the bosonic and fermionic representation spaces $\mathcal{H}_n$ have the tableaux $\mathrm{Sym}^n(V) = (n)$ and $\wedge^n(V) = (1^n)$ respectively. Diagrammatically, these are a single row or column of $n$ boxes respectively. As required for evaluating the variance of the cost function in Eq.~\eqref{eq:variance}, we need to understand how the tensor product representations  $\mathcal{H}_n \otimes \mathcal{H}_n$ decompose into irreps. This can be calculated straightforwardly using the rules for tensoring Young tableaux, see Ref.~\cite{georgi}, which give
\begin{equation}
\mathrm{Sym}^n(V) \otimes \mathrm{Sym}^n(V) = \bigoplus_{\alpha = 0}^n (2n - \alpha, \alpha), \label{eq:bosonic_tensor_product}
\end{equation}
so the irreps are given by $\mathcal{H}_\alpha = (2n - \alpha, \alpha)$. For fermions we have 
\begin{equation}
\wedge^n(V) \otimes \wedge^n(V) = \bigoplus_{\alpha = 0}^{n} (2^\alpha, 1^{2n - 2\alpha}),
\end{equation}
so the irreps are given by $\mathcal{H}_\alpha = (2^\alpha, 1^{2n - 2\alpha})$, where the upper limit on the direct sum assumes that $n \leq N$ because a tableau with more than $N$ boxes is not allowed in a column as they are anti-symmetrised over. Each term in the direct sum is an irrep and has multiplicity one. 

We evaluate the variance of Eq.~\eqref{eq:variance} for three examples used in Fig.~\ref{fig:barren_plateau_scaling}. For all examples we take our measurement observable $H$ to be a projector onto the input state.
\begin{enumerate}
\item 
Let us consider fermionic statistics and prepare the system in the $n$-fermion pure state where we place a single particle in the first $n$ modes as 
\begin{equation}
|\psi_\text{in}\rangle = a_1^\dagger a_2^\dagger \ldots a_n^\dagger |0\rangle \in \wedge^n(V)
\end{equation}
This state has the weight $\lambda = (1,1,\ldots,1,0,\ldots,0)$ containing $n$ ones which is also the highest-weight state of $\wedge^n(V)$. The tensor product state $|\psi_\text{in}\rangle \otimes |\psi_\text{in}\rangle $ is the highest weight state of the tensor product representation $\wedge^n(V) \otimes \wedge^n(V)$ with weight $2\lambda$. Therefore, this state has support on only one irrep, namely the irrep corresponding to the Young Tableau $(2^n)$ which contains two columns of length $n$. A rigorous proof of this statement can be found in Supplementary Lemma 3 of Ref.~\cite{fontanacharacterizing}. This means that the sum over the irreps in Eq.~\eqref{eq:variance} reduces to a single term and the projector acts on the state as the identity, leaving us with the trace of a projector which is unity. This gives us 
\begin{equation}
\begin{aligned}
\mathrm{Var}_{\boldsymbol{\theta}}[E(\boldsymbol{\theta})] & = \frac{1}{d_{(2^n)}} - \frac{1}{d_n^2} \\
& = \frac{(N-n + 1)(n+1)}{N+1} \frac{1}{ { N \choose n }^2} - \frac{1}{d_n^2} \\
& = \frac{1}{d_n^2} \frac{n(N - n)}{N+1}
\end{aligned}
\end{equation}
where we used the fact that ${ N \choose n } = d_n$ which is the dimension of the Hilbert space of $n$ fermions and $N$ modes and used the dimension of the irreps from Eq.~\eqref{eq:fermionic_tableaudimension}.
\item 
Now let us consider bosonic statistics and prepare the system in the $n$-boson state where we place all $n$ bosons into the first mode as
\begin{equation}
|\psi_\text{in}\rangle = \frac{1}{\sqrt{n}!} (a_1^\dagger)^n |0\rangle \in \mathrm{Sym}^n(V).
\end{equation}
Just like the in example 1, this state is the highest weight state of $\mathrm{Sym}^n(V)$ so following a simular argument to above and generalising Supplementary Lemma 3 of Ref.~\cite{fontanacharacterizing}, we find that the only irrep contributing anything non-trivial is the one with the Young Tableau $(2n)$ which contains $2n$ boxes in a single row. An intuitive reason why this is the only irrep that the state has support on is because the state $|\psi_\text{in}\rangle \otimes |\psi_\text{in}\rangle$ is completely symmetric on permutation of all $2n$ bosons, so it must be an element of $\mathrm{Sym}^{2n}(V)$ which has the tableau $(2n)$ and any anti-symmetrisation will annihilate the state. In this case, the variance is
\begin{equation}
\begin{aligned}
\mathrm{Var}_{\boldsymbol{\theta}}[E(\boldsymbol{\theta})] & = \frac{1}{d_{(2n)}} - \frac{1}{d_n^2} \\
& = \frac{1}{{ N + 2n - 1 \choose 2n }} - \frac{1}{d_n^2} \\
& = \frac{1}{d_{2n}} - \frac{1}{d_n^2}
\end{aligned}
\end{equation}
where we used the dimension formula from Eq.~\eqref{eq:bosonic_tableau_dimension} fact that $d_{2n} = { N + 2n - 1 \choose 2n}$.
\item The third example is where we take the same state as in example 1, but with bosonic statistics as 
\begin{equation}
|\psi_\text{in}\rangle = a_1^\dagger a_2^\dagger \ldots a_n^\dagger |0\rangle \in \mathrm{Sym}^n(V).
\end{equation}
This is the most relevant bosonic input state for linear optics in a real experiment, as this state is produced by a single-photon source together with a demultiplexer. The difference now is that because of the bosonic statistics and the way the tensor product decomposes into irreps, the state $|\psi_\text{in}\rangle \otimes |\psi_\text{in}\rangle$ has overlap with multiple irreps, so here we simply provide an upper bound to the variance. The state $|\psi_\text{in}\rangle \otimes |\psi_\text{in}\rangle$ is an element of the tensor product representation that decomposes as
\begin{equation}
\mathrm{Sym}^n(V) \otimes \mathrm{Sym}^n(V) = \mathrm{Sym}^2 ( \mathrm{Sym}^n(V) ) \oplus \wedge^2 (\mathrm{Sym}^n(V)).
\end{equation}
As $|\psi_\text{in}\rangle \otimes |\psi_\text{in}\rangle$ is symmetric under swaps across the tensor product, it must be an element of $\mathrm{Sym}^2(\mathrm{Sym}^n(V))$ which decomposes into irreps as
\begin{equation}
\mathrm{Sym}^2(\mathrm{Sym}^n(V)) = \bigoplus_{\text{even}  \ \alpha} (2n - \alpha, \alpha).
\end{equation}
In other words, it is the irreps of Eq.~\eqref{eq:bosonic_tensor_product} with even $\alpha$. Therefore, the variance is given by
\begin{equation}
\mathrm{Var}_{\boldsymbol{\theta}}[E(\boldsymbol{\theta})]  = \sum_{\text{even} \ \alpha} \frac{1}{d_\alpha} \mathrm{Tr}\left( P^\alpha \rho^{\otimes 2} \right) \mathrm{Tr} \left( P^\alpha H^{\otimes 2} \right) - \frac{ \mathrm{Tr}^2(H)}{d_n^2} ,
\end{equation}
where $\rho = H = |\psi_\text{in}\rangle \langle \psi_\text{in}|$. As shown in Ref.~\cite{kolarovszki2026generalframeworkanticoncentrationlinear}  this projection can be evaluated to give us
\begin{equation}
\mathrm{Tr}(P^\alpha \rho^{\otimes 2}) = 2^{n - \alpha} \frac{2n - 2\alpha + 1}{2n - \alpha + 1} \frac{{ n \choose \alpha /2} }{ { 2n - \alpha \choose n -\alpha /2}}. 
\end{equation}
Therefore, inserting this into the expression for the variance gives us
\begin{equation}
\mathrm{Var}_{\boldsymbol{\theta}} [ E(\boldsymbol{\theta})] = \sum_{\text{even $\alpha$}} \frac{4^{n - \alpha}}{d_\alpha} \left( \frac{2n - 2\alpha + 1}{2n - \alpha + 1} \frac{{ n \choose \alpha /2} }{ { 2n - \alpha \choose n -\alpha /2}} \right)^2 - \frac{1}{d_n^2} .
\end{equation}

\end{enumerate}

\end{document}